\newcommand*{\da@rightarrow}{\mathchar"0\hexnumber@\symAMSa 4B}
\newcommand*{\da@leftarrow}{\mathchar"0\hexnumber@\symAMSa 4C}
\newcommand*{\xdashrightarrow}[2][]{%
  \mathrel{%
    \mathpalette{\da@xarrow{#1}{#2}{}\da@rightarrow{\,}{}}{}%
 }%
}
\newcommand{\xdashleftarrow}[2][]{%
  \mathrel{%
    \mathpalette{\da@xarrow{#1}{#2}\da@leftarrow{}{}{\,}}{}%
 }%
}
\newcommand*{\da@xarrow}[7]{%
  \sbox0{$\ifx#7\scriptstyle\scriptscriptstyle\else\scriptstyle\fi#5#1#6\m@th$}%
  \sbox2{$\ifx#7\scriptstyle\scriptscriptstyle\else\scriptstyle\fi#5#2#6\m@th$}%
  \sbox4{$#7\dabar@\m@th$}%
  \dimen@=\wd0 %
  \ifdim\wd2 >\dimen@
    \dimen@=\wd2 %
  \fi
  \count@=2 %
  \def\da@bars{\dabar@\dabar@}%
  \@whiledim\count@\wd4<\dimen@\do{%
    \advance\count@\@ne
    \expandafter\def\expandafter\da@bars\expandafter{%
      \da@bars
      \dabar@ 
   }%
 }%
  \mathrel{#3}%
  \mathrel{%
    \mathop{\da@bars}\limits
    \ifx\\#1\\%
    \else
      _{\copy0}%
    \fi
    \ifx\\#2\\%
    \else
      ^{\copy2}%
    \fi
 }%
  \mathrel{#4}%
}
\theoremstyle{plain}
\newtheorem{theorem}{Theorem}[section]
\newtheorem{proposition}[theorem]{Proposition}
\newtheorem{lemma}[theorem]{Lemma}
\newtheorem{corollary}[theorem]{Corollary}
\theoremstyle{definition}
\newtheorem{definition}[theorem]{Definition}
\newtheorem{assumption}[theorem]{Assumption}
\newtheorem{remark}[theorem]{Remark}
\def\1{\bm{1}}
\def\vzero{{\bm{0}}}
\def\vone{{\bm{1}}}
\def\valpha{{\bm{\alpha}}}
\def\vbeta{{\bm{\beta}}}
\def\vepsilon{{\bm{\epsilon}}}
\def\va{{\bm{a}}}
\def\vb{{\bm{b}}}
\def\vc{{\bm{c}}}
\def\vd{{\bm{d}}}
\def\ve{{\bm{e}}}
\def\vf{{\bm{f}}}
\def\vp{{\bm{p}}}
\def\vq{{\bm{q}}}
\def\vs{{\bm{s}}}
\def\vu{{\bm{u}}}
\def\vv{{\bm{v}}}
\def\vx{{\bm{x}}}
\def\vy{{\bm{y}}}
\def\eva{{a}}
\def\evd{{d}}
\def\evf{{f}}
\def\evh{{h}}
\def\evp{{p}}
\def\evs{{s}}
\def\evu{{u}}
\def\evv{{v}}
\def\evw{{w}}
\def\evx{{x}}
\def\mA{{\bm{A}}}
\def\mLambda{{\bm{\Lambda}}}
\def\mSigma{{\bm{\Sigma}}}
\DeclareMathAlphabet{\mathsfit}{\encodingdefault}{\sfdefault}{m}{sl}
\SetMathAlphabet{\mathsfit}{bold}{\encodingdefault}{\sfdefault}{bx}{n}
\def\gG{{\mathcal{G}}}
\def\sA{{\mathbb{A}}}
\def\sK{{\mathbb{K}}}
\def\sN{{\mathbb{N}}}
\def\sP{{\mathbb{P}}}
\def\sR{{\mathbb{R}}}
\def\sW{{\mathbb{W}}}
\def\sX{{\mathbb{X}}}
\def\emLambda{{\Lambda}}
\def\emSigma{{\Sigma}}
\DeclareMathOperator*{\argmin}{arg\,min}
\DeclareMathOperator{\im}{im}
\newcommand*{\T}{\mathsf{T}}
\DeclareMathOperator\supp{supp}
\DeclareMathOperator{\spn}{span}
\newcommand{\interior}[1]{%
  {\kern0pt#1}^{\mathrm{o}}%
}
\newcommand\extrafootertext[1]{%
    \bgroup
    \renewcommand\thefootnote{\fnsymbol{footnote}}%
    \renewcommand\thempfootnote{\fnsymbol{mpfootnote}}%
    \footnotetext[0]{#1}%
    \egroup
}
\def\@listi{\leftmargin\leftmargini}
\def\@listii{\leftmargin\leftmarginii
   \labelwidth\leftmarginii\advance\labelwidth-\labelsep
   \topsep 2pt plus 1pt minus 0.5pt
   \parsep 1pt plus 0.5pt minus 0.5pt
   \itemsep \parsep}
\def\@listiii{\leftmargin\leftmarginiii
    \labelwidth\leftmarginiii\advance\labelwidth-\labelsep
    \topsep 1pt plus 0.5pt minus 0.5pt
    \parsep \z@ \partopsep 0.5pt plus 0pt minus 0.5pt
    \itemsep \topsep}
\def\@listiv{\leftmargin\leftmarginiv
     \labelwidth\leftmarginiv\advance\labelwidth-\labelsep}
\def\@listv{\leftmargin\leftmarginv
     \labelwidth\leftmarginv\advance\labelwidth-\labelsep}
\def\@listvi{\leftmargin\leftmarginvi
     \labelwidth\leftmarginvi\advance\labelwidth-\labelsep}
\titlespacing\section{0pt}{12pt plus 4pt minus 2pt}{0pt plus 2pt minus 2pt}
\titlespacing\subsection{0pt}{12pt plus 4pt minus 2pt}{0pt plus 2pt minus 2pt}
\titlespacing\subsubsection{0pt}{12pt plus 4pt minus 2pt}{0pt plus 2pt minus 2pt}
\title{\textbf{\Large{
A Day-to-Day Dynamical Approach to the Most Likely \\User Equilibrium  Problem\\[10pt]}}}
\author[1]{\normalsize Jiayang Li}
\author[1]{\normalsize Qianni Wang}
\author[2]{\normalsize Liyang Feng}
\author[2]{\normalsize Jun Xie}
\author[1]{\normalsize Yu (Marco) Nie\footnote{Corresponding author; \texttt{y-nie@northwestern.edu}.}}
\affil[1]{\small \textit{Department of Civil and Environmental Engineering, Northwestern University, USA.}}
\affil[2]{\small \textit{Department of School of Transportation and Logistics, Southwest Jiaotong University, China.}}
\date{}
\begin{document}

\maketitle

\begin{abstract}
	\noindent The lack of a unique user equilibrium (UE) route flow in traffic assignment has posed a significant challenge to many transportation applications. The maximum-entropy principle, which advocates for the consistent selection of the most likely solution as a representative, is often used to address the challenge. Built on a recently proposed day-to-day (DTD) {discrete-time} dynamical model called cumulative logit (CULO), this study provides a new behavioral underpinning for the maximum-entropy UE (MEUE) route flow.  It has been proven that CULO can reach a UE state without presuming travelers are perfectly rational. Here, we further establish that CULO always converges to the MEUE route flow if (i) travelers have zero prior information about routes and thus are forced to give all routes an equal choice probability, or (ii)  all travelers gather information from the same source such that the so-called general proportionality condition is satisfied.  Thus, CULO may be used as a practical solution algorithm for the MEUE problem. To put this idea into practice, we propose to eliminate the route enumeration requirement of the original CULO model through an iterative route discovery scheme.  We also examine {the discrete-time versions of four popular continuous-time dynamical models} and compare them to CULO.  The analysis shows that the replicator dynamic is the only one that has the potential to reach the MEUE solution with some regularity.  The analytical results are confirmed through numerical experiments.
\end{abstract}

{\it Keywords:}  maximum entropy; traffic assignment; cumulative logit; day-to-day dynamical model; general proportionality condition

\section{Introduction}

A fundamental problem in transportation systems analysis is predicting the distribution of traffic over routes connecting each origin-destination (OD) pair in a general congestible network, commonly known as traffic assignment \citep{beckmann1956studies}.  In transportation planning, traffic assignment is often framed as a non-cooperative routing game in which travelers' selfish route choices drive network-wide traffic distribution toward a user equilibrium (UE) state \citep{wardrop1952road,roughgarden2002bad}. Generally speaking, neither the set of routes used at UE nor the number of travelers selecting these routes (called route flow) is unique \citep{sheffi1985urban}. Indeed, there are potentially infinitely many route flows that correspond to a UE state of the routing game. 
Practitioners used to accept any UE route flow that emerges from a traffic assignment procedure, electing to ignore this nuance altogether. However, even when aggregate assignment results (e.g., the link flow) are not affected by the lack of uniqueness,  this practice may undermine any applications that depend on UE route flows \citep[e.g., select link analysis, see][]{bar2012user}. The problem is that using an arbitrary UE route flow can be difficult to justify, and, more importantly, such a flow may vary disproportionately with small perturbations in system inputs \citep{lu2010stability}. {The same problem also affects multi-class traffic assignment models, in which travelers are classified into groups based on their individual characteristics, such as the value of time.  In these models, the class-specific link flow, in addition to the route flow, is often non-unique at UE \citep{bar2012user}. This constitutes a serious concern for any efforts to understand the distributional effects of certain control and/or management policies, such as equity analysis \citep{wang2023entropy} or mixed-autonomy traffic analysis \citep{bahrami2020optimal}.}

It has been suggested that additional criteria may be imposed to rank the UE route flows, and a decision-maker should stick to the highest-ranked flow to maintain consistency and stability of the decision process.   \citet{lu2010stability} showed such a rank could be produced by maximizing a suitable function of UE route flows.  Yet, this does not solve the issue of justification since ``suitable" objective functions are countless, and there seems hardly any good reason to prefer one to another.   The only exception, to the best of our knowledge, is the entropy function  \citep{rossi1989entropy,akamatsu1997decomposition,bell1997transportation,bar1999route,larsson2001}.  Selecting the UE route flow that maximizes entropy is justified by the fact that such flow is the \emph{most likely} to be observed given the prior information, i.e., adherence to UE  by travelers. This principle, widely used in statistical mechanics and information theory, can also be interpreted as a claim of maximum ignorance beyond what is firmly known {by the modeler}. 

Despite its popularity, the maximum-entropy UE (MEUE) route flow lacks a solid micro-behavioral foundation.  It remains an open question what, if any, route choice behaviors can consistently lead the routing game to such a flow.  \citet{bar1999route} noted an MEUE route flow always distributes traffic to two paired equal-cost alternative segments by the same proportion regardless of travelers' origin or destination.  This observation connects MEUE route flows to route choice behaviors and, in so doing, provides a scalable solution method for the MEUE route flow problem \citep{bar2006primal,bar2010traffic,xie2019new}. 
Using a large taxi trajectory data set, \citet{xie2017testing} showed that \emph{proportionality}, as it is often referred to in the literature, is approximately satisfied among taxi trips.  However, proportionality is an aggregate result of route choice that cannot be easily linked to individual behaviors. It is one thing to observe travelers obey the condition of proportionality \emph{collectively}, but quite another to explain why they behave this way \emph{individually}.  Moreover, proportionality between paired alternative segments is a necessary but insufficient condition for entropy maximization \citep{bar2006primal}.  Sufficiency requires high-order proportionality conditions \citep{borchers2015traffic}, but enforcing them weakens not only the behavioral interpretation of proportionality but also the scalability of the solution methods derived from it.

MEUE may also be viewed as a limit of the \emph{stochastic user equilibrium} (SUE).  SUE is a ``perturbed" UE where travelers, subject to perception errors, elect to choose the route ``believed" to be the best \citep{daganzo1977stochastic} through a random utility model \citep{ben1985discrete}. A well-known result in transportation is that SUE  approaches UE when perception errors are reduced to zero \citep{fisk1980some}. In game theory, this is known as the purification theorem \citep{harsanyi1973games}. Moreover, if travelers' choices are given by the logit model \citep{mcfadden1973conditional}, the limiting --- or ``purified"  --- SUE would coincide with MEUE \citep{larsson2001, mamun2011select}.
However, interpreting MEUE as a limit of SUE implies it could be reached only if travelers always select the best route --- an assumption widely contested in the literature \citep[see, e.g.,][]{simon1955behavioral}. Moreover,  that SUE \emph{can} be steered toward MEUE by tweaking its parameters does not mean travelers are likely to behave accordingly. Indeed, it is unclear whether, why, and how the perception errors should gradually decrease to zero from a behavioral point of view.

In this paper, we attempt to 
provide a new behavioral foundation for the MEUE route flow using a day-to-day (DTD) dynamical approach.  In part, our effort is inspired by a recently developed DTD dynamical model called CULO \citep{li2023wardrop}, which is capable of reaching a UE state of the routing game under the presumption that travelers are \textit{not} perfectly rational even at the equilibrium. CULO describes how travelers gradually adjust their route valuations, hence choice probabilities, based on past experiences.  A crucial difference between CULO and the classical DTD models \citep[e.g.,][]{horowitz1984stability, cascetta1993modelling, watling1999stability, watling2003dynamics}
is route valuation: whereas classical models value routes based on the cost averaged over time,  CULO values them based on the \emph{cumulative} cost. As a result, CULO converges to UE globally under mild conditions, while other similar DTD models converge to SUE \citep{horowitz1984stability, cascetta1993modelling, watling1999stability}. In numerical experiments, \cite{li2023wardrop} discovered that CULO can converge to the MEUE route flow when starting from a certain initial point, notably an equal-distribution route flow (obtained by assigning the same choice probability to all routes between the same OD pair). This finding is intriguing because it indicates MEUE may be obtained from a simple and behaviorally sound DTD process, a possibility that, to the best of our knowledge, has never been discussed in the literature before. Once confirmed, it would not only help explain how the MEUE route flow may emerge from the evolution of imperfect route choices but also give a general algorithm for finding such a flow.  Motivated by this observation, we set out in this study to \emph{identify the conditions under which the convergence of CULO to MEUE is guaranteed}.

Originally, CULO assumes travelers actively consider all routes or at least a  set that covers all UE routes at the beginning.  In reality, such a route set may be either unknown to the travelers prior or simply too large to be included in the decision process. \citet{xie2019new} discovered a case in which the number of UE routes for a single OD pair can be as many as more than half a billion. There are also considerable cross-OD variations. For example, \citet{bar2005user} noted up to 2000 routes could be used at UE for some OD pairs in the Chicago regional network, though travelers from most OD pairs settle for one to two UE routes.  Hence, we further propose to iteratively generate the route set in CULO,  assuming travelers continuously explore the vast route space and attempt to strike a balance between exploration, i.e., discovering new routes, and exploitation, i.e., making the best use of the routes found so far. This concept of exploration vs. exploitation is central to bandit problems and reinforcement learning problems \citep{bush1955stochastic}. It also bears similarities with the use of column generation --- which generates routes on the fly --- in traffic assignment \citep{jayakrishnan1994faster}. Can the convergence of CULO toward MEUE still be secured with route discovery? That is the second question to be explored in our study. 

CULO is unique in the literature not because it converges to UE globally but because it does so by allowing explicit learning and deviation from perfect rationality. Many other dynamical models --- the vast majority of which are continuous-time models --- are known to converge to UE. For instance, the Smith dynamic \citep{smith1984stability} moves flow between every pair of routes at a rate proportional to the product of the flow on the higher-cost route and the cost difference.  The projection dynamic is a continuous-time version of the projection method for solving variational inequality problems \citep{dupuis1993dynamical, friesz1994day, zhang1996local}. Some evolutionary dynamics from game theory \citep{weibull1997evolutionary, sandholm2010population} have also been adapted to study routing games \citep[see, e.g.,][]{yang2009day, li2022differentiable}. What is the relationship between MEUE and the equilibrium solutions achieved by these models? That is our third question.

\subsection{Our contributions}

Our first and foremost result is that the limiting point of CULO  minimizes the ``distance" from the initial solution (corresponding to travelers' initial route valuation) to the set of UE route flows (referred to as the UE set hereafter), as measured by the Kullback–Leibler (KL) divergence. In other words, running CULO until convergence is equivalent to ``KL projecting" the initial solution onto the UE set. This result is then used to establish several useful properties for CULO. First, if CULO does converge, it always admits the same UE route flow from each initial solution.  This property ensures the behavioral parameters in CULO, which may affect the trajectory of convergence, do not affect the equilibrium state.  Second, the limiting point of CULO changes continuously with the initial solution,  which prevents the dynamical model from suffering large prediction errors caused by inaccurate information about the initial state.  Third,  all routes that \emph{may} be used by a UE route flow --- called the UE routes hereafter --- \emph{will} be used at the limiting point of CULO, provided they are included in the choice set from the beginning. 
Combining the first two properties above gives us the EUC (existence, uniqueness, and continuity of solutions) condition described in \citet{sandholm2005excess}, which is part of the ``desiderata" for an ideal dynamical model. The third one is a necessary condition for achieving MEUE, sometimes known as ``no-route-left-behind" policy \citep{bar1999route}. 

We also identify and verify the conditions that can steer CULO to MEUE based on the above results. We confirm that starting from the equal-distribution route flow is indeed one of them. Intuitively, this does make sense: if no one has prior information about the routes, then equal distribution is the logical and entropy-maximizing outcome. CULO simply preserves this property throughout the KL projection process.  Yet, we also show equal distribution is but one of infinitely many MEUE-inducing initial solutions. One general requirement is that the initial valuation on any route equal the sum of the valuations on the links used by the route, and the link valuations are identical for all routes.

Our third result concerns how to enhance CULO with a route discovery module.   Integrating route discovery with CULO requires strategies to (i) initialize valuation on newly found routes and (ii) encourage travelers to explore routes beyond the best ones.  For (i), we propose to keep a vector of cumulative link valuations from which the cumulative valuation on any route can be obtained without knowing the details about the evolution history. To enhance exploration, white noise is added to link valuations whenever travelers attempt to search for new routes, which allows them to explore a greater portion of the route space and, consequently, to come across and retain more non-UE paths in the choice set.  Such redundancy is necessary to ensure no path is left behind.  As a by-product, CULO is turned from an instrument for analysis into a practical solution algorithm for the MEUE route flow problem. Unlike most algorithms proposed for this problem \citep[e.g.,][]{bell1997transportation,larsson2001,bar2006primal, xie2019new}, the CULO algorithm does not view it as a constrained optimization problem. Instead, it simply mimics the evolutionary process by which the routing game converges. CULO may not be as efficient --- in terms of both memory consumption and computation time --- as the state-of-the-art algorithms such as the bush-based algorithm of \citet{xie2019new}, but it compensates for this shortcoming with simplicity and robustness. Indeed, implementing CULO requires little more than a standard shortest path algorithm plus the ability (and computer memory) to manage routes found in the dynamical process. It is also designed to find the exact MEUE solution rather than an approximation that may fail to satisfy higher-order proportionality conditions. 
Thus, for small to medium applications that need a high-quality MEUE route flow, CULO offers a rather appealing alternative.

{Last but not least, we examine a group of DTD dynamical models that are known to converge to UE, while focusing on their ability to reach MEUE under similar conditions. 
    Well known in their continuous-time form,  these models are discretized in this study to strengthen the behavioral representation, i.e., to reflect the fact that route choice is not continuously adjustable in time  \citep{watling2003dynamics}.
}
Although only numerical findings are available due to analytical difficulties, the insights are new and interesting.   We shall see that the popular replicator dynamic \citep{taylor1978evolutionary} demonstrates a surprisingly strong potential to find a near-MEUE solution.  Its performance tracks that of CULO closely, despite the fact they are completely different models in appearance. On the other hand, neither Smith's \citep{smith1984stability} nor the projection \citep{friesz1994day, zhang1996local} or the best-response dynamic \citep{gilboa1991social} is capable of getting close to MEUE.   All violate the ``no-route-left-behind" policy in our experiments.

\subsection{Organization}

The rest of the paper is organized as follows. Section \ref{sec:setting} sets up the problem and discusses related works. In Section \ref{sec:main}, we prove our main result, which establishes that running CULO until it converges is equivalent to performing a KL projection of the initial route choice onto the set of UE. Building on this foundation, we then conduct an analysis of CULO and identify specific conditions that lead to its convergence at MEUE.  Section \ref{sec:generation} addresses the issue of route space exploration and {Section \ref{sec:other} examines and compares the discretized version of several continuous-time dynamical models with CULO.}
Results of numerical experiments designed to validate the analyses are reported in  Section \ref{sec:experiments}.  Section \ref{sec:conclusion} concludes the paper.

\subsection{Notation}

We use $\sR$ and $\sR_+$ to denote, respectively, the set of real numbers and non-negative real numbers,  and use $\bar \sR = \sR \cup \{\infty, -\infty\}$ to denote the set of extended real numbers. 
For a vector $\va \in \sR^n$, we denote $\|a\|_p$ as its $\ell_p$ norm and denote { $\supp{(\va)} = \{i \in [n]: \eva_i > 0\}$ ($[n] = \{1, \ldots, n \}$)} as its support and $\diag(\va)$ as a square diagonal matrix with the elements of vector $\va$ on the main diagonal. 
For a matrix $\mA \in \sR^{n \times m}$, we denote $\|\mA\|_p$ as its matrix norm induced by the vector $\ell_p$ norm, denote $\ker(\mA) = \{\vx \in \sR^m: \mA \vx = 0\}$ as its kernel, and denote $\im(\mA) = \{\vy \in \sR^m: \vy = \mA \vx, \ \vx \in \sR^{n}\}$ as its image. For two vectors $\va, \vb \in \sR^n$, their inner product is denoted as $\langle \va, \vb \rangle$. For a finite set $\sA$, we write $|\sA|$ as the number of elements in $\sA$ and $2^{\sA}$ as the set of all subsets of $\sA$.  For a real number $a \in \sA$, we denote $[a]_+ = \max\{a, 0\}$. Given a set of vectors $\va_1, \ldots, \va_n \in \sR^m$, we denote their linear span as $\spn{(\va_1, \ldots, \va_n)} = \{\sum_{i = 1}^n \lambda 
_i \cdot \va_i: \lambda_i \in \sR, \ i = 1, \ldots, n\}$. Given any set $\sA \subseteq \sR^m$, we define its orthogonal complement as $\sA^{\perp} = \{\vx \in \sR^m: \langle \vx, \vy \rangle = 0 , \ \forall \vy \in \sA\}$.

\section{Problem setting and preliminaries}
\label{sec:setting}

We model a transportation network as a directed graph $\gG(\sN, \sA)$, where $\sN$ and $\sA$ are the set of nodes and links, respectively. Let $\sW \subseteq \sN \times \sN$ be the set of OD pairs and $\sK \subseteq 2^{\sA}$ be the set of available routes connecting all OD pairs. We use $\sK_w \subseteq \sK$ to denote the set of routes connecting $w\in\sW$ and $\sA_k \subseteq \sA$ the set of all links on route $k \in \sK$. Also, denote $\emSigma_{w,k}$ as the OD-route incidence with $\emSigma_{w,k} = 1$ if the route $k \in \sK_w$ and 0 otherwise; and $\emLambda_{e,k}$ as the link-route incidence, with $\emLambda_{e,k} = 1$ if $e \in \sA_k$ and 0 otherwise. We write $\mLambda = (\emLambda_{e,k})_{e \in \sA, k \in \sK}$ and $\mSigma = (\emSigma_{w,k})_{w \in \sW, k \in \sK}$. Let $\vd = (\evd_w)_{w \in \sW}$ be a vector with $\evd_w$ denoting the number of travelers between $w \in \sW$.  All travelers are identical, and their route choice strategy is represented by a vector $\vp = (\evp_k)_{k \in \sK}$, where $\evp_k$ is the \textit{proportion of travelers} selecting $k\in \sK_w$. The feasible set for $\vp$ can be written as
$\sP = \{\vp \in \sR_+^{|\sK|}: \mSigma \vp = \vone\}$. Let $\vf = (\evf_k)_{k \in \sK}$ and $\vx = (\evx_a)_{a \in \sA}$, with $\evf_k$ and $\evx_a$ being the flow (i.e., number of travelers) on route $k$ and link $a$, respectively. It follows $\vf = \diag(\vq) \vp$ (where $\vq = \mSigma^{\T} \vd$) and $\mLambda \vf = \vx$.
Further define $\vu = (\evu_a)_{a \in \sA}$ as a vector of link cost, determined by a  function $u(\vx) = (u_a(\vx))_{a \in \sA}$. Then, the vector of route cost  $\vc = \mLambda^{\T} \vu$. To summarize, the route cost function $c: \sP \to \sR^{|\sK|}$ can be defined as
$
    c(\vp) = \mLambda^{\T} \vu = \mLambda^{\T} u(\mLambda \vf) = \mLambda^{\T} u(\mLambda \diag(\vq) \vp).
$
For notational simplicity, we also introduce the symbol $\bar \mLambda = \mLambda \diag(\vq)$  so that $\vx$ can be written as $\bar\mLambda \vp$.

Throughout the paper, we impose two assumptions on the link cost function $u(\vx)$, {whose domain (the set of feasible link flows) is written as $\sX = \{\vx: \sR^{|\sA|}: \vx = \bar \mLambda \vp, \ \vp \in \sP\}$.}

\begin{assumption}
\label{ass:differentiable}
    The link cost function $u(\vx)$ is continuously differentiable and non-negative on $\sX$. 
\end{assumption}
{
\begin{assumption}
\label{ass:monotone}
    The link cost function $u(\vx)$ is strictly monotone on $\sX$, i.e., i.e., $\langle u(\vx) - u(\vx'), \vx - \vx' \rangle > 0$ for all $\vx, \vx' \in \sX$. 
\end{assumption}
}

Travelers are viewed as playing a routing game by choosing a mixed strategy $\vp$ to minimize their own travel costs. 
Those from the same OD pair adopt the same mixed strategy, and per the law of large numbers, $\vp$ gives the proportion of the travelers from each OD pair selecting each route connecting that OD pair.
We define a user equilibrium (UE) route choice strategy of the routing game \citep{wardrop1952road} as follows.
\begin{definition}[UE strategy]
A route choice strategy $\vp^* \in \sP$ is a user equilibrium strategy  if
$c_k(\vp^*) > \min_{k' \in \sK_w} c_{k'}(\vp^*)$ implies $\evp_k^* = 0$ for all $w \in \sW$ and $k \in \sK_w$.
\end{definition}

\begin{proposition}[\citet{dafermos1980traffic}]
\label{prop:ue-vi}
    A route choice strategy $\vp^*$ is a UE strategy if and only if it solves the following  variational inequality (VI) problem: find $\vp^* \in \sP$ such that
\begin{equation}
        \langle c(\vp^*), \vp - \vp^* \rangle \geq \vzero, \quad \forall \vp \in \sP.
        \label{eq:ue-vi}
    \end{equation}
\end{proposition}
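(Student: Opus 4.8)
The plan is to prove both implications directly, exploiting the fact that the feasible set factorizes across OD pairs: the constraint $\mSigma \vp = \vone$ says precisely that for each $w \in \sW$ the subvector $(\evp_k)_{k \in \sK_w}$ lies on the probability simplex $\{\evp_k \geq 0 : \sum_{k \in \sK_w} \evp_k = 1\}$. Consequently the bilinear form separates as $\langle c(\vp^*), \vp - \vp^* \rangle = \sum_{w \in \sW} \sum_{k \in \sK_w} c_k(\vp^*)\,(\evp_k - \evp_k^*)$, so I can reason one OD pair at a time. Throughout I would write $\pi_w = \min_{k' \in \sK_w} c_{k'}(\vp^*)$ for the least route cost at $\vp^*$ within OD pair $w$.

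For the forward direction (UE $\Rightarrow$ VI), I would first note that the UE condition forces every used route to be a cheapest route: since $c_k(\vp^*) \geq \pi_w$ always holds, the contrapositive of the definition gives $\evp_k^* > 0 \Rightarrow c_k(\vp^*) = \pi_w$. Summing over the simplex for each $w$ then yields $\sum_{k \in \sK_w} c_k(\vp^*)\,\evp_k^* = \pi_w$, hence $\langle c(\vp^*), \vp^* \rangle = \sum_{w} \pi_w$. For an arbitrary competitor $\vp \in \sP$, the bounds $c_k(\vp^*) \geq \pi_w$ and $\evp_k \geq 0$ together with $\sum_{k \in \sK_w} \evp_k = 1$ give $\langle c(\vp^*), \vp \rangle \geq \sum_{w} \pi_w$. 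Subtracting the two establishes the inequality in \eqref{eq:ue-vi}.

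For the converse (VI $\Rightarrow$ UE), I would argue by contradiction. If $\vp^*$ violated the UE condition, there would exist $w$ and a route $k \in \sK_w$ with $\evp_k^* > 0$ yet $c_k(\vp^*) > \pi_w$. Choosing a minimizing route $\ell \in \sK_w$ with $c_\ell(\vp^*) = \pi_w$, I would construct an admissible perturbation $\vp \in \sP$ that transfers all of the mass $\evp_k^*$ from route $k$ onto route $\ell$, leaving every other component and every other OD pair untouched; this preserves all the simplex constraints. Evaluating the separated form on this $\vp$ collapses to the single two-term expression $\evp_k^*\,\big(c_\ell(\vp^*) - c_k(\vp^*)\big) = \evp_k^*\,(\pi_w - c_k(\vp^*)) < 0$, contradicting \eqref{eq:ue-vi}. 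Hence no violating route exists and $\vp^*$ is a UE strategy.

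There is no deep obstacle here, as the result is classical; the work is essentially bookkeeping. The one point requiring care is keeping the argument at the level of individual OD pairs: the separability of both $\sP$ and the inner product is what makes the single-route rerouting in the converse a \emph{feasible} move, and what reduces the forward direction to the elementary fact that a nonnegative convex combination of numbers each $\geq \pi_w$ is itself $\geq \pi_w$, with equality exactly when all the weight sits on the minimizers. It is worth remarking that neither Assumption \ref{ass:differentiable} nor \ref{ass:monotone} is needed for this equivalence; they enter only later, e.g., to pin down uniqueness of the equilibrium link flow.
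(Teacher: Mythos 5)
Your proof is correct and complete. Note that the paper itself offers no proof of this proposition at all --- it is cited as a classical result of \citet{dafermos1980traffic} --- so there is nothing internal to compare against; your argument supplies the standard textbook derivation, and it does so cleanly. Both directions are sound: the forward direction correctly uses the contrapositive of the UE definition to show $\langle c(\vp^*), \vp^*\rangle = \sum_w \pi_w$ while every competitor satisfies $\langle c(\vp^*), \vp\rangle \geq \sum_w \pi_w$, and the converse correctly exhibits a feasible single-OD mass transfer whose value is strictly negative, contradicting the VI. Your two side remarks are also accurate and worth keeping: the argument hinges on $\{\sK_w\}_{w\in\sW}$ partitioning $\sK$ so that $\sP$ is a product of simplices and the bilinear form separates by OD pair (which makes the rerouting perturbation feasible), and the equivalence is purely algebraic --- neither Assumption \ref{ass:differentiable} nor Assumption \ref{ass:monotone} is invoked, consistent with the paper deferring those assumptions to the uniqueness results in Propositions \ref{prop:solution-set-2} and \ref{prop:solution-set}. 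One cosmetic point: the display \eqref{eq:ue-vi} writes the right-hand side as $\vzero$, but the inner product is a scalar, so the inequality should be read (as you implicitly do) against the scalar $0$.
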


Denoting the solution set to the above VI problem as $\sP^*$, the following two propositions, both established by \citet{dafermos1980traffic}, characterize the geometry of $\sP^*$. 
\begin{proposition}
\label{prop:solution-set-2}
If $c(\vp)$ is {strictly monotone on $\sP$}, then $\sP^*$ is a singleton.
\end{proposition}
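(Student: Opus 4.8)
The plan is to prove the singleton claim by its two constituent parts: nonemptiness of $\sP^*$ (existence of a VI solution) and uniqueness, with the strict monotonicity hypothesis invoked only for the second part. This is the classical Dafermos-type argument, so my emphasis is on verifying the hypotheses cleanly rather than inventing new machinery.

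First I would dispatch existence. The feasible set $\sP = \{\vp \in \sR_+^{|\sK|}: \mSigma \vp = \vone\}$ is convex, being the intersection of the nonnegative orthant with an affine subspace, and it is compact: each block of $\mSigma \vp = \vone$ enforces $\sum_{k \in \sK_w} \evp_k = 1$ with $\evp_k \geq 0$, so every coordinate lies in $[0,1]$ and $\sP$ is closed and bounded. By Assumption~\ref{ass:differentiable} the link cost $u$ is continuous, hence $c(\vp) = \mLambda^{\T} u(\bar\mLambda \vp)$ is continuous on $\sP$ as a composition of continuous maps. The Hartman--Stampacchia theorem (Brouwer's fixed point applied to the natural projected map) then guarantees $\sP^* \neq \emptyset$.

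Next, for uniqueness, suppose $\vp_1^*, \vp_2^* \in \sP^*$. Applying the VI inequality~\eqref{eq:ue-vi} at $\vp_1^*$ with test point $\vp = \vp_2^*$, and symmetrically at $\vp_2^*$ with test point $\vp = \vp_1^*$, gives
\begin{equation*}
\langle c(\vp_1^*), \vp_2^* - \vp_1^* \rangle \geq 0, \qquad \langle c(\vp_2^*), \vp_1^* - \vp_2^* \rangle \geq 0.
\end{equation*}
Adding these and rearranging yields $\langle c(\vp_1^*) - c(\vp_2^*), \vp_1^* - \vp_2^* \rangle \leq 0$. But strict monotonicity of $c$ on $\sP$ asserts the strict reverse inequality whenever $\vp_1^* \neq \vp_2^*$, a contradiction; hence $\vp_1^* = \vp_2^*$, so $\sP^*$ is a singleton.

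The argument is entirely standard, so the main obstacle is not conceptual but one of bookkeeping: I must be careful that uniqueness is driven by the \emph{stated} hypothesis — strict monotonicity of the \emph{route} cost $c$ on $\sP$ — and not by the strict monotonicity of the \emph{link} cost $u$ on $\sX$ posited in Assumption~\ref{ass:monotone}. These are genuinely different, because the map $\bar\mLambda$ sending route flows to link flows need not be injective, so distinct route flows can induce identical link flows and hence identical route costs. This is precisely why the singleton conclusion requires the stronger route-level assumption, and it is the root of the non-uniqueness of UE route flows that motivates the rest of the paper; I would flag this distinction explicitly so the reader does not conflate this proposition with the companion link-flow uniqueness result available under Assumption~\ref{ass:monotone} alone.
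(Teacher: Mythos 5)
Your proof is correct: the paper itself does not prove this proposition but simply cites \citet{dafermos1980traffic}, and your argument --- existence via compactness of $\sP$, continuity of $c$, and Hartman--Stampacchia, plus uniqueness via the standard two-solution VI argument against strict monotonicity --- is exactly the classical argument in that cited source. Your closing remark correctly distinguishing strict monotonicity of $c$ on $\sP$ from that of $u$ on $\sX$ (Assumption~\ref{ass:monotone}) also matches the paper's own discussion following Proposition~\ref{prop:solution-set}.
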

\begin{proposition}
\label{prop:solution-set}
If $u(\vx)$ is  {strictly monotone on $\sX$}, then $\sX^* = \{\vx^* = \bar \mLambda \vp^*: \vp^* \in \sP^*\}$ is a singleton. Moreover, $\sP^*$ can be represented as a polyhedron $\{\vp^* \in \sP: \bar \mLambda \vp^* = \vx^*\}$, where $\vx^*$ is the unique UE link flow.
\end{proposition}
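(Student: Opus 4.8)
The plan is to pass from the strategy variable $\vp$ to the link-flow variable $\vx = \bar\mLambda \vp$, where Assumption \ref{ass:monotone} can be applied directly, and to treat the two assertions---uniqueness of the equilibrium link flow, and the polyhedral description of $\sP^*$---in turn.

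For the singleton claim, I would first recast equilibrium as a \emph{link-flow} variational inequality. Writing $\sF = \{\vf \in \sR_+^{|\sK|}: \mSigma \vf = \vd\}$ for the set of feasible route flows and $\vf^* = \diag(\vq) \vp^*$, a short complementarity computation from the Wardrop condition in the definition of a UE strategy shows that any UE route flow satisfies $\langle \mLambda^{\T} u(\vx^*), \vf - \vf^* \rangle \geq 0$ for all $\vf \in \sF$; moving $\mLambda$ across the inner product and using that $\vx = \mLambda \vf$ sweeps out $\sX$ as $\vf$ sweeps $\sF$ turns this into $\langle u(\vx^*), \vx - \vx^* \rangle \geq 0$ for all $\vx \in \sX$. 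Given two UE link flows $\vx_1, \vx_2$, adding the corresponding inequalities yields $\langle u(\vx_1) - u(\vx_2), \vx_1 - \vx_2 \rangle \leq 0$, which contradicts the strict monotonicity in Assumption \ref{ass:monotone} unless $\vx_1 = \vx_2$. Hence $\sX^*$ is a singleton $\{\vx^*\}$.

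A subtlety worth flagging is that one should \emph{not} try to run this argument directly in $\vp$-space through Proposition \ref{prop:ue-vi}. Since $c(\vp) = \mLambda^{\T} u(\bar\mLambda \vp)$ with $\bar\mLambda = \mLambda \diag(\vq)$, summing the two strategy-space VIs gives only $\langle u(\vx_1) - u(\vx_2), \mLambda(\vp_1 - \vp_2) \rangle \leq 0$, and $\mLambda(\vp_1 - \vp_2)$ is not $\vx_1 - \vx_2 = \mLambda \diag(\vq)(\vp_1 - \vp_2)$ unless all OD demands coincide. Reducing to route/link flows is what removes this $\diag(\vq)$ mismatch, so it is the crux of the uniqueness argument rather than a cosmetic change of variables.

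For the polyhedral representation, the inclusion $\sP^* \subseteq \{\vp \in \sP: \bar\mLambda \vp = \vx^*\}$ is immediate from the uniqueness just proven. For the reverse inclusion I would use a total-cost argument. For any $\vp \in \sP$ with $\bar\mLambda \vp = \vx^*$ the route costs are frozen at $c(\vp) = \mLambda^{\T} u(\vx^*)$, so the aggregate experienced cost collapses to $\langle c(\vp), \diag(\vq) \vp \rangle = \langle u(\vx^*), \bar\mLambda \vp \rangle = \langle u(\vx^*), \vx^* \rangle$, a constant independent of the particular $\vp$. Writing $\pi_w = \min_{k \in \sK_w} c_k(\vp)$ (determined by $\vx^*$), feasibility gives the lower bound $\langle c(\vp), \diag(\vq) \vp \rangle \geq \sum_{w \in \sW} \pi_w \evd_w$, with equality if and only if every route carrying positive flow is a minimum-cost route. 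Since a genuine UE strategy also reproduces $\vx^*$ and attains $\sum_{w} \pi_w \evd_w$, the constant above equals this lower bound; equality is therefore forced, so $\vp$ puts flow only on minimum-cost routes and hence satisfies the Wardrop condition, i.e. $\vp \in \sP^*$. I expect this reverse inclusion to be the main obstacle, as it is the only place where one must show that merely reproducing the equilibrium link flow already forces equilibrium route choice; the cancellation $\langle c(\vp), \diag(\vq) \vp \rangle = \langle u(\vx^*), \vx^* \rangle$ is precisely what lets the complementarity bound bite.
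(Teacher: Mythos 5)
Your proof is correct. Note that the paper itself supplies no proof of this statement---it is quoted directly from \citet{dafermos1980traffic}---and your two-step argument is precisely the classical one: recast equilibrium as a link-flow variational inequality and sum two copies of it so that strict monotonicity of $u$ forces $\vx_1 = \vx_2$, then use the total-cost/complementarity identity $\langle c(\vp), \diag(\vq)\vp\rangle = \langle u(\vx^*), \vx^*\rangle$ to show that any $\vp \in \sP$ reproducing $\vx^*$ attains the per-OD lower bound $\sum_w \pi_w \evd_w$ and hence loads only minimum-cost routes. Your side remark about the $\diag(\vq)$ mismatch is also well taken: because the paper's strategy-space cost is $c(\vp) = \mLambda^{\T} u(\bar\mLambda\vp)$ rather than $\bar\mLambda^{\T} u(\bar\mLambda\vp)$, summing the strategy-space VIs of Proposition \ref{prop:ue-vi} does not directly produce the difference $\vx_1 - \vx_2$, so passing to route/link flows (equivalently, weighting the VI by $\diag(\vq)$, which leaves its solution set unchanged since it decomposes OD pair by OD pair) is a genuinely necessary step rather than a cosmetic one.
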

When the function $u(\vx)$ is {strictly} monotone, the {strict} monotonicity of $c(\vp)$ can be guaranteed if $\mLambda$ has a full column rank. This condition, however, is rarely satisfied in the networks of practical interest. Hence, the UE strategy $\vp^*$ (hence the UE route flow $\vf^*$) is usually not unique.

In what follows, Section \ref{sec:meue} introduces the MEUE problem, including the formulation, basic properties,   and the relationship with the logit-based stochastic user equilibrium (SUE) model. In Section \ref{sec:culo}, we present the CULO model developed in \cite{li2023wardrop} and contrast it with the classical DTD model \citep{horowitz1984stability}.

\subsection{The MEUE problem}
\label{sec:meue}

To consistently select a unique UE strategy from $\sP$, one may define another function of $p\in \sP$ that admits a unique extreme value \citep{lu2010stability}. The most widely used function is the \textit{negative} entropy function.
 \citet{rossi1989entropy} defined the negative entropy of any $\vp \in \sP$ as
\begin{equation}
    \phi(\vp) = \langle \diag(\vq) \vp, \log(\vp)\rangle,
\end{equation}
which measures the number of different ways travelers can be arranged to produce the route flow corresponding to $\vp$ {(see Appendix \ref{app:entropy} for a detailed explanation)}. The lower the value of $\phi(\vp)$, the more likely to occur the route flow associated with $\vp$. Thus, maximizing entropy, or minimizing $\phi(\vp)$,  is expected to produce the most likely outcome.  

\begin{definition}[Maximum-entropy user equilibrium, or MEUE]
A route choice strategy $\bar \vp^* \in \sP$ corresponds to the MEUE route flow or the most likely route flow if and only if it solves the following MEUE problem:
\vspace{-2pt}
\begin{equation}
\begin{split}
    \min~&\phi(\vp^*), \\
    \text{s.t.}~&\vp^* \in \sP^*.
    \label{eq:meue}
\end{split}
\end{equation}

Problem \eqref{eq:meue} admits a unique solution because its objective function is strictly convex. 
\end{definition}

\subsubsection{Proportionality}
 \citet{bar1999route} found MEUE always satisfies the so-called \emph{proportionality condition}, which dictates ``the same proportions occur for all travelers facing a choice between a pair of alternative segments, regardless of their origins and destinations."  For an illustrative example, consider a 3-node-4-link (3N4L) network shown in Figure \ref{fig:3n4l}, which has four routes connecting the origin (node 1) and the destination (node 3). Route 1 uses links 1 and 3, route 2 uses links 2 and 4, route 3 uses links 1 and 4, and route 4 uses links 2 and 3. 
\begin{figure}[ht]
    \centering
    \includegraphics[width=0.265\textwidth]{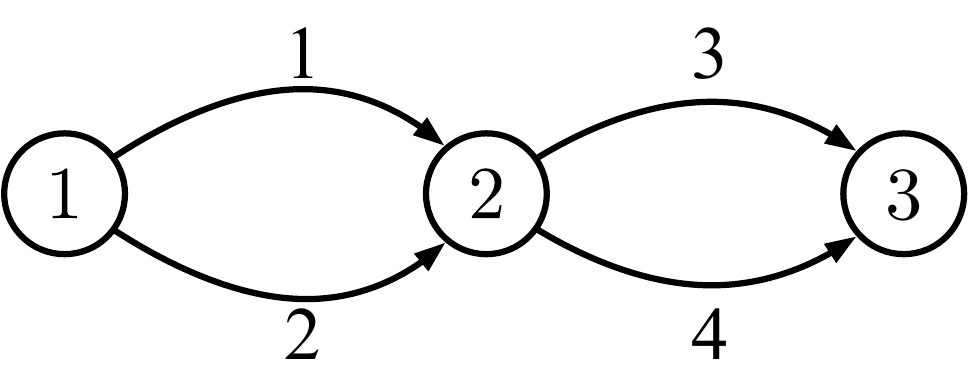}
    \captionof{figure}{A three-node-four-link (3N4L) network.}
    \label{fig:3n4l}
\end{figure}
In this network, a strategy $\vp = [\evp_1, \evp_2, \evp_3, \evp_4] \in \sP$ satisfies the proportionality condition if $\evp_1 / \evp_3 = \evp_4 / \evp_2$, which implies the travelers' choice between the paired alternative segments (link 3 vs. link 4) is irrelevant to their other choices (e.g., link 1 vs. link 2).  \cite{bar2006primal} pointed out the proportionality condition may be used to solve the MEUE problem. This observation has led to the development of highly efficient primal algorithms for the MEUE problem  \citep{bar2010traffic,xie2019new}.  
Despite their success, however, these algorithms are incapable of solving the MEUE problem exactly. This is because satisfying the proportionality condition identified above is not sufficient to find the MEUE route flow \citep{bar2006primal}.  In fact, proportionality between paired alternative segments is but one of many similar conditions the MEUE route flow must obey.  As those higher-order conditions involve complex topology that is much more tedious to identify, \citet{borchers2015traffic} proposed an alternative condition, which we shall call the general proportionality condition in this paper. 
\begin{definition}[General proportionality condition]
    We say a route choice strategy $\vp \in \sP$ satisfies the general proportionality condition if and only if
    \begin{equation}
        \langle \ve, \log(\vp) \rangle = 0, \quad \forall \ve \in \ker{(\mSigma)} \cap \ker{(\mLambda)}.
        \label{eq:general}
    \end{equation}
\end{definition}
To enforce the general proportionality condition, it suffices to identify the basis of $\ker{(\mSigma)} \cap \ker{(\mLambda)}$, which consists of a set of vector $\ve_m$, $m = 1, \ldots, M$ that spans the kernel (i.e.,  $\ker{(\mSigma)} \cap \ker{(\mLambda)} = \spn{(\ve_1, \ldots, \ve_M)}$), and make sure   $\langle \ve_m, \log(\vp) \rangle = 0$ holds for every $m = 1, \ldots, M$. In the literature, $\langle \ve_m, \log(\vp) \rangle = 0$ may be referred to as the $m$-th order proportionality condition. In the 3N4L network, for example, $\ker{(\mSigma)} \cap \ker{(\mLambda)} = \spn{([1, 1, -1, -1]^{\T})}$, i.e., the kernel space can be spanned by a single vector $[1, 1, -1, -1]^{\T}$. Since the kernel space is one dimensional, the general proportionality condition is reduced to the first-order proportionality condition identified by \cite{bar1999route}, i.e., 
\begin{equation}
    \log(\evp_1) + \log(\evp_2) - \log(\evp_3) - \log(\evp_4) = 0, \quad \text{or equivalently,} \quad \evp_1 / \evp_3 = \evp_4 / \evp_2.
\end{equation}

\begin{proposition}[\citet{borchers2015traffic}, Theorem 3.3]
\label{prop:proportionality}
    Under Assumption \ref{ass:monotone},  a UE strategy $\bar \vp^* \in \sP^*$ is the solution to the MEUE problem \eqref{eq:meue} if and only if it satisfies the general proportionality condition \eqref{eq:general}.
\end{proposition}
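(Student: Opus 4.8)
The plan is to treat the MEUE problem \eqref{eq:meue} as what it is --- minimization of the strictly convex function $\phi$ over the polyhedron $\sP^* = \{\vp \in \sP : \bar\mLambda\vp = \vx^*\}$ furnished by Proposition \ref{prop:solution-set} --- and to extract its first-order optimality conditions, which I would then match term-by-term with the general proportionality condition \eqref{eq:general}. Since $\sP^*$ is nonempty, compact, and convex while $\phi$ is strictly convex and continuous, the minimizer $\hat\vp$ exists and is unique; it therefore suffices to prove that a point of $\sP^*$ is this minimizer if and only if it satisfies \eqref{eq:general}. Before differentiating $\phi$ I would discard every route carrying no flow in any UE strategy, i.e. restrict attention to the set of UE routes $\sK^+$, so that $\log(\cdot)$ and the inner products in \eqref{eq:general} are well defined on the surviving coordinates.

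First I would establish a support lemma: at the optimum, $\hat\evp_k > 0$ for every $k \in \sK^+$. If some usable route had $\hat\evp_k = 0$, I would pick a UE strategy $\tilde\vp \in \sP^*$ with $\tilde\evp_k > 0$ and move along $\hat\vp + t(\tilde\vp - \hat\vp) \in \sP^*$; because $\frac{d}{ds}(s\log s) \to -\infty$ as $s \to 0^+$, the directional derivative of $\phi$ at $t = 0^+$ would be $-\infty$, contradicting optimality. With $\hat\vp$ strictly positive on $\sK^+$, the nonnegativity constraints are inactive there, so the problem is locally equality-constrained and convex; hence $\hat\vp$ is optimal if and only if its gradient is orthogonal to the tangent space of the active equality constraints, namely $\langle \nabla\phi(\hat\vp), \vv\rangle = 0$ for every $\vv \in \ker(\mSigma) \cap \ker(\bar\mLambda)$ (the directions preserving both $\mSigma\vp = \vone$ and $\bar\mLambda\vp = \vx^*$). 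Convexity makes this first-order condition both necessary and sufficient for a global minimum.

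It then remains to rewrite this condition. Using $\nabla\phi(\vp) = \diag(\vq)(\log(\vp) + \vone)$, for any $\vv \in \ker(\mSigma) \cap \ker(\bar\mLambda)$,
\[
\langle \nabla\phi(\hat\vp), \vv\rangle = \langle \diag(\vq)\log(\hat\vp), \vv\rangle + \langle \vq, \vv\rangle = \langle \log(\hat\vp), \diag(\vq)\vv\rangle,
\]
since $\langle \vq, \vv\rangle = \langle \mSigma^{\T}\vd, \vv\rangle = \langle \vd, \mSigma\vv\rangle = 0$ as $\vv \in \ker(\mSigma)$. The key algebraic step is the substitution $\vw = \diag(\vq)\vv$ together with the subspace identity $\diag(\vq)\big(\ker(\mSigma)\cap\ker(\bar\mLambda)\big) = \ker(\mSigma)\cap\ker(\mLambda)$. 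This holds because $\diag(\vq)$ is invertible, $\bar\mLambda = \mLambda\diag(\vq)$ yields $\vv \in \ker(\bar\mLambda) \iff \diag(\vq)\vv \in \ker(\mLambda)$, and $\vq = \mSigma^{\T}\vd$ is constant on each OD block, so $(\mSigma\diag(\vq)\vv)_w = \evd_w (\mSigma\vv)_w$ with $\evd_w > 0$, showing $\diag(\vq)$ preserves $\ker(\mSigma)$. Consequently $\langle \nabla\phi(\hat\vp), \vv\rangle = 0$ for all $\vv \in \ker(\mSigma)\cap\ker(\bar\mLambda)$ is exactly $\langle \log(\hat\vp), \vw\rangle = 0$ for all $\vw \in \ker(\mSigma)\cap\ker(\mLambda)$, which is \eqref{eq:general}. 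For the converse I would run the equivalence backward: any $\bar\vp^* \in \sP^*$ satisfying \eqref{eq:general} (hence of full support, so $\nabla\phi$ is finite) satisfies the first-order condition, and strict convexity of $\phi$ along the segment joining $\bar\vp^*$ to $\hat\vp$ then forces $\bar\vp^* = \hat\vp$.

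The main obstacle I anticipate is not the differentiation but the boundary bookkeeping: one must justify that both the stationarity condition and \eqref{eq:general} are well posed on the support of the relevant strategies (the positivity lemma handles the optimizer, but the ``if'' direction additionally needs a proportionality-satisfying strategy to have full support on $\sK^+$), and one must verify the subspace identity $\diag(\vq)(\ker(\mSigma)\cap\ker(\bar\mLambda)) = \ker(\mSigma)\cap\ker(\mLambda)$ with care, since it is precisely the reweighting by $\vq$ --- legitimate only because $\vq$ is constant within each OD pair --- that reconciles the $\bar\mLambda = \mLambda\diag(\vq)$ defining $\sP^*$ with the bare $\mLambda$ appearing in \eqref{eq:general}.
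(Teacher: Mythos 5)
First, a point of order: the paper contains no proof of Proposition \ref{prop:proportionality} to compare yours against --- it is imported from \citet{borchers2015traffic} (their Theorem 3.3) as a known result, so your blind attempt is necessarily an independent reconstruction. Judged on its own, your approach is sound and its load-bearing steps check out. Treating \eqref{eq:meue} as a strictly convex program over the polyhedron $\sP^*=\{\vp\in\sP:\bar\mLambda\vp=\vx^*\}$ supplied by Proposition \ref{prop:solution-set} is the right frame; your support lemma is proved by exactly the same ``directional derivative equals $-\infty$'' device the paper itself uses for Corollary \ref{cor:no-left}; and the gradient computation, the vanishing of $\langle\vq,\vv\rangle$ for $\vv\in\ker(\mSigma)$, and above all the subspace identity $\diag(\vq)\bigl(\ker(\mSigma)\cap\ker(\bar\mLambda)\bigr)=\ker(\mSigma)\cap\ker(\mLambda)$ (legitimate because $\vq=\mSigma^{\T}\vd$ is constant and positive on each OD block) are all correct. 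The last of these is indeed the crux that reconciles the weighted matrix $\bar\mLambda$ defining $\sP^*$ with the bare $\mLambda$ appearing in \eqref{eq:general}.

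The genuine gap is the one you flagged but did not close, and it sits in the ``if'' direction: you assume a strategy $\bar\vp^*\in\sP^*$ satisfying \eqref{eq:general} has full support on $\sK^*=\cup_{\vp^*\in\sP^*}\supp(\vp^*)$, which is needed before stationarity at $\bar\vp^*$ even makes sense. It can be closed with tools already in your proof: suppose $\bar\evp_k^*=0$ for some $k\in\sK^*$; pick $\tilde\vp\in\sP^*$ with $\tilde\evp_k>0$ and set $\vw=\diag(\vq)(\tilde\vp-\bar\vp^*)$. By your subspace identity $\vw\in\ker(\mSigma)\cap\ker(\mLambda)$; moreover at every route $j$ with $\bar\evp_j^*=0$ we have $\evw_j=\evq_j\tilde\evp_j\geq 0$, with $\evw_k>0$, so all infinite terms in $\langle\vw,\log(\bar\vp^*)\rangle$ carry the same sign, the inner product equals $-\infty\neq 0$, and \eqref{eq:general} is violated --- contradiction. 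A second piece of bookkeeping is also left implicit: your stationarity analysis lives on the coordinates $\sK^*$ and therefore yields orthogonality to the kernel of the \emph{restricted} incidence matrices, whereas \eqref{eq:general} quantifies over the full kernel, whose elements may charge routes outside $\sK^*$ where $\log(\bar\evp_j^*)=-\infty$. To pass between the two one must fix a convention (require \eqref{eq:general} only for those $\ve$ whose inner product is well defined in $\bar\sR$) and then note that no kernel vector can have single-signed nonzero entries outside $\sK^*$: otherwise adding a small positive multiple of $\diag(\vq)^{-1}\ve$ to a UE strategy of full support would produce an element of $\sP^*$ using a route outside $\sK^*$, contradicting the definition of $\sK^*$. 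With these two boundary lemmas supplied, your argument becomes a complete, self-contained proof; as written, it is a correct skeleton with precisely these boundary cases open --- a defect shared, it should be said, by the paper itself, which invokes the proposition (in proving Proposition \ref{prop:main-1}) at limit points that do have zero entries.
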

While this result is a significant step forward, operationalizing the general proportionality condition in an MEUE solution algorithm remains elusive. The challenge is to obtain the basis of the kernel for a sparse matrix in a  computationally viable manner, especially when the matrix contains hundreds of millions of columns. Moreover, it is worth emphasizing that Proposition \ref{prop:proportionality} requires {strict} monotonicity. In fact, it can fail even for a monotone (but not {strictly} monotone) $u(\vx)$. Section \ref{sec:two-conditions} provides such an example.

\subsubsection{MEUE and logit-based stochastic user equilibrium (SUE)}
\label{sec:purification}

SUE may be viewed as the equilibrium of a ``perturbed" routing game in which travelers no longer have access to perfect information.   To describe such information in more general terms, let $\vs\in \sR^{|\sK|}$ be the \emph{valuation} of routes, which depends on the route cost.   In the perturbed game, travelers receive a route valuation littered with a random error $\mathbf{\epsilon}$, which is typically attributed to their imperfect perception.  Subject to this error, the system reaches SUE when every traveler  ``believes" their route choice is the best \citep{daganzo1977stochastic}. Furthermore, when $\mathbf{\epsilon}$ is sampled from a Gumbel distribution, travelers' best response toward route valuation can be described by a logit model \citep{mcfadden1973conditional}. 
Given a scalar $r > 0$, the logit model is a map $q_r: \bar \sR^{|\sK|} \to \sP$ from travelers' route valuation $\vs$ to the corresponding route choice strategy $\vp$, defined as
\begin{equation}
    \evp_k =  \frac{\exp(-r \cdot \evs_k)}{\sum_{k' \in \sK_w} \exp(-r \cdot \evs_{k'})}, \quad \forall k \in \sK.
    \label{eq:logit-w}
\end{equation}
A strategy $\hat \vp \in \sP$ is then defined as a logit-based SUE strategy if it coincides with travelers' choice in response to $c(\vp)$ given by the logit model, i.e., $\hat \vp = q_r(c(\hat \vp))$ \citep{daganzo1977stochastic}.  It is well known \citep[see, e.g.,][]{larsson2001,mamun2011select} that logit-based SUE converges to MEUE when $r\rightarrow \infty$. {To interpret this result, we note that a logit-based SUE with a positive dispersion parameter $r$ has a higher entropy than all UE solutions as long as $r > 0$ (see \citet{mamun2011select} for a proof).  This result is intuitive: the entropy of a route choice pattern is positively related to the number of routes with positive flows. As UE uses a subset of routes while SUE uses all possible routes, it makes sense SUE should have a larger entropy. This relationship, together with the well-known result that SUE converges to UE when $r \to \infty$ \citep{fisk1980some}, indicates that the UE reached by SUE when $r \to \infty$ must be the UE with the highest entropy.}

In theory, this result means one can obtain a solution arbitrarily close to MEUE by solving a logit-based SUE problem with a proper $r$.  In practice, however, few have attempted to solve the MEUE problem this way.  The lack of interest may stem from two main challenges. First, solving the logit-based SUE problem precisely requires enumerating all routes, even those with loops, because, technically,  every route should be used at SUE, however small the probability may be. This is a daunting task on large networks.  Second, it is difficult to determine ex-ante the value of $r$ that guarantees the desired quality of the approximation achieved by this method.  In fact, even measuring the quality of this approximation seems not straightforward --- how do we know an SUE route flow is close enough to the MEUE route flow unless we know how to solve the MEUE problem or at least know how to obtain a tight lower bound?     

Finally, viewing the MEUE route flow as the limit of the SUE flow implies that, to achieve MEUE, travelers must have perfect information since  $r\rightarrow \infty \Rightarrow \epsilon\rightarrow 0$ according to the standard explanation.  Such behavioral perfectionism has been widely criticized in the literature \citep{simon1955behavioral,arrow1966exposition}. Moreover, the interpretation tells us little about how the MEUE route flow might emerge from the evolution of the routing game. 

Therefore, we turn to day-to-day dynamical models for a better behavioral foundation.

\subsection{The CULO model}
\label{sec:culo}

The cumulative logit (CULO) model \citep{li2023wardrop} is a day-to-day (DTD) dynamical model of the routing game.  At its core, CULO consists of two modules: a learning module that updates the route valuation $\vs^t \in \sR^{|\sK|}$ on each day $t$ and a choice module that maps $\vs^t$ to the route choice strategy $\vp^t$. Before the routing game is played, travelers may have a preference for routes, represented by the route valuation $\vs^0$. Those who have no prior information on the routes may simply set $\evs_k^0 =0$ for all $k \in \sK$.
 CULO assumes the travelers incorporate the newly learned route cost $c(\vp^{t-1})$ into the route valuation $\vs^t$ through a weighted cumulative dynamic as follows:
\begin{equation}
    \vs^t = \vs^{t - 1} + \eta^t \cdot c(\vp^{t - 1}),
    \label{eq:culo}
\end{equation}
where the weight $\eta^t$ measures the impact of the cost received on day $t - 1$ on the travelers' valuation on day $t$. Mathematically, the parameter controls how fast the route valuation accrues with the route cost. Behaviorally, it captures how quickly travelers become disposed to ignore the latest information and ``settle down." Thus, $\eta^t$ is referred to as the \emph{proactivity measure}: the larger the $\eta^t$, the more proactive the travelers.

On each day,  a new route choice strategy $\vp^t = q_r(\vs^t)$ is obtained from the latest route valuation,  according to the logit model \eqref{eq:logit-w}. The parameter $r$ in the logit model \eqref{eq:logit-w}, referred to as the \emph{exploration parameter} in CULO,  measures the trade-off between exploration and exploitation: the larger the parameter $r$, the more exploitative the travelers (meaning they are less likely to explore sub-optimal routes). In the CULO model, the parameter $r$ is fixed at a constant value.   One may interpret this setting as travelers' propensity for accepting sub-optimal routes, or their desired balance between exploration and exploitation, is time-invariant.
The following result establishes the global stability of the CULO model --- that is, the convergence to a UE strategy regardless of the initial solution --- under mild requirements for $\eta^t$. Worth noting here is that the weaker of the two conditions only requires $\eta^t$ to be sufficiently small rather than reaching zero at the limit. 
\begin{proposition}[\citet{li2023wardrop}, Theorem 5.4]
\label{thm:convergence-ue}
    Under Assumptions \ref{ass:differentiable}--\ref{ass:monotone}, suppose that $\vs^0 < \infty$, then $\vp^t$ in the CULO model \eqref{eq:culo} converges to a fixed point $\vp^* \in \sP^*$, the solution set to the VI problem \eqref{eq:ue-vi}, if either of the following two conditions is satisfied: (i) $\lim_{t \to \infty} \eta^t = 0$ and $\lim_{t \to \infty} \sum_{i = 0}^t \eta^i = \infty$, or (ii) $\eta^t = \eta < 1/2rL$ for all $t \geq 0$, where $L$ is the Lipschitz constant of $c(\vp)$ (mathematically, any $L \geq \max_{\vp \in \sP} \|\nabla c(\vp)\|_2$ can be used to fulfill the requirement).
\end{proposition}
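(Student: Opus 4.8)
The plan is to recognize the CULO recursion \eqref{eq:culo}--\eqref{eq:logit-w} as a \emph{dual-averaging} (lazy mirror-descent) scheme for the variational inequality \eqref{eq:ue-vi}, and to drive the analysis with a Fenchel-coupling Lyapunov function. The key observation is that, written in route-flow variables $\vf=\diag(\vq)\vp$, the logit choice \eqref{eq:logit-w} is exactly the mirror map generated by the negative entropy $R(\vf)=\langle\vf,\log\vf\rangle$ over the scaled simplices $\{\vf\ge\vzero:\mSigma\vf=\vd\}$: with $\vy^t:=-r\,\vs^t$ one has $\vf^t=\nabla R^*(\vy^t)$, where $R^*$ is the (log-sum-exp) conjugate, and \eqref{eq:culo} is precisely the running dual sum $\vs^t=\vs^0+\sum_{i\le t}\eta^i\,c(\vp^{i-1})$. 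Fixing any reference UE flow $\vf^*=\diag(\vq)\vp^*$ with $\vp^*\in\sP^*$, I would take as Lyapunov function the Fenchel coupling $F(\vf^*,\vy^t)=R(\vf^*)-\langle\vy^t,\vf^*\rangle+R^*(\vy^t)\ge 0$, which is finite at $t=0$ since $\vs^0<\infty$ and which dominates a Bregman distance from $\vf^t$ to $\vf^*$.

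The engine of the proof is a one-step energy inequality. Since $R$ is strongly convex in the $\ell_1$ norm on each scaled simplex (Pinsker's inequality), $R^*$ is correspondingly smooth, and expanding $F(\vf^*,\vy^t)$ around $\vy^{t-1}$ gives
\[
F(\vf^*,\vy^t)\ \le\ F(\vf^*,\vy^{t-1})\ -\ r\,\eta^t\,\big\langle c(\vp^{t-1}),\,\vf^{t-1}-\vf^*\big\rangle\ +\ \tfrac{1}{2}\,\kappa\,r^2(\eta^t)^2\,\big\|c(\vp^{t-1})\big\|_\infty^2,
\]
with $\kappa$ the smoothness constant. The first-order term is clean in flow variables: $\langle c(\vp^{t-1}),\vf^{t-1}-\vf^*\rangle=\langle u(\vx^{t-1}),\vx^{t-1}-\vx^*\rangle$, and subtracting the variational inequality \eqref{eq:ue-vi} in its equivalent link-flow form $\langle u(\vx^*),\vx^{t-1}-\vx^*\rangle\ge0$ lower-bounds it by the monotonicity gap $\langle u(\vx^{t-1})-u(\vx^*),\vx^{t-1}-\vx^*\rangle$, which Assumption~\ref{ass:monotone} makes nonnegative and strictly positive whenever $\vx^{t-1}\ne\vx^*$. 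The error term is controlled because $c$ is continuous on the compact $\sP$, so $\|c(\vp)\|_\infty\le C$. In regime (ii) the bound $\eta<1/(2rL)$ is exactly what is needed for a descent lemma to absorb the quadratic error into the monotonicity gap (this is where the Lipschitz constant $L$ of $c(\vp)$ enters), yielding a monotonically decreasing, bounded-below energy and hence summable gaps; in regime (i), $\eta^t\to0$ makes the quadratic error negligible relative to the first-order descent whenever the gap is bounded away from zero, while $\sum_t\eta^t=\infty$ makes the accumulated descent unbounded unless the gap vanishes. In both cases the monotonicity gap tends to $0$, and since it is continuous and vanishes only at $\vx^*$ on the compact $\sX$ (strict monotonicity, Proposition~\ref{prop:solution-set}), we conclude $\vx^t\to\vx^*$, the unique UE link flow.

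The main obstacle is the final step: upgrading link-flow convergence $\vx^t\to\vx^*$ to convergence of the route-choice strategy $\vp^t$ itself. This is genuinely nontrivial because $\sP^*$ is in general a polyhedron rather than a singleton (Proposition~\ref{prop:solution-set}), and neither strict monotonicity nor the energy argument constrains the motion \emph{within} $\sP^*$. Here I would exploit the cumulative structure directly. Once $\vx^t\to\vx^*$, the costs $c(\vp^{t-1})$ converge to the UE cost vector, which on each OD pair partitions routes into a ``used'' set of equal, minimal cost and an ``unused'' set of strictly higher cost. Because $\vs^t$ accumulates cost, the valuations of unused routes grow strictly faster than those of used routes, so their logit shares decay to $0$; among used routes the per-step increments are asymptotically common, so the within-simplex valuation differences $\evs_k^t-\evs_{k'}^t$ settle down and $\vp^t$ converges to a single limit in $\sP^*$. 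Making this rigorous amounts to showing that the accumulated discrepancy between $c(\vp^{t-1})$ and the exact UE cost perturbs the within-OD valuation differences only by a convergent amount --- which is precisely where the summability (regime ii) or the slow-step asymptotics (regime i) extracted above must be spent a second time. A secondary technical point, internal to regime (i), is that $\eta^t\to0$ together with $\sum_t\eta^t=\infty$ does not presuppose square-summability of $\{\eta^t\}$, so promoting ``the gap has a zero limit point'' to ``the whole gap sequence vanishes'' is cleanest via an asymptotic-pseudo-trajectory (ODE-method) argument comparing the iterates to the continuous-time entropic dual-averaging flow, whose $\omega$-limit set lies in $\sX^*$.
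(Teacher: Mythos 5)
First, a point of reference: the paper does not prove this proposition at all --- it is imported verbatim from \citet{li2023wardrop} (their Theorem 5.4) --- so your attempt can only be judged on its own merits. Your high-level framing is apt: CULO is indeed entropic dual averaging, the Fenchel-coupling energy is the natural Lyapunov candidate, and your reduction of the first-order term to the link-flow monotonicity gap $\langle u(\vx^{t-1})-u(\vx^*),\vx^{t-1}-\vx^*\rangle$ is correct. But the proposal has two genuine gaps, one of which breaks the constant-step regime outright.

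\textbf{Regime (ii) does not work as described.} Your per-step error term is $\tfrac{1}{2}\kappa r^2(\eta^t)^2\|c(\vp^{t-1})\|_\infty^2$. Near equilibrium the route costs are strictly positive (they are travel times, not gradients that vanish at the optimum; e.g., in the paper's 3N4L example the UE costs are bounded well away from zero), so this error is bounded below by a positive constant times $\eta^2$ and does \emph{not} scale with the monotonicity gap. It therefore cannot be ``absorbed into the monotonicity gap'': summing your inequality with constant $\eta$ only yields that the running average of the gaps is $O(\eta)$, i.e., convergence to an $O(\eta)$-neighborhood of $\vx^*$, not convergence to $\vx^*$, and the energy is not monotonically decreasing. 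Symptomatically, the Lipschitz constant $L$ never actually appears in your argument --- your error constant involves the uniform cost bound and the smoothness constant of the conjugate, not $L$ --- so the assertion that $\eta<1/(2rL)$ ``is exactly what is needed'' is unsupported. A correct constant-step argument has to spend smoothness elsewhere, e.g., on the discrepancy between consecutive costs via $\|c(\vp^t)-c(\vp^{t-1})\|\le L\|\vp^t-\vp^{t-1}\|$ combined with the fact that the logit map makes $\|\vp^t-\vp^{t-1}\|=O(r\eta)$; that is where a condition of the form $rL\eta<1/2$ can produce genuine per-step descent.

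\textbf{The final step is the actual crux and remains open in your sketch.} Even granting $\vx^t\to\vx^*$, the strategy $\vp^t=q_r(\vs^t)$ converges if and only if, for every $w\in\sW$ and $k,k'\in\sK_w$, the differences $\evs_k^t-\evs_{k'}^t=\evs_k^0-\evs_{k'}^0+\sum_{i<t}\eta^i\bigl(c_k(\vp^i)-c_{k'}(\vp^i)\bigr)$ converge in $\bar\sR$. For two UE routes of equal equilibrium cost, $c_k(\vp^i)-c_{k'}(\vp^i)\to 0$ does not make this series convergent: the weights $\eta^i$ are non-summable in both regimes, so a harmonic-type divergence or oscillation is entirely possible, and strict (as opposed to strong) monotonicity supplies no rate with which to dominate it. You acknowledge this and propose to ``spend the summability a second time,'' but no summable quantity is available: regime (ii) as you derived it produces none (see above), and regime (i) produces at best $\liminf_t G_t=0$ after the $\epsilon$-splitting of $\sum_t(\eta^t)^2$, which controls weighted gaps, not weighted cost \emph{differences} between equally-priced routes. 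Since the entire point of the theorem (and of this paper, which is about \emph{which} point of $\sP^*$ is selected) is pinning down the within-$\sP^*$ drift of $\vp^t$, this step cannot be waved through; as it stands, the proposal establishes at most subsequential convergence of the link flows, not the stated result.
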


In this study, we will further explore the relationship between the limiting point of CULO and the initial solution. As we shall see, this relation is the key to unlocking the conditions that ensure the convergence of CULO to the MEUE strategy.

\begin{remark}[Relation with classical DTD models]
A reader familiar with the DTD literature, upon noticing the seemingly striking similarities between CULO and the classical discrete-time DTD models \citep[e.g.,][]{horowitz1984stability}, may question why  CULO converges to UE when other similar models converge to SUE.  This question is addressed at length in \cite{li2023wardrop}. A brief discussion is provided here for the convenience of the reader. 
Let us first consider the DTD model of \citet{horowitz1984stability}, which updates $\vs^t$ as a weighted average of $\vs^{t - 1}$ and $c(\vp^{t - 1})$, i.e., 
\begin{equation}
    \vs^t = (1 - \eta) \cdot \vs^{t - 1} + \eta \cdot c(\vp^{t - 1}).
    \label{eq:watling}
\end{equation}
Variants of the model have been extensively studied in the literature \citep[e.g.,][]{cascetta1993modelling, watling1999stability}, though a fundamental feature remains the same: $\vs^t$ is a weighted \emph{average} of route costs learned over time. Because $\vs^t$ is a weighted average, when $(\vp^t, \vs^t)$ converges to a fixed point $(\hat \vp, \hat \vs)$, we have $ \hat \vs = c(\hat \vp)$ and $\hat \vp = q_r(\hat \vs)$. This leads to $\hat \vp = q_r(c(\hat \vp))$, which implies $\hat \vp$ is a logit-based SUE, with the route valuation at the limit being equal to the route cost.  With a finite exploitation parameter $r$, this model cannot reach UE because, if it does, the travelers would find all UE routes to be \textit{equally good}, and thus choose them with \textit{equal probabilities} (not necessarily a UE strategy). In game theory, this is known as Harsanyi's instability problem \citep{harsanyi1973games}. In the DTD context, the issue was noted in \citet{watling2003dynamics} (Section 3). Once CULO converges to a UE, however, it will be free of this curse. This is because the \emph{cumulative} route costs explain why travelers prefer some routes more than others, as prescribed by the mixed strategy at WE, even though the \emph{present} route costs predict indifference. More specifically, after reaching UE, travelers may have a higher propensity to choose one UE route over another if the former delivers a lower accumulated cost, which may happen when it has a better performance in the past..
We refer the readers to \citet{li2023wardrop}, Section 4.3 for an illustrative example.

\end{remark}

\section{MEUE affirmation conditions}
\label{sec:main}

In this section, we present the main theoretical results concerning the conditions that guarantee the convergence of CULO to the MEUE strategy of the routing game. These conditions will be referred to as the MEUE affirmation conditions.
{Throughout this section, we assume the following conditions always hold.
\begin{itemize}
        \item Assumptions   \ref{ass:differentiable}--\ref{ass:monotone}.
        \item CULO model starts from some initial point $\vs^0 < \infty$ with a fixed and finite exploration parameter $r$ and proactivity parameters $\eta^t$ that satisfy either of the two convergence conditions given in Proposition \ref{thm:convergence-ue}.
    \end{itemize}

}

We begin by presenting a crucial property of the CULO model. 
\begin{lemma}
\label{lm:proportionality}
Starting from any $\vs^0 \in \sR^{|\sK|}$, the CULO model produces a sequence $\{\vp^t\}_{t  = 0}^{\infty}$ that satisfies
$
    \langle \ve, \log(\vp^t) \rangle = -r \cdot  \langle \ve, \vs^0 \rangle
$
for all $\ve \in \ker{(\mSigma)} \cap \ker{(\mLambda)}$.
\end{lemma}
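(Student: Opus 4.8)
The plan is to turn the claimed identity into a purely algebraic invariant of the dynamics, using the two kernel conditions for two distinct purposes: $\ker(\mSigma)$ will annihilate the logit normalization constants, while $\ker(\mLambda)$ will annihilate the accumulated route-cost increments. No convergence argument is needed; the identity should hold for every finite $t$.

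First I would take logarithms in the logit map \eqref{eq:logit-w}. For any $k \in \sK_w$ this gives $\log(\evp_k^t) = -r \evs_k^t - \evz_w^t$, where $\evz_w^t = \log\big(\sum_{k' \in \sK_w} \exp(-r \evs_{k'}^t)\big)$ is a scalar depending only on the OD pair $w$. Since the constant attached to route $k$ is exactly $\evz_w^t$ for the unique $w$ with $k \in \sK_w$, this can be written compactly as $\log(\vp^t) = -r \vs^t - \mSigma^\T \vz^t$, with $\vz^t = (\evz_w^t)_{w \in \sW}$. Pairing against any fixed $\ve \in \ker(\mSigma) \cap \ker(\mLambda)$, the normalization term drops out because $\langle \ve, \mSigma^\T \vz^t \rangle = \langle \mSigma \ve, \vz^t \rangle = 0$ by $\mSigma \ve = \vzero$. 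Hence $\langle \ve, \log(\vp^t) \rangle = -r \langle \ve, \vs^t \rangle$, and the lemma reduces to showing $\langle \ve, \vs^t \rangle = \langle \ve, \vs^0 \rangle$ for all $t$.

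For that reduced claim, I would unroll the CULO recursion \eqref{eq:culo} into $\vs^t = \vs^0 + \sum_{i=1}^t \eta^i \, c(\vp^{i-1})$ and show every increment is orthogonal to $\ve$. Because each route-cost vector has the form $c(\vp^{i-1}) = \mLambda^\T \vu^{i-1}$ for the corresponding link cost $\vu^{i-1}$, it lies in $\im(\mLambda^\T)$, so $\langle \ve, c(\vp^{i-1}) \rangle = \langle \mLambda \ve, \vu^{i-1} \rangle = 0$ by $\mLambda \ve = \vzero$. Summing over $i$ gives $\langle \ve, \vs^t \rangle = \langle \ve, \vs^0 \rangle$, and combining with the previous step yields $\langle \ve, \log(\vp^t) \rangle = -r \langle \ve, \vs^0 \rangle$, exactly as claimed.

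I do not expect a substantive obstacle here; the entire difficulty is in correctly bookkeeping the two kernels. The one point to state carefully is that $\ker(\mSigma)$ kills the logit normalization precisely because that term is constant within each OD pair (hence an element of $\im(\mSigma^\T)$), whereas $\ker(\mLambda)$ kills the cost because every route-cost vector lies in $\im(\mLambda^\T)$. Once both observations are in place, the result follows by a one-line telescoping argument that is independent of the choice of $r$, the weights $\eta^t$, and the convergence conditions of Proposition \ref{thm:convergence-ue}.
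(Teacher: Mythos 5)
Your proof is correct and follows essentially the same route as the paper's: both arguments write $\log(\vp^t) = -r\vs^t - \mSigma^{\T}(\text{normalization})$, use $\ve \in \ker(\mSigma)$ to annihilate the log-sum-exp term, and then use $\ve \in \ker(\mLambda)$ together with the unrolled recursion $\vs^t = \vs^0 + \sum_i \eta^i\, c(\vp^{i-1})$ and the fact that every route cost lies in $\im(\mLambda^{\T})$ to conclude $\langle \ve, \vs^t\rangle = \langle \ve, \vs^0\rangle$. The only cosmetic difference is that the paper expresses your $\vz^t$ as $\log(\mSigma\vy^t)$ via the identity $\log(\mSigma^{\T}\mSigma\vy^t) = \mSigma^{\T}\log(\mSigma\vy^t)$; your bookkeeping is, if anything, slightly cleaner (the paper's display even has a typo, writing $\mLambda^{\T}\vx^i$ where the link cost $u(\vx^i)$ is meant, which your version gets right).
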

\begin{proof}
    See Appendix \ref{app:proportionality} for detailed proof.
\end{proof}
Lemma \ref{lm:proportionality} implies that for any vector $\ve$ in the basis of $\ker{(\mSigma)} \cap \ker{(\mLambda)},$ the CULO model preserves the value of $\langle \ve, \log(\vp^t)\rangle$ as a constant dependent only on the initial solution. As we shall see, this property is a cornerstone of the results presented in this section. In what follows,  Section \ref{sec:kl-culo} explores the relationship between running CULO and performing KL projection, and Section \ref{sec:two-conditions} gives the conditions under which CULO is guaranteed to reach MEUE.

\subsection{CULO and KL projection}
\label{sec:kl-culo}

Given any two $\vp, \vp' \in \sP$,  the KL divergence between $\vp$ and $\vp'$ can be defined as
\begin{equation}
\label{eq:kl-def}
    D(\vp, \vp') = \langle \diag(\vq) \vp, \log(\vp) - \log(\vp')\rangle. 
\end{equation}

\begin{definition}[The KL projection problem]
    Given any $\vp^0 \in \sP$, the KL projection of $\vp^0$ on $\sP^*$ is defined as
    \begin{equation}
        \bar \vp^* = \argmin_{\vp^* \in \sP^*} D(\vp^*, \vp^0).
        \label{eq:main}
    \end{equation}
\end{definition}
The KL projection problem \eqref{eq:main} is a natural generalization of the MEUE problem \eqref{eq:meue}. Indeed, it reduces to the MEUE problem when $\vp^0 = \vone / \mSigma^{\T} \mSigma \vone$, the equal-distribution route choice that dictates all available routes between each OD pair have an equal probability of being selected. To understand this assertion, it suffices to note that the KL divergence of any $\vp \in \sP$ against the equal-distribution route choice $\vp^0$ reads
\begin{equation}
    D(\vp, \vp^0) = \langle \diag(\vq) \vp, \log(\vp) - \log(\vp^0)\rangle = \phi(\vp) - \langle \diag(\vq) \vp, \log(\vp^0)\rangle = \phi(\vp) + \sum_{w \in \sW} \evd_w \cdot 
    \log(|\sK_w|),
    \label{eq:kl-meue}
\end{equation}
which equals the negative entropy function $\phi(\vp)$ plus a \textit{constant} (the second term). The above relation is well known in the information theory literature \citep{jaynes1957information, kullback1959information}.

The following lemma enables us to check whether a $\bar \vp^* \in \sP^*$ is the solution to the KL projection problem corresponding to an initial solution $\vp^0$.
\begin{lemma}
\label{lm:kl-condition}
    A UE strategy $\bar \vp^* \in \sP^*$ is the KL projection of $\vp^0$ on $\sP^*$ if
    $
        \langle \ve, \log(\bar \vp^*) - \log(\vp^0) \rangle = 0
    $
    for all $\ve \in \ker{(\mSigma)} \cap \ker{(\mLambda)}$.
\end{lemma}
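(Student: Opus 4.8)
The plan is to prove the inequality $D(\vp^*, \vp^0) \ge D(\bar \vp^*, \vp^0)$ for \emph{every} competitor $\vp^* \in \sP^*$; since the objective $D(\cdot, \vp^0)$ is strictly convex and $\sP^*$ is convex, this identifies $\bar \vp^*$ as the unique minimizer in \eqref{eq:main}, i.e. as the KL projection. The structural fact I will lean on is Proposition \ref{prop:solution-set}: $\sP^*$ is the polyhedron $\{\vp \in \sP: \bar \mLambda \vp = \vx^*\}$, so any two of its members carry the same unit OD split and the same link flow $\vx^*$. I deliberately avoid a first-order (KKT) argument, which would require handling the gradient of $p \log p$ blowing up on the boundary, and instead use a three-point (generalized Pythagorean) decomposition of the divergence, for which only non-negativity of $D$ is needed.

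The key algebraic step is to study, for an arbitrary $\vp^* \in \sP^*$, the weighted displacement $\vg := \diag(\vq)(\vp^* - \bar \vp^*)$ and to show that $\vg \in \ker(\mSigma) \cap \ker(\mLambda)$, so that it is a legitimate test vector for the hypothesis. For the link-incidence part, $\mLambda \vg = \mLambda \diag(\vq)(\vp^* - \bar \vp^*) = \bar \mLambda \vp^* - \bar \mLambda \bar \vp^* = \vx^* - \vx^* = \vzero$, using $\bar \mLambda = \mLambda \diag(\vq)$ together with the uniqueness of the UE link flow. For the OD-incidence part I will exploit the special structure $\vq = \mSigma^{\T} \vd$, which makes $\evq_k$ constant (equal to $\evd_w$) across all routes $k \in \sK_w$; hence the $w$-th component of $\mSigma \vg$ is $\evd_w \sum_{k \in \sK_w} (\evp_k^* - \bar \evp_k^*) = \evd_w \big( (\mSigma \vp^*)_w - (\mSigma \bar \vp^*)_w \big) = \evd_w (1 - 1) = 0$.

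With $\vg$ in the kernel, I will establish the decomposition identity
\[
    D(\vp^*, \vp^0) = D(\vp^*, \bar \vp^*) + D(\bar \vp^*, \vp^0) + \langle \vg, \log(\bar \vp^*) - \log(\vp^0) \rangle,
\]
which follows by directly expanding the $\diag(\vq)$-weighted inner products: after $D(\vp^*,\vp^0) - D(\vp^*,\bar\vp^*)$ collapses the $\log(\vp^*)$ terms to $\langle \diag(\vq)\vp^*, \log(\bar\vp^*) - \log(\vp^0)\rangle$, subtracting $D(\bar\vp^*,\vp^0)$ leaves exactly $\langle \diag(\vq)(\vp^* - \bar\vp^*), \log(\bar\vp^*) - \log(\vp^0)\rangle = \langle \vg, \log(\bar\vp^*) - \log(\vp^0)\rangle$. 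Taking $\ve = \vg$ in the hypothesis annihilates this cross term, so $D(\vp^*, \vp^0) = D(\vp^*, \bar \vp^*) + D(\bar \vp^*, \vp^0)$. Finally I observe that $D(\vp^*, \bar \vp^*) \ge 0$: within each OD pair $\vp^*$ and $\bar \vp^*$ are probability vectors over $\sK_w$, so $D(\vp^*,\bar\vp^*) = \sum_{w \in \sW} \evd_w \sum_{k \in \sK_w} \evp_k^* \log(\evp_k^* / \bar \evp_k^*)$ is a non-negatively weighted sum of per-OD relative entropies, each non-negative. Hence $D(\vp^*, \vp^0) \ge D(\bar \vp^*, \vp^0)$, completing the argument.

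The main obstacle is bookkeeping of positivity and finiteness rather than the algebra. The hypothesis $\langle \ve, \log(\bar \vp^*) - \log(\vp^0) \rangle = 0$ is only meaningful when the logarithms are finite, so I must argue that $\vp^0 > \vzero$ (which holds because a finite initial valuation $\vs^0$ maps through the logit model \eqref{eq:logit-w} to a strictly positive strategy) and that $\bar \vp^*$ places positive mass on every route it must compare against; where $\supp(\vp^*) \not\subseteq \supp(\bar \vp^*)$ the divergence $D(\vp^*,\bar\vp^*)$ is $+\infty$ and the target inequality holds trivially, while the decomposition identity itself is applied on the common support. It is worth flagging that the proof uses nothing about the cost structure beyond Proposition \ref{prop:solution-set}: it is purely a statement about KL projection onto the affine slice $\bar \mLambda \vp = \vx^*$, which is precisely why, via \eqref{eq:kl-meue}, it will specialize cleanly to the MEUE characterization of Proposition \ref{prop:proportionality}.
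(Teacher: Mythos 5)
Your proof is correct, but it takes a genuinely different route from the paper's. The paper treats the KL projection as a constrained convex program (minimize $\langle \diag(\vq)\vp^*, \log(\vp^*)-\log(\vp^0)\rangle$ subject to $\mSigma\diag(\vq)\vp^* = \vd$ and $\mLambda\diag(\vq)\vp^* = \vx^*$), writes down its KKT conditions, and uses the orthogonal-complement relation $(\ker(\mSigma)\cap\ker(\mLambda))^{\perp} = \im(\mSigma^{\T}) + \im(\mLambda^{\T})$ to manufacture multipliers $\valpha, \vbeta$ with $\log(\bar\vp^*) - \log(\vp^0) = \mSigma^{\T}\valpha + \mLambda^{\T}\vbeta$, which verifies the KKT system. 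You instead prove a generalized Pythagorean identity: you show the weighted displacement $\vg = \diag(\vq)(\vp^* - \bar\vp^*)$ between any two UE strategies lies in $\ker(\mSigma)\cap\ker(\mLambda)$ (both steps check out: $\mLambda\vg = \vzero$ via the unique UE link flow from Proposition \ref{prop:solution-set}, and $\mSigma\vg = \vzero$ via $\evq_k = \evd_w$ for $k \in \sK_w$), expand the divergences to obtain $D(\vp^*,\vp^0) = D(\vp^*,\bar\vp^*) + D(\bar\vp^*,\vp^0) + \langle\vg, \log(\bar\vp^*)-\log(\vp^0)\rangle$, annihilate the cross term using the hypothesis with $\ve = \vg$, and conclude by non-negativity of the per-OD relative entropies. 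Both arguments rest on the same two structural facts (the affine description of $\sP^*$ and the kernel-orthogonality hypothesis), but yours is more elementary --- no duality, no multipliers --- and yields strictly more information: the exact identity $D(\vp^*,\vp^0) = D(\vp^*,\bar\vp^*) + D(\bar\vp^*,\vp^0)$ quantifies the suboptimality of every competitor, whereas the KKT argument only certifies optimality. The paper's route is shorter if one takes KKT for granted, and it would also deliver necessity of the condition essentially for free.

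One repair is needed in your handling of the degenerate case $\supp(\vp^*) \not\subseteq \supp(\bar\vp^*)$: the claim that $D(\vp^*,\bar\vp^*) = +\infty$ makes ``the target inequality hold trivially'' is wrong as stated, since both $D(\vp^*,\vp^0)$ and $D(\bar\vp^*,\vp^0)$ are finite when $\vp^0 > \vzero$, and your decomposition then degenerates into an $\infty - \infty$ expression. The correct observation is that this case simply cannot occur under the hypothesis: if $\evp_k^* > 0 = \bar\evp_k^*$ for some $k$, then $\vg$ is itself an admissible test vector with $\evg_k > 0$, so $\langle\vg, \log(\bar\vp^*) - \log(\vp^0)\rangle$ would contain a $-\infty$ contribution, contradicting the assumption that this inner product is well defined and equal to zero. (The paper's proof is equally silent on this point --- its stationarity condition also presumes $\log(\bar\vp^*)$ finite --- but your fix should route through the hypothesis rather than through the divergence being infinite.)
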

\begin{proof}
    See Appendix \ref{app:kl-condition} for detailed proof.
\end{proof}
We are now ready to present the main result linking the limiting point of CULO to the KL projection of its initial strategy.
\begin{theorem}
\label{thm:main}
  Let $\vp^0$ be an initial strategy and $\vp^*$ be the limiting point of the CULO model corresponding to $\vp^0$. Then  $\vp^*$ is the KL projection of  $\vp^0$ on $\sP^*$.
\end{theorem}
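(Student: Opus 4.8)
The plan is to chain together three facts already available: the global convergence of CULO (Proposition~\ref{thm:convergence-ue}), the trajectory invariant of Lemma~\ref{lm:proportionality}, and the sufficient condition for a KL projection in Lemma~\ref{lm:kl-condition}. By Proposition~\ref{thm:convergence-ue}, the standing assumptions guarantee that $\vp^t$ converges to some limit $\vp^* \in \sP^*$. Lemma~\ref{lm:kl-condition} then tells me that, to identify this $\vp^*$ as the KL projection of $\vp^0$ onto $\sP^*$, I need only verify the single orthogonality condition $\langle \ve, \log(\vp^*) - \log(\vp^0) \rangle = 0$ for every $\ve \in \ker(\mSigma) \cap \ker(\mLambda)$. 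So the whole proof reduces to checking this identity at the limit.

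The invariant from Lemma~\ref{lm:proportionality} does most of the work. For each finite $t \geq 0$ it states $\langle \ve, \log(\vp^t) \rangle = -r\langle \ve, \vs^0 \rangle$. Applying it at $t = 0$ --- or, equivalently, expanding $\vp^0 = q_r(\vs^0)$ through the logit form \eqref{eq:logit-w} and noting that the OD-wise normalization constants are killed by $\ve \in \ker(\mSigma)$ because $\sum_{k \in \sK_w} \ve_k = 0$ --- gives $\langle \ve, \log(\vp^0) \rangle = -r\langle \ve, \vs^0 \rangle$ as well. Subtracting, $\langle \ve, \log(\vp^t) - \log(\vp^0) \rangle = 0$ holds for every finite $t$. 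The target identity is therefore already satisfied all along the trajectory, and only its transfer to $\vp^*$ remains.

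I expect the limit passage to be the one delicate step. Each iterate $\vp^t = q_r(\vs^t)$ has strictly positive entries, so $\log(\vp^t)$ is finite; but $\vp^*$ is a UE strategy and may vanish on unused routes, in which case $\log(\vp^*)$ carries $-\infty$ entries and a bare appeal to continuity of $\log$ is invalid. The resolution I would pursue is to lean on the \emph{finiteness} of the invariant itself: since $\sum_k \ve_k \log(\vp^t_k)$ equals the finite constant $-r\langle \ve, \vs^0 \rangle$ for all $t$ while every $\log(\vp^t_k) \leq 0$, no route in $\supp(\ve)$ can have its probability decay to zero unless it is offset by a companion route in $\supp(\ve)$, so the weighted sum stays bounded and converges to the same constant. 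Equivalently, one may first invoke the no-route-left-behind property established for the CULO limit to argue that the kernel directions $\ve$ are supported only on routes that remain positive at $\vp^*$, after which ordinary continuity applies. Either route yields $\langle \ve, \log(\vp^*) \rangle = -r\langle \ve, \vs^0 \rangle = \langle \ve, \log(\vp^0) \rangle$, which is exactly the hypothesis of Lemma~\ref{lm:kl-condition}; the theorem follows.
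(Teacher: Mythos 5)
Your proposal follows the paper's own argument almost step for step: the trajectory invariant of Lemma~\ref{lm:proportionality} (applied at $t=0$ and at general $t$, then subtracted) gives $\langle \ve, \log(\vp^t) - \log(\vp^0)\rangle = 0$ for all finite $t$, and Lemma~\ref{lm:kl-condition} converts the limiting identity into the KL-projection property; convergence itself comes from Proposition~\ref{thm:convergence-ue}. The paper dispatches the remaining limit passage in one line, by asserting that $\vp \mapsto \langle \ve, \log(\vp) - \log(\vp^0)\rangle$ is continuous and letting $t \to \infty$. You are right to single out that step as the delicate one: the map is continuous only on the interior of $\sP$, whereas the CULO limit $\bar\vp^*$ is a UE strategy and generally has zero entries, so a bare continuity appeal is incomplete --- a weakness present in the paper's own proof. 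Flagging it is the genuinely valuable part of your write-up.

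The difficulty is that neither of your two patches actually closes this gap. Patch (b) is circular within the paper's logical structure: Corollary~\ref{cor:no-left} (``no route left behind'') is \emph{derived from} Theorem~\ref{thm:main} --- its proof assumes the limiting point solves the KL projection problem --- so it cannot be used as an ingredient in the theorem's proof unless you supply an independent derivation, which you do not. Moreover, even granting it, it would not suffice: vectors $\ve \in \ker(\mSigma)\cap\ker(\mLambda)$ are determined purely by network topology and can place nonzero weight on routes used by \emph{no} UE strategy; for such routes $\evp_k^t \to 0$ in any case, so ``ordinary continuity'' still fails on $\supp(\ve)$. Patch (a) conflates two statements: that $\langle\ve,\log(\vp^t)\rangle$ equals a finite constant for every finite $t$ (trivially true, by the invariant) and that $\langle\ve,\log(\bar\vp^*)\rangle$ is well defined and equals that constant. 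If two routes in $\supp(\ve)$ carrying opposite signs of $\ve_k$ both decay to zero, the invariant holds along the entire trajectory while the limit expression is an ill-defined $\infty - \infty$, and Lemma~\ref{lm:kl-condition} cannot be invoked as stated. A repair that does work is to avoid evaluating logarithms at the limit point altogether: the invariant says $\log(\vp^t)-\log(\vp^0) \in \im(\mSigma^{\T}) + \im(\mLambda^{\T})$, with every quantity finite because each $\vp^t$ is interior; since all $\vp\in\sP^*$ share the same OD totals and the same link flow $\vx^*$, pairing this vector with $\diag(\vq)\vp$ shows that $D(\vp,\vp^0) - D(\vp,\vp^t)$ takes a common value $C^t$ for all $\vp \in \sP^*$. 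Hence $D(\bar\vp^*,\vp^0)-D(\vp,\vp^0) = D(\bar\vp^*,\vp^t)-D(\vp,\vp^t) \le D(\bar\vp^*,\vp^t)$, and since $D(\bar\vp^*,\vp^t)\to 0$ as $\vp^t \to \bar\vp^*$ while the left-hand side is independent of $t$, we get $D(\bar\vp^*,\vp^0)\le D(\vp,\vp^0)$ for every $\vp\in\sP^*$, which is exactly the theorem.
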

\begin{proof}
    See Appendix \ref{app:main} for detailed proof.
\end{proof}
Theorem \ref{thm:main} may be used to establish several useful properties of the CULO model.

\begin{corollary}
\label{cor:1}
The limiting point of the CULO model is solely determined by the initial strategy $\vp^0$.
\end{corollary}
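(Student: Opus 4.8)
The plan is to read off the corollary directly from Theorem~\ref{thm:main}, which already pins the limiting point of CULO to the KL projection of $\vp^0$ onto $\sP^*$. Since that projection is defined in \eqref{eq:main} as a minimizer over $\sP^*$, it suffices to argue the minimizer is unique, so that it --- and hence the limiting point --- is a function of $\vp^0$ alone, independent of the behavioral parameters $r$ and $\{\eta^t\}$ that only shape the transient trajectory.

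First I would note that the feasible set is convex: by Proposition~\ref{prop:solution-set}, under Assumption~\ref{ass:monotone} the UE set $\sP^*$ is the polyhedron $\{\vp^* \in \sP : \bar\mLambda \vp^* = \vx^*\}$. Next I would verify that the projection objective $D(\cdot, \vp^0)$ is strictly convex. Splitting it as $D(\vp, \vp^0) = \phi(\vp) - \langle \diag(\vq)\vp, \log(\vp^0) \rangle$, the second term is linear in $\vp$, while $\phi$ is strictly convex (as already invoked in the definition of the MEUE problem \eqref{eq:meue}); a strictly convex function plus a linear one remains strictly convex.

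Minimizing a strictly convex function over the convex set $\sP^*$ yields a unique minimizer, so the KL projection $\bar\vp^*$ of \eqref{eq:main} is determined by $\vp^0$ alone. Theorem~\ref{thm:main} then forces every limiting point of CULO issued from $\vp^0$ to equal this single point, whatever admissible $r$ and $\{\eta^t\}$ were used; this is exactly the claim. I do not expect a real obstacle, since the substance lies in Theorem~\ref{thm:main}; the one point deserving care is that $D(\cdot, \vp^0)$ be strictly --- not merely --- convex, because the uniqueness of the projection, and thus the parameter-independence the corollary asserts, rests on it.
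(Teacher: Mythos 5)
Your proposal is correct and follows essentially the same route as the paper, which presents this corollary as an immediate consequence of Theorem~\ref{thm:main} together with the uniqueness of the KL projection (the paper notes elsewhere, in the discussion of Corollary~\ref{cor:2}, that problem~\eqref{eq:main} is a strictly convex program). Your write-up merely makes explicit the details the paper leaves implicit --- convexity of $\sP^*$ via Proposition~\ref{prop:solution-set} and strict convexity of $D(\cdot,\vp^0)$ as $\phi$ plus a linear term --- which is exactly the intended argument.
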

This property asserts that once the initial point is set, the CULO model will always converge to the same UE strategy if it does converge.  This property ensures the behavioral parameters in CULO --- the exploration parameter $r$ and the proactivity parameter $\eta^t$ --- may not affect the limiting point, even though they clearly have an impact on the evolution path of the dynamical system.   With this property, there exists a stable, one-to-one mapping between the initial and terminal strategies. Otherwise, predicting the terminal strategy would require careful calibration of the behavioral parameters.

\begin{corollary}
\label{cor:2}
The limiting point of the CULO model is continuous with respect to $\vp^0$.
\end{corollary}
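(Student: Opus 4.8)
The plan is to build directly on Theorem~\ref{thm:main}, which identifies the limiting point of CULO corresponding to $\vp^0$ with the KL projection $\bar\vp^*(\vp^0) = \argmin_{\vp^* \in \sP^*} D(\vp^*, \vp^0)$. Establishing the corollary therefore reduces to showing that the projection map $\vp^0 \mapsto \bar\vp^*(\vp^0)$ is continuous. I would read this as a parametric optimization problem in which the parameter $\vp^0$ enters \emph{only} through the objective, while the feasible region stays fixed, and then apply a standard argmin-continuity argument.

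First I would record the structural features. By Proposition~\ref{prop:solution-set}, $\sP^* = \{\vp^* \in \sP : \bar\mLambda \vp^* = \vx^*\}$ is a polyhedron contained in the bounded set $\sP$; hence $\sP^*$ is nonempty, compact, and convex, and, crucially, it does not depend on $\vp^0$. Next I would restrict the parameter $\vp^0$ to the relative interior of $\sP$, i.e.\ to strictly positive strategies. This is the relevant domain, since $\vp^0 = q_r(\vs^0)$ with finite $\vs^0$ always yields $\evp_k^0 > 0$ for every $k$, so that $\log(\vp^0)$ appearing in \eqref{eq:kl-def} is finite.

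I would then verify the two ingredients needed to conclude continuity of the minimizer. For joint continuity, write $D(\vp^*, \vp^0) = \phi(\vp^*) - \langle \diag(\vq)\vp^*, \log(\vp^0)\rangle$: the self-entropy term $\phi(\vp^*)$ is continuous in $\vp^*$ up to the boundary of $\sP^*$ because $t \mapsto t\log t$ extends continuously at $t=0$ with the convention $0\log 0 = 0$, while the cross term is jointly continuous in $(\vp^*,\vp^0)$ precisely because $\vp^0$ remains strictly positive. For uniqueness, $D(\cdot,\vp^0)$ is the strictly convex function $\phi$ plus a term linear in $\vp^*$, hence strictly convex on the convex set $\sP^*$, so $\bar\vp^*(\vp^0)$ is unique (the same strict-convexity observation already used for the MEUE objective \eqref{eq:meue}).

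Finally I would invoke Berge's Maximum Theorem: with a jointly continuous objective, a constant (hence continuous) compact-valued constraint correspondence, and a unique minimizer, the argmin is upper hemicontinuous and single-valued, which for a single-valued map is ordinary continuity. To keep things self-contained, I would instead give the sequential version: take $\vp^0_n \to \vp^0$, extract via compactness of $\sP^*$ a subsequence with $\bar\vp^*(\vp^0_n) \to \tilde\vp^* \in \sP^*$, pass to the limit in $D(\bar\vp^*(\vp^0_n), \vp^0_n) \le D(\vp^*, \vp^0_n)$ for each fixed $\vp^* \in \sP^*$ using joint continuity to get that $\tilde\vp^*$ minimizes $D(\cdot,\vp^0)$, and then use uniqueness to force $\tilde\vp^* = \bar\vp^*(\vp^0)$; since every subsequence has a further subsequence with this same limit, the whole sequence converges. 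The main obstacle is entirely the boundary behavior of the KL divergence---securing joint continuity of $D$ despite the logarithmic singularities---and this is resolved by the strict positivity of $\vp^0$ together with the continuous extension of $t\log t$ at the origin; compactness of $\sP^*$ and strict convexity then deliver the rest.
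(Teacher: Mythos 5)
Your proof is correct, and it follows the same overall reduction as the paper: by Theorem~\ref{thm:main} the limiting point of CULO is the KL projection of $\vp^0$ onto $\sP^*$, so the corollary comes down to continuity of the minimizer of the strictly convex program \eqref{eq:main} with respect to the parameter $\vp^0$. The difference is in how that continuity is established. The paper disposes of it in one line by citing Theorem 1.19 of \citet{nagurney2013network}, whereas you prove it from scratch: compactness, convexity, and parameter-independence of $\sP^*$ (via Proposition~\ref{prop:solution-set}), joint continuity of $D(\vp^*,\vp^0)$, uniqueness of the minimizer from strict convexity of $\phi$, and a sequential Berge-type argmin-continuity argument. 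Your route buys two things the citation does not. First, self-containedness: one need not verify that the cited theorem's hypotheses actually cover perturbation of the objective through the $\log(\vp^0)$ term. Second, and more substantively, your argument surfaces a hypothesis the paper leaves implicit: the map $\vp^0 \mapsto \bar\vp^*(\vp^0)$ can only be continuous for $\vp^0$ in the relative interior of $\sP$ (strictly positive strategies), since the cross term $\langle \diag(\vq)\vp^*, \log(\vp^0)\rangle$ degenerates as components of $\vp^0$ vanish; you correctly observe that this restriction is harmless because $\vp^0 = q_r(\vs^0)$ with finite $\vs^0$ is automatically strictly positive, and you handle the remaining boundary issue (zero components of $\vp^*$) with the continuous extension of $t\log t$ at the origin. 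The paper's approach buys brevity; yours buys rigor exactly where the KL divergence is delicate.
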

This result follows from  Theorem 1.19 in \citet{nagurney2013network}, by recalling that the KL projection problem \eqref{eq:main} is a strictly convex program.
It guarantees a small fluctuation in $\vp^0$ will not result in a large variation in the limiting point. If we only have limited or inaccurate knowledge of $\vp^0$, the property of continuity means that limitation would not be a great concern since it would not cause disproportionately large errors in the predicted outcome of the routing game.

Combining the above two properties with the general convergence condition given in Theorem \ref{thm:convergence-ue} yields the EUC (existence, uniqueness, and continuity of solutions) condition described in \citet{sandholm2005excess}, which is part of what he called the ``desiderata" for an ideal dynamical model needed for equilibrium selection.

To present the third property, let us first denote the set of all routes that may be used by a UE strategy as $\sK^* = \cup_{\vp^* \in \sP^*} \supp(\vp^*)$.  
\begin{corollary}
\label{cor:no-left}
    Suppose $\vp^0 > 0$, i.e., every available route is used by someone at the beginning. Then the limiting point $\bar \vp^*$ of the CULO model satisfies $\supp(\bar \vp^*) = \sK^*$.
\end{corollary}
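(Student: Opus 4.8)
The plan is to prove the two inclusions $\supp(\bar\vp^*) \subseteq \sK^*$ and $\sK^* \subseteq \supp(\bar\vp^*)$ separately. The first is immediate: Theorem \ref{thm:main} identifies the limiting point $\bar\vp^*$ as the KL projection of $\vp^0$ onto $\sP^*$, so in particular $\bar\vp^* \in \sP^*$, whence $\supp(\bar\vp^*) \subseteq \cup_{\vp^* \in \sP^*}\supp(\vp^*) = \sK^*$ by the definition of $\sK^*$. The entire difficulty lies in the reverse inclusion, i.e., in showing that the KL projection leaves \emph{no UE route behind}.

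For the reverse inclusion I would argue by contradiction. Suppose some route $k \in \sK^*$ has $\bar p^*_k = 0$. By the definition of $\sK^*$ there is a UE strategy $\tilde\vp^* \in \sP^*$ with $\tilde p^*_k > 0$. Since $\sP^*$ is a polyhedron (Proposition \ref{prop:solution-set}) and hence convex, the segment $\vp(\theta) = (1-\theta)\bar\vp^* + \theta\tilde\vp^*$ lies in $\sP^*$ for every $\theta \in [0,1]$. Because $\vp^0 > 0$, the function $g(\theta) = D(\vp(\theta), \vp^0)$ is finite on $(0,1]$ (using the convention $0\log 0 = 0$ on routes where both endpoints vanish), and the minimality of $\bar\vp^*$ forces $g(\theta) \geq g(0)$ for all $\theta$.

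The crux is to show this is impossible by examining the right-derivative of $g$ at $\theta = 0$. Differentiating $D(\vp(\theta),\vp^0) = \langle \diag(\vq)\vp(\theta), \log\vp(\theta) - \log\vp^0\rangle$ term by term yields
\[
g'(\theta) = \sum_{k' \in \sK} q_{k'}\,(\tilde p^*_{k'} - \bar p^*_{k'})\,\big(\log p_{k'}(\theta) - \log p^0_{k'} + 1\big).
\]
For the offending route the coefficient $\tilde p^*_k - \bar p^*_k = \tilde p^*_k > 0$ multiplies $\log p_k(\theta) = \log(\theta\,\tilde p^*_k)$, which diverges to $-\infty$ as $\theta \to 0^+$; every other summand stays bounded below (routes with $\bar p^*_{k'} > 0$ contribute finite quantities, routes with $\bar p^*_{k'} = \tilde p^*_{k'} = 0$ contribute nothing, and any further route with $\bar p^*_{k'} = 0 < \tilde p^*_{k'}$ only drives $g'$ more negative). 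Hence $g'(\theta) \to -\infty$ as $\theta \to 0^+$, so $g$ strictly decreases on a right-neighborhood of $0$ and $g(\theta) < g(0)$ for small $\theta > 0$, contradicting the minimality of $\bar\vp^*$. Thus every $k \in \sK^*$ has $\bar p^*_k > 0$, giving $\sK^* \subseteq \supp(\bar\vp^*)$ and, with the first inclusion, the claimed equality.

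I expect the delicate step to be the bookkeeping in the limit $\theta \to 0^+$: one must verify that the single singular logarithmic term genuinely dominates and that the zero-flow routes are treated consistently with the $0\log 0 = 0$ convention, so that no bounded contribution can offset the blow-up. A cleaner route that sidesteps explicit differentiation would be to invoke the joint convexity of the KL divergence, observe that $g$ is convex on $[0,1]$, and show directly that its right-derivative at $0$ equals $-\infty$; either way the governing phenomenon is the infinite marginal gain from activating a currently unused route whose activation remains feasible inside $\sP^*$.
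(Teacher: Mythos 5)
Your proof is correct, and it rests on the same analytic engine as the paper's: argue by contradiction and use the fact that the derivative of the KL divergence blows up to $-\infty$ when a perturbation activates a route that carries zero flow at $\bar\vp^*$. The difference is the choice of perturbation direction, and it is consequential. The paper perturbs toward the initial point, defining $l(\epsilon) = D((1-\epsilon)\cdot\bar\vp^* + \epsilon\cdot\vp^0, \vp^0)$, and infers from the derivative being $-\infty$ at $\epsilon = 0$ that $\bar\vp^*$ cannot solve the KL projection problem \eqref{eq:main}. But that problem is constrained to $\sP^*$, and the segment from $\bar\vp^*$ to $\vp^0$ generally leaves $\sP^*$ immediately (since $\vp^0$ need not be a UE strategy), so a descent direction pointing out of the feasible set does not by itself contradict constrained optimality; the paper's argument as written has a feasibility gap. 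You instead perturb toward a UE strategy $\tilde\vp^*$ with $\tilde p_k^* > 0$, which exists by the definition of $\sK^*$, and invoke the convexity of the polyhedron $\sP^*$ (Proposition \ref{prop:solution-set}) to keep the entire segment inside $\sP^*$; the hypothesis $\vp^0 > 0$ is then used exactly where it is genuinely needed, namely to keep $D(\cdot,\vp^0)$ finite along the segment. This makes the contradiction with Theorem \ref{thm:main} valid, so your version is not merely a different route but a repair of the paper's own proof. One wording slip: you say the other summands of $g'(\theta)$ ``stay bounded below,'' whereas what is actually required is that they cannot blow up to $+\infty$ and offset the divergent term; your subsequent case analysis (finite terms for routes in $\supp(\bar\vp^*)$, zero contribution for doubly-vanishing routes, additional $-\infty$ terms only helping) establishes exactly that, so the argument stands.
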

\begin{proof}
    See Appendix \ref{app:no-left} for detailed proof.
\end{proof}
Corollary \ref{cor:no-left} guarantees the CULO model never excludes a UE route from the set of routes used by the terminal strategy reached at the limit, provided that all routes are initially used.  Thus, the CULO model satisfies the ``no-route-left-behind" policy \citep{bar1999route}, which is a necessary condition for achieving MEUE.

\subsection{Two MEUE affirmation conditions}
\label{sec:two-conditions}

With the results given in the previous section, we are ready to give two conditions that can ensure the limiting point of CULO is MEUE.

\textbf{Condition (A)}. The first condition follows from Theorem \ref{thm:main}, which links the limiting point of CULO to the KL projection, and Equation \eqref{eq:kl-meue}, which asserts that minimizing the KL divergence is equivalent to maximizing entropy against the equal-distribution route choice.

\begin{proposition}
\label{prop:equal}
    If the initial route valuation $\vs^0 = \vzero$ (hence $\vp^0 = \vone / \mSigma^{\T} \mSigma \vone$), then the limiting point of the CULO model is the MEUE strategy. 
\end{proposition}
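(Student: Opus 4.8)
The plan is to derive the proposition directly by composing two facts already available in this section: Theorem \ref{thm:main}, which identifies the limiting point of CULO with the KL projection of the initial strategy onto $\sP^*$, and Equation \eqref{eq:kl-meue}, which shows that the KL divergence taken against the equal-distribution route choice equals the negative-entropy objective $\phi$ plus a constant. The hypothesis $\vs^0 = \vzero$ is precisely what makes the initial strategy equal-distributed, so the KL projection problem \eqref{eq:main} and the MEUE problem \eqref{eq:meue} will be rendered equivalent, and the conclusion follows.

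First I would confirm that $\vs^0 = \vzero$ yields $\vp^0 = q_r(\vzero) = \vone / \mSigma^{\T}\mSigma\vone$. Feeding the zero valuation into the logit map \eqref{eq:logit-w} gives $\evp_k^0 = 1/|\sK_w|$ for every $k \in \sK_w$, independently of the exploration parameter $r$; since $(\mSigma^{\T}\mSigma\vone)_k = |\sK_w|$ for $k \in \sK_w$, this is exactly the stated equal-distribution vector. Hence the parenthetical identity in the hypothesis is justified, and $\vp^0$ is the equal-distribution route choice to which Equation \eqref{eq:kl-meue} applies.

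Next I would invoke Theorem \ref{thm:main} to assert that the limiting point $\vp^*$ of CULO started from this $\vp^0$ is the KL projection $\argmin_{\vp^* \in \sP^*} D(\vp^*, \vp^0)$. Applying Equation \eqref{eq:kl-meue} then gives $D(\vp^*, \vp^0) = \phi(\vp^*) + \sum_{w \in \sW}\evd_w \log(|\sK_w|)$ for every feasible $\vp^* \in \sP^*$, where the trailing sum depends only on the demands and route-set sizes and is therefore constant over $\sP^*$. Minimizing $D(\cdot, \vp^0)$ over $\sP^*$ is thus equivalent to minimizing $\phi$ over $\sP^*$, which is the MEUE problem; since $\phi$ is strictly convex the minimizer is unique, so $\vp^*$ coincides with the MEUE strategy $\bar\vp^*$.

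Essentially nothing substantive stands in the way of this argument, as the analytical work is already carried out in Theorem \ref{thm:main}. The only point deserving care is the bookkeeping observation that the additive term in \eqref{eq:kl-meue} is genuinely constant on $\sP^*$ rather than varying with $\vp^*$; once this is noted, the equivalence between the KL projection and MEUE objectives is exact, and the proposition is immediate.
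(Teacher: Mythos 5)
Your proposal is correct and follows exactly the paper's own argument: the paper derives Proposition \ref{prop:equal} by combining Theorem \ref{thm:main} (limiting point of CULO equals the KL projection of $\vp^0$ onto $\sP^*$) with Equation \eqref{eq:kl-meue} (KL divergence against the equal-distribution strategy equals $\phi$ plus a constant), just as you do. Your explicit verification that $q_r(\vzero)$ gives the equal-distribution strategy and that the additive term is constant on $\sP^*$ simply spells out details the paper leaves implicit.
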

The initial valuation $\vs^0 = \vzero$ means the travelers have    ``zero information" about the routes initially, hence no preference on any routes can be formed.  This leads to an equal-distribution strategy $\vp^0 = \vone / \mSigma^{\T} \mSigma \vone$. Interestingly, the equal-distribution strategy is the one with the maximum entropy among all $\vp \in \sP$. Hence, when starting from an equal-distribution strategy,  CULO essentially maps $\argmin_{\vp \in \sP} \phi(\vp)$ --- the maximum-entropy strategy --- to  $\argmin_{\vp^* \in \sP^*} \phi(\vp^*)$ --- the MEUE strategy.

\textbf{Condition (B).}  The following result delineates a much larger set of initial strategies that ensure convergence to MEUE. 
\begin{proposition}
\label{prop:main-1}
    If the initial route valuation is formed based on the valuation at the link level, i.e., $\vs^0 = \mLambda^{\T} \vv^0$ for some $\vv^0 \in \sR^{|\sA|}$, then the limiting point $\bar \vp$ of the CULO model is the MEUE strategy. 
\end{proposition}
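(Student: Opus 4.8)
The plan is to combine the invariant from Lemma \ref{lm:proportionality} with Borchers' characterization of MEUE in Proposition \ref{prop:proportionality}. Lemma \ref{lm:proportionality} guarantees that, for any initial valuation, the CULO iterates obey $\langle \ve, \log(\vp^t) \rangle = -r \langle \ve, \vs^0 \rangle$ for every $\ve \in \ker(\mSigma) \cap \ker(\mLambda)$. Since the limiting point $\bar\vp$ is reached as $t \to \infty$ (convergence to a UE strategy being guaranteed by Proposition \ref{thm:convergence-ue} under our standing assumptions), I would first pass to the limit to obtain $\langle \ve, \log(\bar\vp) \rangle = -r \langle \ve, \vs^0 \rangle$ for every such $\ve$.

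The one computation that does any work is showing the right-hand side vanishes under the hypothesis $\vs^0 = \mLambda^{\T} \vv^0$. Using the adjoint relation, I would rewrite $\langle \ve, \vs^0 \rangle = \langle \ve, \mLambda^{\T} \vv^0 \rangle = \langle \mLambda \ve, \vv^0 \rangle$. Because $\ve \in \ker(\mLambda)$ by definition of the kernel intersection, we have $\mLambda \ve = \vzero$, so $\langle \ve, \vs^0 \rangle = 0$. Substituting back yields $\langle \ve, \log(\bar\vp) \rangle = 0$ for all $\ve \in \ker(\mSigma) \cap \ker(\mLambda)$ --- that is, $\bar\vp$ satisfies the general proportionality condition \eqref{eq:general}.

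With $\bar\vp \in \sP^*$ and the general proportionality condition in hand, Proposition \ref{prop:proportionality} certifies that $\bar\vp$ is the (unique) solution of the MEUE problem \eqref{eq:meue}, completing the argument. The main point to watch is that it is membership in $\ker(\mLambda)$ --- not $\ker(\mSigma)$ --- that annihilates the $\vs^0$ term, and this works precisely because the route valuation is assembled additively from link valuations through $\mLambda^{\T}$; the component in $\ker(\mSigma)$ plays no role in this step. There is no genuine obstacle beyond this observation, as the heavy lifting was already done in establishing Lemma \ref{lm:proportionality} and in importing Proposition \ref{prop:proportionality}. I would also remark that the result subsumes Proposition \ref{prop:equal} as the special case $\vv^0 = \vzero$ (whence $\vs^0 = \vzero$), confirming that Condition (B) strictly enlarges Condition (A).
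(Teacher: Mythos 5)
Your proof is correct and essentially identical to the paper's: both invoke Lemma \ref{lm:proportionality}, observe that $\langle \ve, \vs^0 \rangle = \langle \ve, \mLambda^{\T}\vv^0 \rangle = \langle \mLambda \ve, \vv^0 \rangle = 0$ because $\ve \in \ker(\mLambda)$, pass to the limit using continuity of $\vp \mapsto \langle \ve, \log(\vp)\rangle$, and conclude via Proposition \ref{prop:proportionality}. One caveat on your closing remark only: the paper explicitly cautions that Condition (B) should not be viewed as strictly enlarging Condition (A), because Proposition \ref{prop:main-1} rests on Proposition \ref{prop:proportionality} and hence on strict monotonicity of $u(\vx)$, whereas Proposition \ref{prop:equal} (proved via Theorem \ref{thm:main} and the KL-projection identity \eqref{eq:kl-meue}) remains valid even when strict monotonicity fails, as the counterexample in Section \ref{sec:two-conditions} shows.
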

\begin{proof}
    See Appendix \ref{app:main-1} for detailed proof.
\end{proof}

Thus,  as long as all travelers share the same source of initial link valuations and form their initial route valuation $\vs^0$ (hence the initial strategy) based on that source, the CULO model always converges to the MEUE strategy. %

One is inclined to view Condition (B) as more general than Condition (A) since the former depicts a set containing infinitely many strategies, whereas the latter defines a singleton.  However, it is worth noting Proposition \ref{prop:main-1} relies on Proposition \ref{prop:proportionality}, which in turn requires the link cost function $u(\vx)$ be {strictly} monotone (Assumption \ref{ass:monotone}).  The problem is that strict monotonicity is often violated in real-world applications.  For example, if a link has a flow-independent constant cost, then $u(\vx)$ is monotone but not strictly monotone.  In this case, the condition given in Proposition \ref{prop:main-1} may fail to secure convergence to MEUE for the CULO model, as illustrated in the following counterexample.

\textbf{Counterexample.} Consider a network consisting of three parallel routes, both with constant costs of 1, 1, and 2, respectively. The set of UE strategies is readily described as follows $$\sP^* = \{[\evp_1^*, \evp_2^*, \ \evp_3^* ] \in \sR_+^{3}: \evp_1^* +  \evp_2^* = 1, \evp_3^* = 0\}.$$
Since the network is parallel, it is easy to verify $\ker(\mLambda) \cap \ker(\mSigma)$ is an empty set. As a result, any UE strategy $\vp^* \in \sP^*$ would satisfy the general proportionality condition.  Thus, no matter how we set the initial link valuation $\vv^0 = [\evv_1^0,  \evv_2^0,  \evv_3^0]^{\T}$,  Proposition \ref{prop:main-1} asserts that forming $\vp^0$ based on $\vv^0$ will lead the CULO model to the MEUE strategy --- this must be true because in this case any UE strategy would be considered the MEUE strategy per Proposition \ref{prop:proportionality}.  However, this is reductio ad absurdum since one can easily verify the only  MEUE strategy is $\bar \vp^* = [1/2, 1/2, 0]$. 
The problem here is that both Propositions \ref{prop:proportionality}  and \ref{prop:main-1} fail to hold due to the lack of strict monotonicity. 
Importantly,  Theorem \ref{thm:main} remains valid in this case, and so does Proposition \ref{prop:equal}. We leave it to the reader to verify that if started from $\vs^0 = [0, 0, 0]^{\T}$ (so that $\vp^0 = [1/3, 1/3, 1/3]^{\T}$),  the CULO model will converge to $[1/2, 1/2, 0]^{\T}$, the MEUE strategy.

\section{Exploration of route space}
\label{sec:generation}

Up to this point, we have required that \emph{all} routes be used in the initial strategy to ensure the convergence of CULO --- not only to the MEUE strategy but also to any UE solution (see Proposition \ref{thm:convergence-ue}). However, this requirement is impractical as enumerating all routes is an unbearable computational burden, even for networks of modest size. Nor is it necessary.  In fact, starting from any set that ``covers" the UE route set (covering a set means containing it as a subset) would suffice to secure convergence.  Intuitively, if CULO can reduce an initial strategy using all routes to a strategy only using UE routes, it must be capable of doing the same for an initial strategy using any ``cover" of all UE routes.  In this section, we shall show even predetermining such a cover is unnecessary. Instead, the cover can be ``constructed" iteratively in the evolution of the routing game.   This route generation process may be interpreted as the result of the travelers' exploration of the route space.

We assume travelers start the routing game with a subset of all available routes and on each day $t$ attempt to add to that set the ``best" route discovered on day $t - 1$, provided that route is not already in the set.
In Section \ref{sec:tracing-route}, we prove that CULO equipped with this simple route exploration scheme always converges to a UE strategy.  Yet, the convergence to the MEUE strategy is uncertain due to two complications. First, because the initial strategy no longer encompasses all routes, neither of the two conditions given in Section \ref{sec:two-conditions} seems applicable. Second, the exploration process may not uncover all UE routes.    In Section \ref{sec:tracing-link}, we propose a revised route exploration scheme that promises to resolve these issues.  While the theoretical guarantee can only be partially established, numerical experiments indicate the scheme is an effective heuristic for solving the MEUE problem.

\subsection{Convergence to UE}
\label{sec:tracing-route}

We use $\sK_+^t \subseteq \sK$ to represent the set of routes the travelers actively evaluate on each day and use $\vs_+^t \in \sR^{|\sK_+^t|}$ for the corresponding route valuation. At the end of each day, the travelers between each OD pair $w$ ``discover" the shortest route given the link cost $u(\vx^t)$ observed on that day, say $k^*$. If $k^* \not \in \sK_+^t$, it is added to $\sK_+^{t + 1}$ for possible exploration  on the next day.   Travelers need to initialize the valuation for the new route. This may be done based on past experience, for example, 
\begin{equation}
    s_{k^*}^{t + 1} = \min_{k \in \sK_w \cap \sK_+} \{\evs_k^t\},
    \label{eq:new-val}
\end{equation}
if route $k^*$ is believed to be as good as any route found so far. %
It is worth noting that this initial valuation has little impact on the convergence as long as it is finite. %

Algorithm \ref{alg:route-based} describes the revised CULO model, with the route exploration process described above detailed on Lines \ref{line:generation-start}--\ref{line:generation-end}. 
On Line \ref{line:culoupdate}, we updated the valuation of active routes assuming the proactivity parameter $\eta^t = 1$, which is but one of many possible choices that can ensure convergence.  %

\begin{algorithm}[ht]
\caption{CULO with route exploration and cumulative route valuation.}
\label{alg:route-based}
\footnotesize
\begin{algorithmic}[1]

\State Set $\sK_+^0 \subseteq \sK$ as a subset of routes such that $\sK_+^0 \cap \sK_w \neq \emptyset$ for all $w \in \sW$ and $\vs^0 = \vzero$ (a zero vector with length $|\sK_+|$).
\vspace{2pt}
\For{$t = 0, 1, \ldots$}
\vspace{2pt}
    \State Set $\mLambda_+^t$ and $\mSigma_+^t$ route-link and route-demand incidence matrices  corresponding to $\sK_+^t$. 
    \State Set $\vp_+^t = \vy_+^t / (\mSigma_+^t)^{\T} \mSigma_+^t \vy_+^t$, where $\vy_+^t = \exp(-r \cdot (\mLambda_+^t)^{\T} \vv^t)$. 
    \State Set $\vx^t = \mLambda_+^t \diag(\vq_+^t)  \vp_+^t$ and $\vu^t = u(\vx^t)$, where $\vq_+^t = (\mSigma_+^t)^{\T} \vd$.

    \State Update $\vs_+^{t + 1} = \vs_+^t + \vc_+^t$. \label{line:culoupdate}
    \State Set $\sK_+^{t + 1} = \sK_+^t$.
    \For{all $w \in \sW$}
    \label{line:generation-start}
        \State Find the shortest route $k^*$ based on $\vu^{t}$. 
        \If{$k^*\notin \sK_+^t$}
            \State Add $k^*$ into $\sK_+^{t + 1}$.
            \label{line:add}
            \State Initialize  $s_{k^*}^{t + 1} < \infty$, e.g., following the scheme \eqref{eq:new-val}, and add it to $\vs_+^{t + 1}$ as a new element. 
             \label{line:new-val}
        \EndIf

    \EndFor
    \label{line:generation-end}

\EndFor
\end{algorithmic}
\end{algorithm}

The next result establishes the convergence of Algorithm \ref{alg:route-based} to a UE strategy of the original routing game. 
\begin{proposition}
\label{prop:alg-1}
    By setting the exploration parameter $r$ as a sufficiently small constant in Algorithm \ref{alg:route-based}, 
    the active route set $\sK_+^t$ will converge to a fixed $\overline \sK_+ \subseteq \sK$ and the route choice strategy $\vp_+^t$  will converge to a fixed point $\overline{\vp}_+ \in \overline \sP_+ = \{\vp_+ \in \sR_+^{|\overline \sK_+|}: \overline \mSigma_+ \vp_+ = \vone \}$, where  $\overline \mSigma_+$ is the route-demand incidence matrices corresponding to $\overline \sK_+$. Furthermore, $\overline \vp = [\overline \vp_+; \vzero] \in \sP^*$ (by $\overline \vp = [\overline \vp_+; \vzero]$, we mean a vector in $\sP$ such that $(\evp_k)_{k \in \overline \sK_+} = \overline{\vp}_+$ and $(\evp_k)_{k \in \sK \setminus \overline \sK_+} = \vzero$).
\end{proposition}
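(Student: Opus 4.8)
The plan is to split the argument into two stages. First I would show that the active route set freezes after finitely many days; then, restricted to that frozen set, Algorithm~\ref{alg:route-based} becomes ordinary CULO, so Proposition~\ref{thm:convergence-ue} delivers convergence of $\vp_+^t$, and a final shortest-path optimality argument promotes the restricted equilibrium to a UE of the full game.

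For the first stage, note that Lines~\ref{line:generation-start}--\ref{line:generation-end} only ever append routes: the update begins with $\sK_+^{t+1} = \sK_+^t$ and inserts the discovered shortest route only when it is new. Hence $\sK_+^0 \subseteq \sK_+^1 \subseteq \cdots \subseteq \sK$, a non-decreasing chain of subsets of the finite set $\sK$, which must therefore stabilize: there exist a finite $T$ and a set $\overline \sK_+$ with $\sK_+^t = \overline \sK_+$ for all $t \ge T$. This already settles the convergence of the route set.

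For the second stage, observe that for $t \ge T$ no route is added, so Line~\ref{line:culoupdate} is exactly the CULO recursion~\eqref{eq:culo} with proactivity $\eta^t = 1$ run on the fixed, finite route pool $\overline \sK_+$ from the finite initial valuation $\vs_+^T$. Since $u$ remains continuously differentiable and strictly monotone when restricted to the feasible flows of $\overline \sK_+$ (Assumptions~\ref{ass:differentiable}--\ref{ass:monotone}), I would choose $r$ so small that $1 < 1/(2rL)$, i.e.\ $r < 1/(2L)$ with $L$ the Lipschitz constant of the full route cost $c(\vp)$ (which dominates that of any restriction). Then condition~(ii) of Proposition~\ref{thm:convergence-ue} applies to the restricted routing game and yields $\vp_+^t \to \overline \vp_+$, a UE strategy of the game played on $\overline \sK_+$; in particular, every positively-used route in $\overline \sK_+$ attains the restricted minimum cost $\min_{k \in \overline \sK_+ \cap \sK_w} c_k(\overline \vp)$ for its OD pair $w$ (the route costs at $\overline \vp_+$ coincide with those at the extended $\overline \vp$, since inactive routes carry zero flow and hence produce the same link flow).

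The main obstacle is the last step: upgrading this \emph{restricted} equilibrium to $\overline \vp = [\overline \vp_+; \vzero] \in \sP^*$, i.e.\ showing that no unexplored route undercuts the active ones. I would argue by contradiction. Suppose some $\tilde k \in \sK_w \setminus \overline \sK_+$ were strictly cheaper under the limiting link cost $\overline \vu = u(\bar \mLambda \overline \vp)$ than every route of $\overline \sK_+ \cap \sK_w$. Route costs are continuous in $\vp$ (Assumption~\ref{ass:differentiable}) and $\vp_+^t \to \overline \vp_+$, so this strict inequality would persist for all large $t \ge T$; the shortest-path step on Lines~\ref{line:generation-start}--\ref{line:generation-end} would then return a route outside $\overline \sK_+$ and append it, contradicting the freezing of $\sK_+^t$. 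Consequently $\min_{k \in \sK_w} c_k(\overline \vp) = \min_{k \in \overline \sK_+ \cap \sK_w} c_k(\overline \vp)$, so the globally shortest cost is already realized inside the active set. Combining this with the restricted-UE property—used routes sit at the restricted (hence global) minimum, while every unused active route and every inactive route carries zero flow—verifies the Wardrop condition of Proposition~\ref{prop:ue-vi} against the full route set, giving $\overline \vp \in \sP^*$. The delicate point throughout is the discreteness of shortest-path selection and possible cost ties; the strict-inequality-plus-continuity device is precisely what sidesteps it.
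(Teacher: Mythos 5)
Your proposal is correct and follows essentially the same three-step route as the paper's own proof: finite-time stabilization of the active route set, convergence of the restricted CULO dynamics via Proposition \ref{thm:convergence-ue} with $r$ small enough to compensate for $\eta^t = 1$, and a continuity-based contradiction showing no undiscovered route can undercut the active set at the limit. The only cosmetic differences are that you justify stabilization by the monotone-chain argument on subsets of the finite set $\sK$ (where the paper counts acyclic paths) and you phrase the final contradiction directly in terms of a hypothetical cheaper outside route rather than the paper's global minimizer $k_0$; both are equivalent.
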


\begin{proof}
    See Appendix \ref{app:alg-1} for detailed proof.
\end{proof}
While Algorithm \ref{alg:route-based} always converges to a UE strategy, its convergence to the MEUE strategy is not guaranteed. In part, the problem is caused by the fact that the initial valuation of newly added routes may not always adhere to the general proportionality condition. We address this issue in the next section.

\subsection{Convergence to MEUE}
\label{sec:tracing-link}
As discussed in Section \ref{sec:main}, the convergence to MEUE may be ensured if (i) the routes under travelers' consideration cover all UE routes and (ii) route valuations are obtained from shared link valuations (Condition (B), see Proposition \ref{prop:equal}).  In this section, we discuss how these conditions may be satisfied in the context of route exploration.  
 
Instead of evaluating the newly discovered route in an ad hoc manner, travelers should rely on their past experience of link usage to conform to Condition (B). That is, they anticipate their route experience based on the experience they had on links used by that route.  In order for this initialization scheme to work, the cost accumulation in CULO should occur at the link level.
More specifically, we  assume the travelers keep a record of valuations on links as a vector $\vv^t \in \sR^{|\sA|}$ ($t = 0, 1, \ldots$), and update it using a cumulative scheme similar to \eqref{eq:culo}, i.e., 
\begin{equation}
    \vv^t = \vv^{t - 1} + \eta^t \cdot u(\vx^{t - 1}),
\end{equation}
starting from some $\vv^0 \in \sR^{|\sA|}$.
Based on $\vv^t$, all routes in $\sK_+^t$ can be evaluated --- whether a route is new or old --- as $\vs_+^t = (\mLambda_+^t)^{\T} \vv^t$.%

The new scheme gives rise to Algorithm \ref{alg:link-based}. On Line \ref{line:vv}, we set the proactivity parameter $\eta^t = 1$, similar to Algorithm \ref{alg:route-based}. The route exploration process, described in Lines \ref{ln:start}--\ref{ln:end}, requires no initial valuation of the new route because all route evaluations are performed on Line \ref{line:eval}.

\begin{algorithm}[ht]
\caption{CULO with route exploration and cumulative link valuations.}
\label{alg:link-based}
\footnotesize
\begin{algorithmic}[1]
\State Set $\sK_+^0 \subseteq \sK$ as a subset of routes such that $\sK_+^0 \cap \sK_w \neq \emptyset$ for all $w \in \sW$ and $\vv^0 = \vzero$ (a zero vector with length $|\sA|$).
\vspace{2pt}
\For{$t = 0, 1, \ldots$}
\vspace{2pt}
    \State Set $\mLambda_+^t$ and $\mSigma_+^t$ route-link and route-demand incidence matrices  corresponding to $\sK_+^t$. 
    \State Set $\vp_+^t = \vy_+^t / (\mSigma_+^t)^{\T} \mSigma_+^t \vy_+^t$, where $\vs_+^{t} = (\mLambda_+^t)^{\T} \vv^t$ and $\vy_+^t = \exp(-r \cdot \vs_+^t)$. \label{line:eval} 
    \State Set $\vx^t = \mLambda_+^t \diag(\vq_+^t)  \vp_+^t$ and $\vu^t = u(\vx^t)$, where $\vq_+^t = (\mSigma_+^t)^{\T} \vd$.

    \State Update $\vv^{t + 1} = \vv^t + \vu^t$. \label{line:vv}
    \State Set $\sK_+^{t + 1} = \sK_+^t$.
    \For{all $w \in \sW$}
    \label{ln:start}
        \State Find the shortest route $k^*$ based on $\vu^{t}$. 
        \If{$k^*\notin \sK_+^t$}
            \State Add $k^*$ into $\sK_+^{t + 1}$.
        \EndIf        
    \EndFor
    \label{ln:end}
\EndFor
\end{algorithmic}
\end{algorithm}

If Algorithm \ref{alg:link-based} is initialized from $\sK_+^0 = \sK$ (hence $\mLambda_+^t = \mLambda$ for all $t \geq 0$),  we have
\begin{equation}
    \vs^t = \mLambda^{\T} \vv^t =  \mLambda^{\T} \left( \vv^0 + \sum_{i = 0}^{t - 1} u(\vx^i)\right) = \mLambda^{\T} \vv^0 + \sum_{i = 0}^{t - 1}  \mLambda^{\T} u(\vx^i) = \mLambda^{\T} \vv^0 + \sum_{i = 0}^{t - 1} c(\vp^i),
\end{equation}
which is the accumulated route cost.  Since the validity of  Proposition \ref{prop:alg-1} does not rely on the initial valuation of newly added routes (as long as it is finite),  the convergence to a UE strategy by Algorithm \ref{alg:link-based} can be similarly established. 
We next discuss the conditions under which Algorithm \ref{alg:link-based} converges to the MEUE strategy. 
\begin{proposition}
\label{prop:alg-2}
    Suppose that Algorithm \ref{alg:link-based} converges to a fixed active route set $\overline \sK_+$ and a fixed strategy $\overline{\vp}_+$. If $\overline \sK_+ \supseteq \cup_{\vp^*} \supp{(\vp^*)}$, then  $\overline \vp = [\overline \vp_+; \vzero] \in \sP^*$ must be the MEUE strategy. 
\end{proposition}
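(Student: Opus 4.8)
The plan is to wait until the active route set settles, recognize the tail of Algorithm~\ref{alg:link-based} as an ordinary link-valuation CULO run on the sub-network induced by $\overline{\sK}_+$, and then transfer the MEUE guarantee of Proposition~\ref{prop:main-1} from that sub-network to the full network. First I would note that $\sK_+^t$ is non-decreasing (routes are added but never deleted) and lives in the finite set $\sK$, so its convergence to $\overline{\sK}_+$ means $\sK_+^t = \overline{\sK}_+$ for every $t \geq T$ with some finite $T$. For $t \geq T$ the exploration block (Lines~\ref{ln:start}--\ref{ln:end}) adds nothing, so the iteration is exactly the CULO model on the route set $\overline{\sK}_+$, with route valuation tied to the accumulating link valuation through $\vs_+^t = (\overline{\mLambda}_+)^{\T}\vv^t$, where $\overline{\mLambda}_+$ is the link--route incidence of $\overline{\sK}_+$. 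Reading the tail as a fresh run begun on day $T$, its initial route valuation is $\vs_+^T = (\overline{\mLambda}_+)^{\T}\vv^T$, precisely the link-derived form demanded by Condition~(B).

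Next I would apply Proposition~\ref{prop:main-1} to this sub-network, which inherits Assumptions~\ref{ass:differentiable}--\ref{ass:monotone} from $u$ and has finite initial valuation: its limiting strategy $\overline{\vp}_+$ is the MEUE strategy of the sub-network. Equivalently, Lemma~\ref{lm:proportionality} applied to the tail run yields $\langle \ve, \log(\overline{\vp}_+)\rangle = -r\langle \overline{\mLambda}_+ \ve, \vv^T\rangle = 0$ for every $\ve \in \ker(\overline{\mSigma}_+)\cap\ker(\overline{\mLambda}_+)$, so $\overline{\vp}_+$ satisfies the general proportionality condition of the sub-network and Proposition~\ref{prop:proportionality} certifies it as the sub-network MEUE.

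It then remains to show the sub-network MEUE coincides with the full MEUE, and this is exactly where $\overline{\sK}_+ \supseteq \cup_{\vp^*}\supp(\vp^*) = \sK^*$ is used. Let $\sP^*_+$ denote the UE set of the sub-network. By the analogue of Proposition~\ref{prop:alg-1} we already have $\overline{\vp} = [\overline{\vp}_+;\vzero] \in \sP^*$, whose link flow is the unique full UE link flow $\vx^*$ (Proposition~\ref{prop:solution-set}); since $\overline{\vp}_+$ realizes $\vx^*$ as a sub-network UE, strict monotonicity makes $\vx^*$ the unique sub-network UE link flow as well. Proposition~\ref{prop:solution-set} then describes both $\sP^*$ and $\sP^*_+$ as the fibres of the link-flow map over $\vx^*$, and because $\sK^* \subseteq \overline{\sK}_+$ every full UE strategy is supported inside $\overline{\sK}_+$. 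Hence the restriction $\vp^* \mapsto \vp^*_+$ and the zero-extension $\vp_+ \mapsto [\vp_+;\vzero]$ are mutually inverse bijections between $\sP^*_+$ and $\sP^*$; since appending zero components leaves the negative-entropy objective $\phi$ unchanged, the two MEUE programs share their optimal value and their minimizers correspond, so $\overline{\vp} = [\overline{\vp}_+;\vzero]$ is the full MEUE strategy.

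I expect the last step to be the main obstacle: one must rule out that a sub-network UE places flow on a route of $\overline{\sK}_+ \setminus \sK^*$, for otherwise the zero-extension would escape $\sP^*$ and the bijection would break. The clean way to close this gap is to insist on the link-flow pinning of Proposition~\ref{prop:solution-set} (a consequence of strict monotonicity) rather than the weaker ``no shorter route at the limit'' property, since a route tied at the minimum cost but absent from $\sK^*$ is excluded only because using it would move the link flow off $\vx^*$.
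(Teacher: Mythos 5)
Your proof is correct, and the first half (recognizing the tail of Algorithm~\ref{alg:link-based} as a CULO run on the reduced game with $\vs_+^T = (\overline{\mLambda}_+)^{\T}\vv^T$ satisfying Condition~(B), hence $\overline{\vp}_+$ is the sub-network MEUE) is exactly the paper's argument. Where you genuinely diverge is the transfer step from the sub-network to the full network. You build a two-sided correspondence: invoking Proposition~\ref{prop:solution-set} for \emph{both} games (legitimate, since the sub-network's feasible link flows lie inside $\sX$, so strict monotonicity is inherited), you identify $\sP^*$ and the sub-network UE set $\overline{\sP}_+^*$ as fibres of the link-flow map over the common $\vx^*$, conclude that restriction and zero-extension are mutually inverse entropy-preserving bijections, and transfer the minimizer. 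The paper instead runs a one-sided inequality: it only uses the easy direction --- restricting any $\vp^* \in \sP^*$ to $\overline{\sK}_+$ (all dropped entries are zero by the hypothesis $\overline{\sK}_+ \supseteq \sK^*$) yields a sub-network UE with the same entropy --- to get $\min_{\vp_+^* \in \overline{\sP}_+^*}\phi_+(\vp_+^*) \leq \min_{\vp^*\in\sP^*}\phi(\vp^*)$, and then closes the loop with $\overline{\vp}\in\sP^*$ and $\phi(\overline{\vp}) = \phi_+(\overline{\vp}_+) = \min\phi_+$, forcing equality. The paper's route is leaner: it never needs the fibre characterization of the sub-network UE set, so it sidesteps entirely the worry you flag in your last paragraph (a sub-network UE loading a route of $\overline{\sK}_+\setminus\sK^*$), whereas you must resolve that worry via link-flow pinning. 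Your route costs a bit more machinery but yields a stronger structural conclusion --- the two MEUE programs are literally the same program up to zero-padding --- and your closing paragraph does correctly close the gap you identify, so there is no actual hole in your argument.
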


\begin{proof}
    See Appendix \ref{app:alg-2} for detailed proof.
\end{proof}
In practice, Algorithm \ref{alg:link-based} cannot always discover a cover of all UE routes, though as we have seen, it can find a cover for the routes used by at least one UE strategy.  A potential remedy is to add some random noises to the current route costs to encourage route exploration.  For example, we may rewrite Line \ref{line:vv} in Algorithm \ref{alg:link-based} as
\begin{equation}
    \vv^{t + 1} = \vv^t + u(\vx^t) + \vepsilon^t, 
    \label{eq:noise}
\end{equation}
where $\vepsilon^t \in \sR^{\sA}$ is a vector of random noises. The variance of $\vepsilon^t$ may vary with $t$, typically starting at a relatively large value (in favor of more aggressive exploration) but gradually decreasing as time proceeds.  Of course, it is difficult to establish any theoretical guarantee for such heuristics, and its performance may vary with problems and parameters.  However, the numerical experiments reported in Section \ref{sec:experiments} will provide preliminary evidence about its effectiveness. 

We close this section by noting that Algorithm \ref{alg:link-based}, in addition to being a behavioral instrument to the proof of convergence, may also be used as a viable alternative to existing specialized algorithms for solving the MEUE problem. 
Implementing Algorithm \ref{alg:link-based} is simple as it requires little more than solving the standard shortest route problem and managing the routes discovered in the dynamical process.  Moreover, it is a strict zeroth-order algorithm, meaning all that is needed to feed into the algorithm is link costs.  Without the need to exploit special problem structures or manipulate complicated graph objects,  Algorithm \ref{alg:link-based} can be quickly implemented to find an approximate solution to the MEUE problem, as well as other non-standard UE routing problems.

\section{Comparison with other dynamical models}
\label{sec:other}

In Sections \ref{sec:kl-culo} and \ref{sec:two-conditions}, we have shown that the CULO model possesses the following properties. 
{
    \begin{itemize}
    \item \textit{Global Stability (GS): the dynamical process converges to a UE strategy regardless of the initial point.    A dynamical process must possess this property to qualify as a behavioral model of UE, i.e., explaining why UE can be reached by reasonable users.  }
    \item \textit{Trajectory Stability (TS): the limiting point of the dynamical process is uniquely determined by its initial point, independent of other parameters integral to the process.  By ensuring the outcome of the dynamical process is not affected by any behavioral contents, TS enhances its robustness.}
    \item \textit{Route Conservation (RC): if the initial point of the dynamical process uses all routes, so does the limiting point.  RC means no route is left behind throughout the dynamical process, a necessary condition of entropy maximization. }
    \item \textit{Proportionality Conservation (PC): if the initial point of the dynamical process satisfies the general proportionality condition, so does the limiting point.  PC is related to RC. The difference is that the general proportionality condition is a sufficient condition for entropy maximization.} 
\end{itemize}
}

\smallskip
{In the literature, there is a group of continuous-time dynamical models of the routing game that are globally stable under Assumptions 
\ref{ass:differentiable}--\ref{ass:monotone}. Given the immensity of the literature on this topic, we shall limit our attention to some of the most well-known models, namely the best-response dynamic \citep{gilboa1991social}, the projection dynamic \citep{friesz1994day, zhang1996local}, the Smith dynamic \citep{smith1984stability}, and the replicator dynamic \citep{taylor1978evolutionary}. A key difference between these models and a discrete-time model like CULO is how the time between two consecutive decision epochs is treated.  
In continuous-time models, this time shrinks to zero, which means travelers' route choice is viewed as ``continuously"  adjustable, and as a result, the potential impact of the rate of this adjustment on convergence is ignored \citep{watling1999stability}. However, whether the model is employed to justify a certain equilibrium as the reasonable outcome of the routing game or develop a solution algorithm for finding such equilibrium, the rate of adjustment cannot be arbitrarily small.  In other words, a continuous model can be ``operationalized" only when it is discretized. 
 Hence, in this section, we discretize these continuous-time models and compare their discrete-time versions with the CULO model in terms of their conformity to the above four properties.} %

{To reveal the mechanism of discretization, let us first present the continuous-time version of the CULO model.}    If both the decision epoch and the proactivity parameter $\eta$ shrink to  0, the CULO model can be written as the following differential equation system
\begin{equation}
\begin{cases}
    \dot \vs = c(\vp), \\
    \vp = q_r(\vs),
    \label{eq:culo-ode}
\end{cases}
\end{equation}
in which $\vs$ increases continuously in time at the \textit{rate} of $c(\vp)$. Accordingly, the original model --- which updates $\vs^{t + 1} = \vs^t + \eta^t \cdot c(\vp^t)$ ---  may be viewed as a numerical solution algorithm for the differential equation \eqref{eq:culo-ode} based on Euler's method \citep[see, e.g.,][for an introduction]{butcher2016numerical}, in which $\eta^t$ may be interpreted as a step size. 
As we shall see, discretizing other continuous-time models may involve parameters playing a similar role as $\eta^t$. %
For simplicity, we shall use the same symbol $\eta^t$ (or $\eta$, if the parameter is a constant) to represent such parameters in the remaining of this section.

\subsection{Best-response dynamic}
\label{sec:bestresponse}

\subsubsection{Description and discretization}
The best-response dynamic \citep{gilboa1991social} assumes travelers  ``receive revision opportunities at a unit rate, and use these opportunities to switch to a current best response" \citep{sandholm2015population}. Given a route choice $\vp \in \sP$, we define $B(\vp) = \argmin_{\vp' \in \sP}\,\langle \vp', c(\vp)\rangle$ as the best response of the travelers given the cost received on the previous day. The best-response dynamic may be written as
\begin{equation}
    \dot{\vp} \in B(\vp) - \vp,
    \label{eq:best-continuout}
\end{equation}
which is a differential inclusion rather than a differential equation, as the best response may not be unique (e.g., multiple minimum cost routes). The best-response dynamic is often used to explain why Nash equilibrium may be reached in finite games (e.g., Rock-Paper-Scissors)  \citep[][Section 13.5.2]{sandholm2015population}. Discretizing  Equation \eqref{eq:best-continuout} using Euler's method gives rise to
\begin{equation}
    \vp^{t + 1}  - \vp^t \in  \eta^t \cdot (B(\vp^t) - \vp^t),
    \label{eq:best-discrete}
\end{equation}
where $\eta^t$ is the step size. To ensure $\vp^{t + 1} \in \sP$, the parameter  $\eta^t$ must be less than 1.

\subsubsection{Properties}

{GS}. Applying the discrete model \eqref{eq:best-discrete} equals solving the routing game with the celebrated Frank-Wolfe algorithm \citep{frank1956algorithm}. It is well known the convergence of that algorithm can be ensured only if the step size decreases progressively at a proper pace  \citep[e.g., setting $\eta^t = 1 / (t + 1)$, as in the so-called method of successive average,][]{nocedal1999numerical}.

TS, RC, and PC.  The model does not satisfy TS even in its continuous-time version. Take the counterexample raised in Section \ref{sec:two-conditions}, where the first two routes have a constant cost of 1, lower than the constant cost of the third route. Hence, if the travelers are initially assigned to route 3, they may end up switching to route 1 or route 2 on the next day, as both give the best response, which means TS is not guaranteed. Moreover, since the limiting point of the model cannot be determined by the initial point, there would be no definitive answers on the adherence to RC and PC either.

\subsection{Projection dynamic}

\subsubsection{Description and discretization}

In the evolutionary game literature, \citet{friesz1994day}'s model and \citet{zhang1996local}'s model are often referred to as the target projection dynamic and the projection dynamic, respectively; see Section 5 in \cite{sandholm2005excess} for an in-depth discussion. Both models were motivated by the projection method for solving routing games \citep{bertsekas1982projection, dafermos1983iterative}. According to this method, the travelers' route choice strategy is updated by 
\begin{equation}
    \vp^{t + 1} = f_{\eta}(\vp^t), \quad \text{where}~f_{\eta}(\vp) = \argmin_{\vp' \in \sP} \|\vp' - (\vp - \eta \cdot c(\vp))\|_2.
    \label{eq:projection}
\end{equation}
The target projection dynamic and the projection dynamic are both derived from Equation \eqref{eq:projection}, though in a different manner. The former fixes $\eta > 0$ and then sets
\begin{equation}
    \dot{\vp} = f_{\eta}(\vp) - \vp = \lim_{\epsilon \to 0} \frac{\vp_{\epsilon} - \vp}{\epsilon}, \quad \text{where}~\vp_{\epsilon} = (1 - \epsilon) \cdot \vp + \epsilon \cdot f_{\eta}(\vp),
\end{equation}
whereas the latter directly lets $\eta \to 0$ in Equation \eqref{eq:projection}, which gives rise to
\begin{equation}
    \dot{\vp} = \lim_{\eta \to 0} \frac{f_{\eta}(\vp) - \vp}{\eta}.
\end{equation}
Therefore, rather than discretizing the two models separately, it may be more natural to directly employ Equation \eqref{eq:projection} as the discrete-time version of these two projection dynamics.

\subsubsection{Properties}

GS. To ensure the convergence of Model \eqref{eq:projection}, the step size $\eta$  may be simply fixed as a sufficiently small constant. As shown by \citet{marcotte1995convergence} (see their Theorem 2.1), a sufficiently small $\eta$ can always ensure the convergence of $\vp^t$ to UE whenever the route cost function $c(\vp)$ is cocoercive, a condition slightly stronger than monotonicity. In particular, when $\nabla c(\vp)$ is symmetric, $c(\vp)$ is cocoercive as long as it is monotone; see Proposition 2.1 in \citet{marcotte1995convergence}.

TS. While we are unable to construct a rigorous proof, we postulate that the discrete model \eqref{eq:projection} is likely to satisfy TS. Specifically, our conjecture is that, given $\vp^0 \in \sP$, the limiting point of the model, denoted as $\bar \vp^* \in \sP^*$, would satisfy
\begin{equation}
    \bar \vp^* = \argmin_{\vp^* \in \sP^*}~\|\vp^* - \vp_0\|_2,
\end{equation}
i.e., the dynamic will reach a UE strategy in $\sP^*$ that minimizes the Euclidean distance from $\vp^0$.  Obviously, the convergence is true if only one iteration is needed before the model converges.  We shall test this hypothesis with numerical experiments but leave a rigorous analysis to a future study. 

RC and PC.  Our reading of the literature does not provide any affirmative answer about these properties. Intuitively, the project dynamic is unlikely to have them because Euclidean projection (as used in the discrete model \eqref{eq:projection}), unlike KL projection, tends to produce sparse solutions \citep{chen2011projection}.

\subsection{Smith dynamic and Replicator dynamic}

\subsubsection{Description and discretization}

We put the Smith dynamic \citep{smith1984stability} and the replicator dynamic \citep{taylor1978evolutionary} together because they are closely related. Below, we first describe the models before turning to the behavioral interpretation. 

{The Smith dynamic} is defined by the following differential equation.
\begin{equation}
    \dot{\evp_k} = \sum_{k' \neq k, k' \in \sK_w} \evp_{k'} \cdot [c_{k'}(\vp) - c_{k}(\vp)]_+ - \evp_k \cdot \sum_{k' \neq k, k' \in \sK_w} [c_{k}(\vp) - c_{k'}(\vp)]_+.
    \label{eq:smith-original}
\end{equation}
By applying Euler's method to Equation \eqref{eq:smith-original}, we obtain a difference equation that reads
\begin{equation}
    \evp_k^{t + 1} - \evp_k^{t} = \sum_{k' \neq k, k' \in \sK_w} p_{k'}^t \cdot \pi_{k', k}^t - \evp_k^{t} \cdot \sum_{k' \neq k, k' \in \sK_w} \pi_{k, k'}^t,
    \label{eq:discrete-time}
\end{equation}
where $\pi_{k, k'}^t = \eta \cdot [c_{k}(\vp^t) - c_{k'}(\vp^t)]_+$ ($\eta > 0$ is the step size).

\textbf{The replicator dynamic} has many equivalent forms \citep[see, e.g.,][Example 13.6]{sandholm2015population}, one of which reads
\begin{equation}
    \dot{\evp_k} = \sum_{k' \neq k, k' \in \sK_w} \evp_{k'} \cdot \evp_k \cdot [c_{k'}(\vp) - c_{k}(\vp)]_+ - \evp_k \cdot \sum_{k' \neq k, k' \in \sK_w} \evp_{k'} \cdot [c_{k}(\vp) - c_{k'}(\vp)]_+.
    \label{eq:replicator-new}
\end{equation}
First suggested by \citet{schlag1998imitate},  Equation \eqref{eq:replicator-new} is also known as the proportional pairwise comparison dynamics. %
By applying Euler's method to Equation \eqref{eq:replicator-new}, we readily obtain a difference equation
\begin{equation}
    \evp_k^{t + 1} - \evp_k^{t} = \sum_{k' \neq k, k' \in \sK_w} \evp_{k'}^t \cdot \gamma_{k', k}^t - \evp_k^{t} \cdot \sum_{k' \neq k, k' \in \sK_w}  \gamma_{k, k'}^t,
    \label{eq:discrete-replicator}
\end{equation}
where $\gamma_{k, k'}^t = \eta \cdot \evp_{k'}^t \cdot [c_{k}(\vp^t) - c_{k'}(\vp^t)]_+$ ($\eta$ is  the step size). %

\textbf{Behavior interpretation.} 
On each day $t$, if the probability of a traveler switching from their current route $k \in \sK_w$ to a \textit{different} route $k' \in \sK_w$ is set as $\pi_{k, k'}^t$, then the first and the second terms in Equation \eqref{eq:discrete-time} represent, respectively, the proportion of travelers switching from other routes to route $k$ and that from route $k$ to other routes. The same interpretation applies to Equation \eqref{eq:discrete-replicator} by replacing $\pi_{k, k'}^t$ with $\gamma_{k, k'}^t$.
In both interpretations, the probability of the traveler sticking to  their original choice $k$ is one less the total probabilities of changing to other routes, i.e., $\pi_{k,k}^t := 1 - \eta \cdot \sum_{k' \neq k, k' \in \sK_w} [c_{k}(\vp^t) - c_{k'}(\vp^t)]_+ \quad$ for the Smith Dynamic and $\gamma_{k,k}^t := 1 - \eta \cdot \sum_{k' \neq k, k' \in \sK_w} \evp_{k'}^t \cdot [c_{k}(\vp^t) - c_{k'}(\vp^t)]_+$ for the replicator dynamic. 

To ensure these probabilities are non-negative, $\eta$ must be sufficiently small. Here, we note that continuous-time models implicitly assume $\eta = 0$, and hence, the feasibility constraint can always be secured. Behaviorally, the smaller the value of $\eta$, the less willing the traveler is to explore new routes.

\textbf{Comparison.} The two models are almost identical, except for the factor $\evp_{k'}^t$ added before $[c_{k}(\vp^t) - c_{k'}(\vp^t)]_+$ by the replicator dynamic to scale the switching probability.  \citet{schlag1998imitate} explains the scalar as follows. Suppose travelers can only observe the cost of the route they take but are allowed to gather route information from a randomly picked fellow traveler.  Then the scalar $\evp_{k'}^t$ may be interpreted as the probability of the random traveler taking route $k'$. To understand how the scalar makes a difference,  consider the probability that a traveler currently on route $k$ switches to a new route $k'$ on day $t$, which nobody selected on that day (hence $\evp_{k'}^t = 0$).  Under the Smith dynamic, the switching probability would be $\eta \cdot [c_{k}(\vp^t) - c_{k'}(\vp^t)]_+$, which is positive as long as the cost of route $k'$ is strictly lower than that of route $k$. In contrast, the switching probability given by the replicator dynamic is $\eta \cdot \evp_{k'}^t \cdot [c_{k}(\vp^t) - c_{k'}(\vp^t)]_+ = 0$. The rationale behind the replicator dynamic is that, as the traveler has nowhere to learn about the better route $k'$, they would have no chance to take it. On the other hand, the Smith dynamic would better fit the situation where every traveler has access to full information all the time.

\subsubsection{Properties}

GS.  It is straightforward to show that the discrete version of either model converges to UE when $\eta$ is fixed as a sufficiently small constant.

TS, RC, and PC.  We shall show the replicator dynamic and CULO are equivalent in continuous time, which might shed light on the properties of the former. Indeed, differentiating the second line $\vp = q_r(\vs)$ in Equation \eqref{eq:culo-ode} with respect to time yields 
\begin{align}
    \frac{\dot{\evp_k}}{r} &=
    - 
    \frac{\exp(-r \cdot \evs_{k})}{\sum_{k' \in \sK_w} \exp(-r \cdot \evs_{k'})} \cdot \left(\dot \evs_k - \sum_{k' \in \sK_w} \frac{\exp(-r \cdot \evs_{k'})}{\sum_{k' \in \sK_w} \exp(-r \cdot \evs_{k'})} \cdot \dot \evs_{k'}  \right) \nonumber \\ &= - \evp_k \cdot \left(\dot \evs_k - \sum_{k' \in \sK_w} \evp_{k'} \cdot \dot \evs_{k'}\right)
    \nonumber = -\evp_k \cdot \left(c_k(\vp) - \sum_{k' \in \sK_w} \evp_{k'} \cdot c_{k'}(\vp)\right) = -\evp_k \cdot \sum_{k' \in \sK_w} \evp_{k'} \cdot( c_k(\vp) -  c_{k'}(\vp))\nonumber \\[5pt]
    &=\evp_k \cdot \sum_{k' \neq k, k' \in \sK_w} \evp_{k'} \cdot  [c_{k'}(\vp) - c_{k}(\vp)]_+ - \evp_k \cdot \sum_{k' \neq k, k' \in \sK_w} \evp_{k'} \cdot [c_{k}(\vp) - c_{k'}(\vp)]_+.
    \label{eq:revelation}
\end{align}
The reader can verify that Equations \eqref{eq:revelation} and \eqref{eq:replicator-new} are identical except for a re-scaling of time by $r$. This revelation is surprising as the two DTD models have distinct behavior mechanisms in their respective discrete forms --- one based on the logit model while the other based on pairwise route switching --- and have not been previously connected with each other. Yet, the above analysis indicates they are closely related when the decision epoch shrinks to zero.

Based on the above finding, we postulate that the behavior of the discrete replicator dynamic \eqref{eq:discrete-replicator} may be similar to that of CULO if a sufficiently small step size $\eta$ is adopted.  Numerical experiments presented in the next section will show the model tends to (i) satisfy RC if $\eta$ is sufficiently small and (ii) satisfy PC approximately when $\eta \to 0$, but uncovers no evidence confirming its compliance with TC.  A thorough theoretical investigation of this model and other discrete models discussed in this section is left to a future study.

\section{Numerical results}
\label{sec:experiments}
To validate the analysis results presented in the previous sections, numerical experiments are performed on two networks: the 3N4L, as shown earlier in Figure \ref{fig:3n4l}, and the Sioux-Falls network \citep{leblanc1975algorithm}, which has 24 nodes, 76 links, and 528 OD pairs.  For a route choice strategy $\vp$, we use the relative gap of its corresponding link flow $\vx \in \sX = \{\vx: \vx = \bar \mLambda \vp, \ \vp \in \sP\}$, denoted as $\delta(\vx)$, to measure its distance from WE. The relative gap is computed by
\begin{equation}
    \delta(\vx) =-\frac{\langle u(\vx), \vx' - \vx \rangle}{\langle u(\vx), \vx \rangle}, \quad \vx' \in \argmin_{\vx'' \in \sX}~\langle u(\vx), \vx''  \rangle.
\end{equation}
A solution is accepted as a UE solution whenever $\delta$ is smaller than a predefined value, taking a default of $10^{-5}$ in this study.      Unless otherwise stated, we also fix the proactivity parameter $\eta^t$ in the CULO model at 1 in all experiments.  We next provide some details of the two networks.

\textbf{3N4L.} The number of travelers from node 1 to node 4 is 10. Given the flow $\evx_a$ on link $a$, we model its costs as $u_a = \evh_a + \evw_a \cdot \evx_a^4$, where $[h_1, h_2, h_3, h_4]^{\T} = [4, 20, 1, 30]^{\T}$ and $[\evw_1, \evw_2, \evw_3, \evw_4]^{\T} = [1, 5, 30, 1]^{\T}$. Under this setting, the set of UE strategies can be written as
\begin{equation}
    \sP^* = \{\vp^*: \vp^* = [0.3 - \lambda, 0.4 - \lambda, 0.3 + \lambda, \lambda]^{\T}, \ \lambda \in [0, 0.3] \}.
    \label{eq:ue-set}
\end{equation}
It can be verified that $\bar \vp^* = [0.18, 0.28, 0.42, 0.12]^{\T}$ is the MEUE strategy, which corresponds to $\lambda = 0.12$. In our experiments, once a UE strategy $\vp^* \in \sP^*$ is found, the corresponding $\lambda(\vp^*)$  is computed as follows:
\begin{equation}
    \lambda(\vp^*) = [(0.3 - \evp_1^*) + (0.4 - \evp_2^*) + (\evp_3^* - 0.3) + \evp_4^*] / 4. 
    \label{eq:lambda}
\end{equation}

\textbf{Sioux-Falls.} We refer the readers to \citet{leblanc1975algorithm} for the topology, travel demand, and cost function of the Sioux-Falls network. A highly sophisticated MEUE algorithm developed by \citet{feng2022bush} --- which promises to obtain a solution with close-to-float precision --- is employed to produce the benchmarks.  The MEUE route flow for the Sioux-Falls network found by their algorithm contains  770 routes, with an entropy of 59235.10.

\subsection{Convergence of CULO toward MEUE}
\label{exp:culo}

In Section \ref{sec:exp-1}, we run CULO with randomly generated initial points and examine the distribution of the limiting points.   We then compare the entropy values of initial and limiting points (Section \ref{sec:initial-limiting}). Finally, Section \ref{sec:practical} tests a CULO-based algorithm equipped with route discovery.

\subsubsection{Distribution of CULO's limiting points}
\label{sec:exp-1}
In this experiment, a set of initial points are randomly selected for the 3N4L network to run the CULO model.  Two strategies are employed to generate the initial points. In the first, we sample $\vp^0$ from a uniform distribution and re-scale $\vp^0$ to fit the flow conservation condition. We then choose $\vs^0 = - \log(\vp^0) / r$ such that $\vp^0$ would be reproduced from the route choice function $q_r(\vs^0)$. This strategy guarantees all $\vp^0 \in \sP$ have an equal chance to be selected. Rather than sampling $\vp^0$ directly, the second strategy samples $\vs^0$ from a normal distribution centered at $\vzero$ -- thus, the initial points around $\vs^0 = \vzero$ would have a greater chance to be selected. %
In both cases, the sample size is set to 5000, and the equal-distribution initial point, $\vs^0 = [0, 0, 0, 0]^{\T}, \vp^0 = [1/4, 1/4, 1/4, 1/4]^{\T}$, is employed as a benchmark. For each initial point, we run CULO until convergence and then invoke Equation \eqref{eq:lambda} to obtain the corresponding $\lambda$. 
\begin{figure}[ht]
    \vskip 0.1in
    \centering
    \begin{subfigure}[b]{0.45\textwidth}
        \centering
        \includegraphics[height=0.45\columnwidth]{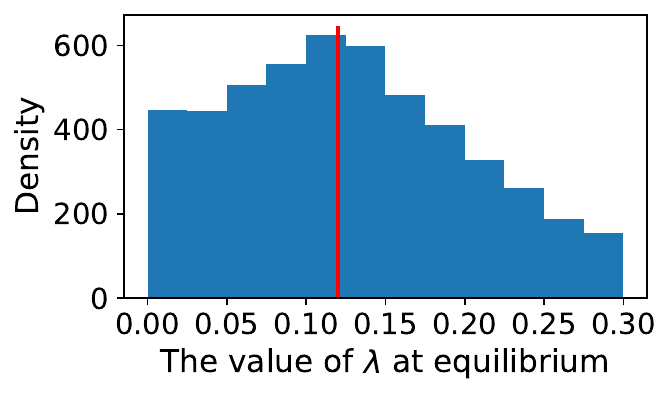}
        \caption{$\vp^0$ sampled from a uniform distribution.}
        \label{fig:i-1}
    \end{subfigure}
    \begin{subfigure}[b]{0.45\textwidth}
        \centering\includegraphics[height=0.45\columnwidth]{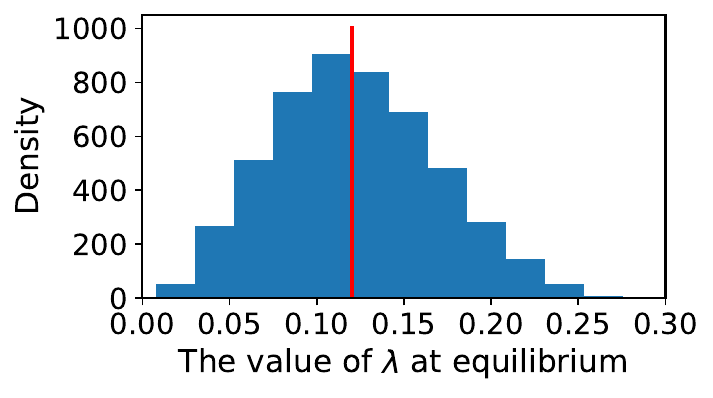}
        \caption{$\vs^0$ sampled from a normal distribution.}
        \label{fig:i-2}
    \end{subfigure}
    \caption{Distribution of $\lambda$ corresponding to UE strategies of the 3N4L network, obtained from 5000 different initial points by CULO. The red line highlights the $\lambda$ value corresponding to the equal-distribution initial point. }
    \label{fig:exp-i}
\end{figure}

  Figure \ref{fig:exp-i} plots, for each initialization strategy, the histogram of $\lambda$ values corresponding to the five thousand UE strategies. As expected, when $\vp^0$ is sampled from a uniform distribution, $\lambda$ spreads over the entire theoretical range ($[0, 0.3]$), whereas it concentrates around the MEUE strategy ($\lambda = 0.12$) when a normal distribution is used to sample $\vs^0$.

Per Proposition \ref{prop:equal},  CULO is guaranteed to reach the MEUE strategy if started from the equal-distribution initial point. Our results confirm that this is indeed the case: the vertical red line in the plots is the solution found by CULO when $\vs^0 = [0, 0, 0, 0]^{\T}$.  A more interesting finding, however, is that the MEUE strategy aligns perfectly with the peak of the histogram in both cases despite the vastly different sampling methods. {
    The result provides an interesting confirmation that the MEUE strategy is indeed the most likely outcome of the routing game, no matter how we choose to initialize it.}

\subsubsection{Relation between initial and limiting entropy}
\label{sec:initial-limiting}

We  proceed to compare $-\phi(\vp^0)$, the entropy at the initial point, with $-\phi(\vp^*)$, the entropy at $\bar \vp^* = \lim_{t \to \infty} \vp^t$. Recall that CULO always guides the initial strategy with the highest entropy (equal-distribution strategy) to the MEUE strategy, which implies the entropy of $\vp^0$ and that of $\bar \vp^*$ may be positively correlated. However, since UE is a more ``orderly" state compared to a non-equilibrium state, we expect the entropy of $\bar \vp^*$ to be lower than that of $\vp^0$. 

To validate our hypotheses, we run experiments in the 3N4L network by initializing $\vs^0$ with two strategies. The first directly generates $\vs^0$ from a normal distribution, rather like the second strategy in Section \ref{sec:exp-1}. The second strategy  first randomly generates $\vv^0$ --- travelers' initial valuation of all available links --- and sets $\vs^0 = \mLambda^{\T} \vv^0$. This way, $\vp^0$ always satisfies the general proportionality condition. For each initialization strategy, the sample size is set as 250. 
\begin{figure}[ht]
    \vskip 0.1in
    \centering
    \begin{subfigure}[b]{0.42\textwidth}
        \centering
        \includegraphics[height=0.5\columnwidth]{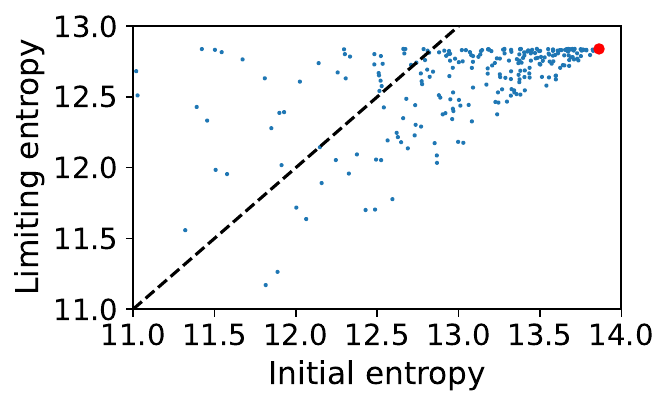}
        \caption{$\vs^0$ randomly generated from a normal distribution.}
        \label{fig:ii-1}
    \end{subfigure}
    \hspace{5pt}
    \begin{subfigure}[b]{0.42\textwidth}
        \centering
        \includegraphics[height=0.5\columnwidth]{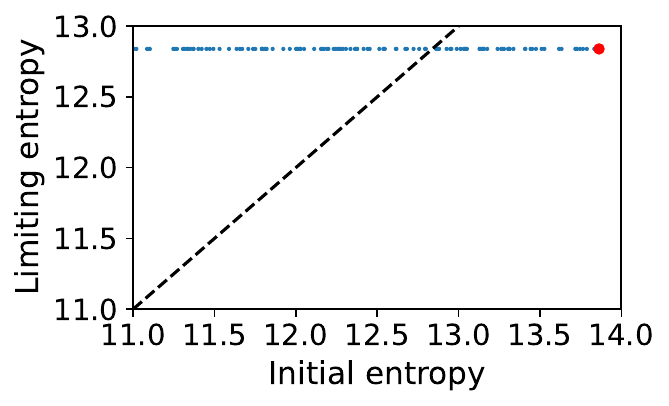}
        \caption{$\vv^0$ randomly generated from a normal distribution.}
        \label{fig:ii-2}
    \end{subfigure}
    \caption{Initial entropy v.s. limiting entropy for 250 samples of the 3N4L network. The red point highlights the pair corresponding to the equal-distribution initial point (i.e., $\vs^0 = \vzero$); the black dashed line is the 45-degree line.}
    \label{fig:exp-ii}
\end{figure}

The scatter plots of all samples --- the coordinates of a point are $(-\phi(\vp^0), - \phi(\vp^*))$ for a given sample --- are reported in Figure \ref{fig:exp-ii}.
First and foremost, the red point is always located at the top right corner in both plots, which validates Proposition \ref{prop:equal}: starting from the maximum-entropy strategy, CULO  converges to the MEUE strategy.
When $\vs^0$ is directly generated from a normal distribution (Figure \ref{fig:ii-1}), there is a clear positive correlation between the limiting entropy and the initial entropy.  Also, most points (about 83.2\%) lie beneath the 45-degree line, indicating that entropy tends to decrease in the equilibrium-finding process. Both observations are well aligned with the expectation from our analysis. 
When $\vs^0$ is obtained from randomly generated $\vv^0$, the limiting entropy of all initial points should reach the maximum possible value, as established in Proposition \ref{prop:main-1}. Figure \ref{fig:ii-2} confirms this theoretical prediction.  Interestingly, the vast majority of the data pairs,  80.4\%,  are now located above the 45-degree line. Thus, in this case, the entropy tends to increase in the equilibrium-finding process. A possible explanation is that the second initialization strategy drew initial solutions disproportionately from the regions associated with lower entropy values. 
We leave an in-depth look into this phenomenon to future studies.

\subsubsection{Route discovery strategies}
\label{sec:practical}
We run Algorithms \ref{alg:route-based} and \ref{alg:link-based} on the Sioux-Falls network to test the performance of different route discovery strategies. Four scenarios, labeled Scenarios (A)-(D), are examined. Scenario (A) is the benchmark, which employs a predetermined route set containing 1238 routes, including all 770 UE routes found using the aforementioned algorithm \citep{feng2022bush}. In this scenario, no route exploration is needed, and the standard CULO algorithm is executed. In the other three scenarios, the route set is initially populated with the shortest route for each O-D pair (with the link cost set to zero).  Scenario (B) tests Algorithm \ref{alg:route-based}, in which the valuation of a new route is initialized using Equation \eqref{eq:new-val}.  Scenarios(C) and (D) both test Algorithm \ref{alg:link-based}. The difference is that Scenario (D) enhances the exploration by adding random noise to link costs (as described in Equation \eqref{eq:noise}). In the implementation, we also gradually reduce the variance of the error term $\epsilon_t$ at a rate of $\mathcal{O}(1/t)$. We stop adding noises into link costs when no new routes are found in a sufficiently long time,

\begin{figure}[ht]
\vspace{6pt}
\centering
\includegraphics[width=0.8\textwidth]{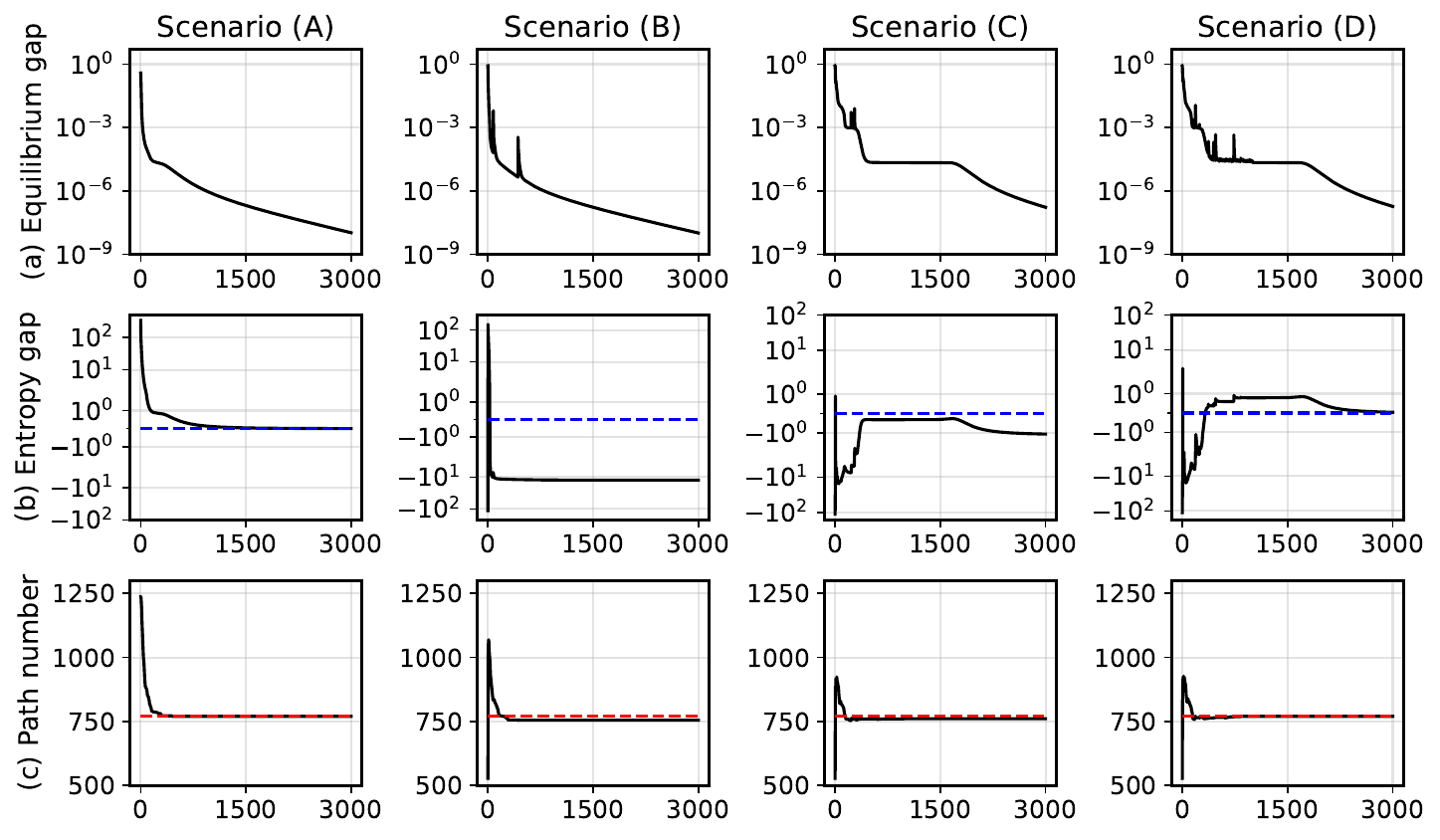}
\caption{Convergence patterns of CULO for the Sioux-Falls network in four scenarios. Scenario (A): CULO with predetermined routes. Scenario (B): Algorithm \ref{alg:route-based}. Scenario (C): Algorithm \ref{alg:link-based}. Scenario (D): Algorithm \ref{alg:link-based} with exploration noises.
In each column, plot (a) reports the relative equilibrium gap, (b) reports the difference in entropy between the CULO solution and the baseline solution,  normalized by the number of OD pairs and plotted in symlog scale (where the blue dashed line corresponds to a gap of zero); and  (c) reports the number of routes actively used by travelers (where the red dashed line corresponds to the number of routes contained in the benchmark solution). }
\label{fig:experiment}

\vspace{6pt}
\end{figure}
Figure \ref{fig:experiment} compares the convergence patterns of the CULO dynamical process in the four scenarios. As anticipated by our analysis results, CULO converges smoothly to the MEUE strategy in Scenario (A) in terms of both the entropy value and the UE route set. Compared to specialized traffic assignment algorithms such as TAPAS \citep{bar2010traffic} and bush-based algorithms \citep{nie2010class}, its convergence is relatively slow:  the relative gap remains above $10^{-9}$ after 3000 days (more than eight years).  However, to reach a relative gap of about $10^{-5}$, CULO only requires about 1--2 months. 

Neither Scenario  (B) nor (C) is able to converge to the MEUE strategy.  In both cases, the route exploration process ended up missing a small number of UE routes and, as a result, produced solutions with entropy values markedly lower than the benchmark.  It is worth noting that they had no problem converging to a UE strategy, although their convergence path is not as smooth as in Scenario (A).   With the help of exploration noises, Scenario (D) successfully discovered all routes contained in the benchmark solution and obtained a high-quality approximation to the MEUE strategy. However, the ``randomized" route discovery process slowed down convergence, a price one has to pay in order to increase the likelihood of identifying all UE routes. Also, while the strategy succeeded in finding all UE routes for this problem, there is no guarantee it will for other problems.

\subsection{Comparison with other dynamical models}
\label{sec:comparision}

In this section, we numerically investigate the properties of the four DTD models discussed in Section \ref{sec:other} (best-response, projection, replicator, and Smith) and compare them with CULO.  We begin with the 3N4L network (Sections \ref{sec:exp-c-1}) and turn to the Sioux-Fall network in Section \ref{sec:exp-c-3}.

\subsubsection{3N4L network}
\label{sec:exp-c-1}

Our focus is on the effect of the step size on the limiting point of each model.  Based on trial and error, we set the range of the step size $\eta$ in our experiments as follows:
\begin{itemize}
    \item CULO: Set $r = 1$, fix $\eta^t$ as a constant $\eta$ in Equation \eqref{eq:c_plus}, and test $\eta = 0.05, 0.10, \ldots, 1$.
    \item Best-response: Set $\eta^t = \eta / (1 + t)$ in Equation \eqref{eq:best-discrete} and test $\eta = 0.05, 0.10, \ldots, 0.95$.
    \item Projection: Set $\eta = 0.02, 0.04, \ldots, 0.2$ in Equation \eqref{eq:projection}.
    \item Smith: Set $\eta = 0.005, 0.0010, \ldots, 0.13$ in Equation \eqref{eq:discrete-time}.
    \item Replicator:  Set $\eta = 0.02, 0.04, \ldots, 0.4$ in Equation \eqref{eq:discrete-replicator}.
\end{itemize}
Thus, for all models listed above, their performance is dictated by $\eta$.  In all runs, the initial point is fixed as $\vp^0 = [0.25, 0.25, 0.25, 0.25]^{\T}$. We terminate  CULO, Smith, and replicator when the equilibrium gap reaches $\delta = 10^{-10}$. For best-response and projection, the convergence criterion is relaxed to $\delta = 10^{-5}$ because aiming for a higher precision would be too time-consuming for these two dynamics.  Figure \ref{fig:exp-iii} reports the results, including (a) the value of $\lambda$ corresponding to the UE strategy reached by the model, calculated based on Equation \eqref{eq:lambda} (the top plot) and (b) the number of iterations required to achieve a satisfactory convergence (the bottom plot).

\begin{figure}[ht]
\vspace{6pt}
\centering
\includegraphics[width=0.9\textwidth]{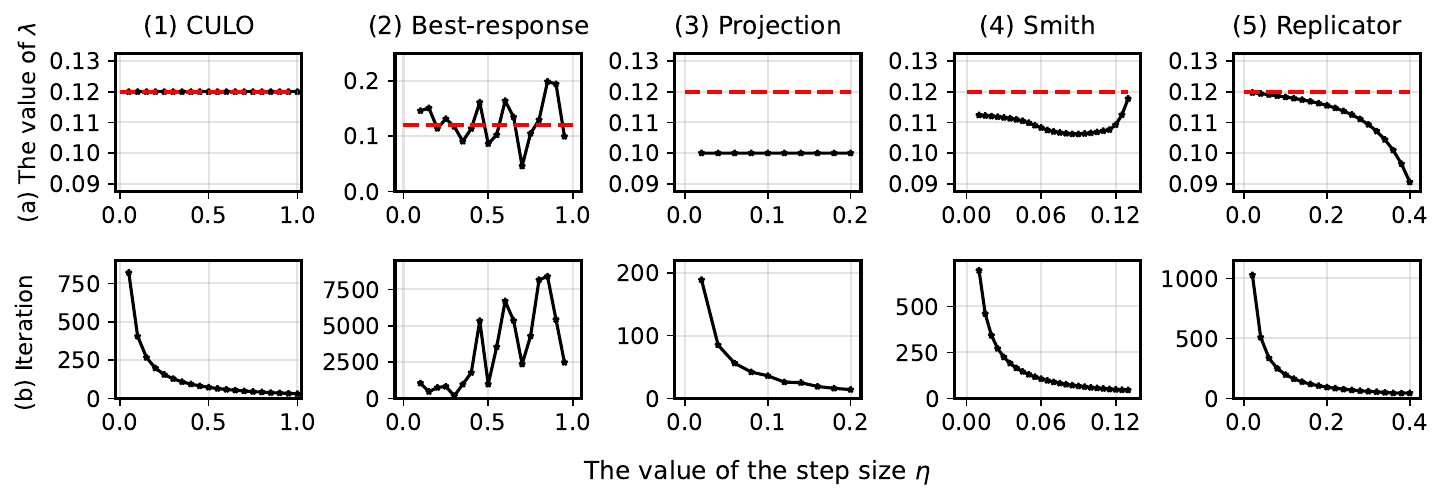}
\caption{The relationship between the limiting points of different models with respect to their step sizes. In each column, plot (a) reports the value of $\lambda$ corresponding to the UE strategy reached by the model (the red dashed line highlights the corresponding value of the MEUE); plot (b) reports the number of iterations required for convergence.}
\label{fig:exp-iii}

\vspace{6pt}
\end{figure}

First, with the exception of the best-response dynamic, a larger $\eta$ always accelerates convergence in the tested range. For the best-response dynamic, the opposite is true: as $\eta$ increases, the number of iterations required for convergence generally trends up, though the relationship is not monotonic. 
CULO, as guaranteed by Corollary \ref{cor:1}, always reaches the MEUE strategy (with $\lambda$ equal to 0.12) regardless of the value of $\eta$. The projection dynamic is the only other model whose limiting point is not affected by $\eta$, hinting compliance with TS.  Upon close examination, we also confirmed that its limiting point is indeed the Euclidean projection of the initial point onto the equilibrium set.  The other three models fail to meet TS, as their limiting point all changes with $\eta$.  The limiting point of the best-response dynamic oscillates abruptly around the MEUE strategy. For the Smith dynamic and the replicator dynamics, their limiting point seems to always stay on one side of the MEUE strategy (i.e., $\lambda \le 0.12$) and varies much more smoothly with $\eta$. The result also appears to confirm our conjecture that the replicator dynamic tends to converge to the MEUE strategy when $\eta \to 0$.

Could the replicator dynamic be used as an  MEUE problem solver? The answer is probably yes if one is willing to tolerate the slow convergence associated with the use of a very small step size. When $\eta = 0.02$, the replicator dynamic finds a high-quality MEUE approximation after more than 1000 iterations. For $\eta = 0.4$, the convergence takes only 44 iterations, but the limiting point drifts far away from the MEUE strategy.  CULO does not face this dilemma, thanks to the theoretical guarantee.  When $\eta = 1$, it converges in 30 iterations, and the limiting point is still the MEUE strategy.

To recapitulate, our numerical results show (i) all models satisfy GS with a properly selected step size; (ii) no model other than CULO and the projection dynamic may satisfy TS; and (iii) no model other than CULO and the replicator dynamic may satisfy PC. Here, we note that other models sometimes produce solutions close to MEUE, but we tend to believe these occurrences as coincidental rather than a consistent pattern. We next turn to these models' adherence to RC, for which we need to use the Sioux-Falls network.

\subsubsection{Sioux-Falls network}
\label{sec:exp-c-3}

In the experiment, we run the models from an equal-distribution initial strategy using all 770 UE routes and check their convergence patterns, particularly whether any of the routes will be eliminated when a UE strategy is reached.  A route is considered ``eliminated" (i.e., not used by anyone) if the proportion of the travelers selecting it is less than $\tau$. We test two values of $\tau$: $10^{-4}$ and $10^{-6}$. The step size for each model is appropriately tuned such that the relative gap gradually converges to zero as fast as possible. We set the convergence criterion  $\delta = 10^{-6}$ in this experiment. The results are reported in Figure \ref{fig:exp-sf}, including the detailed convergence pattern for (a) the relative gap $\delta(\vp^t)$; (b) the entropy $\phi(\vp^t)$; (c) the number of used routes, i.e., the size of the set $\{k: \evp_k^t > \tau\}$; (d) the violation of the first- and second-order proportionality condition, measured by $\langle \ve_{i}, \log(\vp^t) \rangle$ ($i = 1, 2$), where $\ve_1$ and $\ve_2$ are the first and second basis of $\ker{(\mLambda)} \cup \ker{(\mSigma)}$).

\begin{figure}[ht]
\vspace{6pt}
\centering
\includegraphics[width=0.95\textwidth]{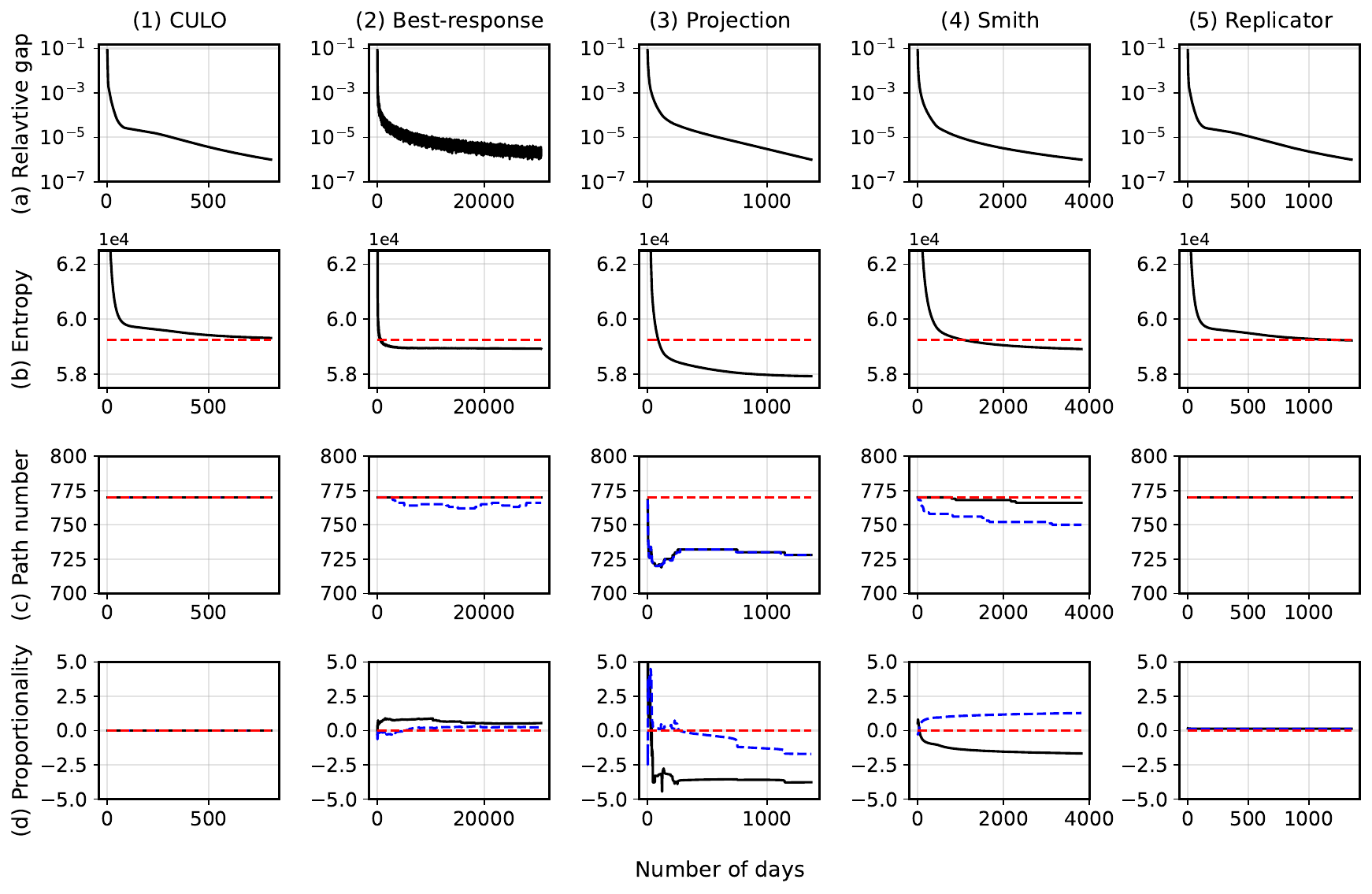}
\caption{Convergence patterns of the five models. In each column, plot (a) reports the relative gap; plot (b) reports entropy; plot (c) reports the number of used routes (the blue dashed line and the black solid line correspond to the number of routes used by more than $10^{-2}$ and $10^{-4}$ percent of the travelers, respectively); plot (d) reports the violation of the first- and second-order proportionality condition (the black solid line and the blue dashed line correspond to the values of $\langle \ve_{1}, \log(\vp^t) \rangle$ and $\langle \ve_{2}, \log(\vp^t) \rangle$ respectively.)}
\label{fig:exp-sf}

\vspace{6pt}
\end{figure}

Plot (a) in Figure \ref{fig:exp-sf} concerns global stability (GS). It confirms all models satisfy GS, i.e., they converge to a satisfactory UE solution. To reach the convergence threshold, CULO requires the least number of iterations (about 800), followed by the projection dynamic and the replicator dynamic, both taking roughly twice as many iterations to converge as needed by CULO.  The slowest is the best-response dynamic, which needs 30,000 iterations to reach $10^{-6}$, at least an order of magnitude slower than any other models.  This is hardly surprising if one recalls that the Frank-Wolfe algorithm --- notorious for its painfully slow convergence due to zigzagging behavior --- is, in fact, a variant of the best-response dynamic.  The Smith dynamic fares far better than the best-response dynamic but falls behind the other three.

Plot (c) examines Route Conservation (RC).  As seen from plot (c), both CULO and the replicator dynamic left no UE routes behind (all 770 routes are still used when equilibrium is reached), for both tolerance values ($\delta$).  The best-response dynamic kept all 770 routes when $\delta$ is  $10^{-6}$, but left a few out if $\delta = 10^{-4}$.  The Smith dynamic kept only 750 routes even with the looser tolerance standard ($\delta =10^{-4}$), but the projection dynamic is the worst in this regard: it eliminated almost 40 routes from the UE set. To be sure,  it is possible that a route considered eliminated even by the more stringent standard may still be a used route, albeit by an extremely small minority of travelers. However,  it is fair to conclude that these two dynamics are much less likely to satisfy RC than the other three.

Plots (b) and (d) deal with Proportionality Conservation (PC). From (d), we can see that CULO perfectly conformed to PC as predicted by the theory.  The projection dynamic and the Smith dynamic failed to conserve proportionality, as they both severely violated the first- and second-order proportionality conditions. Of the two, the projection dynamic performed worse.  The best-response dynamic outperformed the above two, although its deviation from the two proportionality conditions is still substantial.   The solution obtained by the replicator dynamic does not exactly satisfy the two proportionality conditions, but the violations are barely detectable from the plot. This behavior is expected, given the continuous version of the replicator dynamic is closely related to CULO.
From plot (b), we observe that both CULO and the replicator dynamic are capable of approaching the benchmark entropy value (the precise entropy value associated with the MEUE strategy).  All other three models achieve an entropy value markedly lower than the benchmark: the worst is the projection dynamic, followed by the best response and the Smith dynamic.

\subsubsection{Summary}

From what we saw in this section, it is safe to conclude that none of the four models discussed in Section \ref{sec:other} satisfies all of the four properties, even though they are globally stable (GS) under the assumptions adopted in this study. Specifically, the evidence strongly suggests that the best-response dynamic violates TS and PC, the projection dynamic violates RC and PC, the replicator dynamic violates TS, and the Smith dynamic violates all three.  

Two dynamics are worth a final remark.  First, like CULO, the replicator dynamic can be used to solve the MEUE problem approximately. However, the quality of the approximation degrades as the step size increases. This is a computational disadvantage because small step sizes lead to slow convergence. CULO does not suffer from this disability thanks to a superior convergence guarantee. Second, it is somewhat surprising to see the best-response dynamic, despite the poor convergence performance, can obtain a solution more closely resembling the MEUE strategy than the projection and the Smith dynamics. This empirical finding appears to confirm the conjecture put forth by \citet{florian2014uniqueness}, who argued the Frank-Wolfe algorithm can yield UE solutions that approximately obey the condition of proportionality.

\section{Conclusions}
\label{sec:conclusion}

The lack of a unique user equilibrium (UE) route flow in traffic assignment has posed a significant challenge to many transportation applications. A common remedy to this long-standing problem is the maximum entropy principle, which advocates consistently choosing the most likely UE route flow as the representative of the countless candidates.  This study provided a new behavioral underpinning for this principle. Our theory is built on a recently proposed day-to-day (DTD) dynamical model called cumulative logit, or CULO, which can reach a UE state without presuming travelers are perfectly rational.  We proved that CULO always selects (or converges to) the maximum entropy UE (MEUE) route flow given a proper initial condition. We further identified two such conditions: (i) travelers have zero prior information about routes and thus are forced to give all routes an equal choice probability, and (ii) all travelers gather information from the same source such that the so-called general proportionality condition is satisfied. Thus, the MEUE route flow may result from a routing game in which boundedly rational travelers continuously learn about and refine their valuation of the routes and adjust their routing strategy accordingly. 
The revelation suggests that CULO may be used as a solution algorithm for the MEUE route flow problem. To operationalize this idea, we proposed to bypass the route enumeration required in the original CULO model through an iterative route discovery scheme. We devised two schemes. The first guarantees convergence to UE but not MEUE. The second strives not to miss any UE route, a prerequisite for maximizing entropy. Though no theoretical assurance was provided, initial numerical results confirmed the effectiveness of the heuristic.

Having demonstrated the capability of CULO in solving the MEUE problems, we turned to address a natural question: do the other DTD models known to converge to a UE solution have a similar capability?  To answer this question, we first established the four properties underlying CULO's success, namely (i) global stability (GS), (ii) trajectory stability (TS), (iii) route conservation (RC), and (iv) proportionality conversation (PC). 
Of the four popular DTD models we examined, the replicator dynamic is the only one that has the potential to attain the MEUE solution with some regularity. However, the replicator dynamic satisfies PC  approximately only when it is discretized with a very small step size, which tends to slow the overall convergence.  The convergence of the best-response dynamic is the slowest and most disorderly, but it seems to adhere to the MEUE solution better than the projection dynamic and the Smith dynamic.

There are a few directions that future research can pursue. First, the current MEUE affirmation conditions are established for the standard routing game.  It would be useful to extend them to more general games, such as those with heterogeneous users and non-separable cost functions. To the best of our knowledge, few had considered the MEUE problem in these general routing games, and unlike the standard game, no specialized MEUE algorithm has ever been developed. Due to its simplicity and flexibility,  CULO can easily fill this gap if the results given by this paper can be generalized.  Another interesting question is whether we can design a route discovery scheme that can find all UE routes.  It is possible that Algorithm \ref{alg:link-based} already possesses this capability if we set the noise term properly and simply let the process run indefinitely.  Either way, a more rigorous theoretical investigation is warranted. 
{Our analysis of the continuous dynamical models left many questions unanswered.}  To name a few: why does the projection dynamic appear to satisfy TS? Can the limiting point of the discrete version of the replicator dynamic always make a close approximation of MEUE? If so, under what conditions? How do we explain the vastly different behavior between the Smith dynamic and the replicator dynamic, given they resemble each other so strikingly? 
{Finally, MEUE bears intriguing similarities with some network design problems, especially the entropy-based estimation of origin-destination (O-D) matrix \citep[e.g.,][]{van1980most}, in that they all involve selecting an equilibrium to optimize an entropy function.  By this analogy, the initial state in our model plays the role of the prior (or historical) matrix in O-D estimation. A future study may exploit this connection for the purpose of solving certain network design problems through a DTD dynamical process.
}

\section*{Acknowledgements}

This research is funded by the US National Science Foundation's Civil Infrastructure System (CIS) Program under the award  CMMI \#2225087.  The authors are grateful for the valuable comments offered by Prof. Yafeng Yin at the University of Michigan, Ann Arbor, and Prof. Zhaoran Wang at Northwestern University. The remaining errors are our own.

\bibliographystyle{ims}
\begin{small}
\bibliography{example_paper}
\end{small}

\newpage
\begin{appendices}

\section{Explanation of the entropy function}
\label{app:entropy}
{
In Section 2.1, we have defined the negative entropy of a route choice strategy $\vp \in \sP$ as
    \begin{equation*}
        \phi(\vp) = \langle \diag(\vq) \vp, \log(\vp)\rangle.
    \end{equation*}
    To explain how this measures ``the number of states" (i.e., the different ways travelers can be arranged to produce the route flow corresponding to $\vp$), suppose that the flow carried by a single traveler is $\tau > 0$ ($\tau$ is a small constant). Hence, the number of travelers traveling between each OD pair $w \in \sW$ and selecting each route $k \in \sK_w$ would be $n_w = q_w / \tau$ and $m_k = f_k / \tau$, respectively. 
    Applying the basic counting principle, the total number of states, after taking the logarithm, reads
    \begin{align*}
        &\log\big(\prod_{w \in \sW} \frac{n_w!}{\prod_{k \in \sK_w} m_k!}\big) = \sum_{w \in \sW} \big(\log(n_w!) - \sum_{k \in \sK_w} \log(m_k!)\big) \\
        &\qquad =\sum_{w \in \sW} \big(n_w \log(n_w) - n_w + \mathcal{O}(\log(n_w)) - \sum_{k \in \sK_w} m_k \log(m_k) - m_k + \mathcal{O}(\log(m_k))\big), \\
        &\qquad  = \sum_{w \in \sW} \big(n_w \log(n_w) + \mathcal{O}(\log(n_w)) - \sum_{k \in \sK_w} m_k \log(m_k) + \mathcal{O}(\log(m_k))\big),
    \end{align*}
    where Stirling's formula gives the second equality, while the relation $n_w  = \sum_{k \in \sK_w} m_k$ gives the third one. When $\tau$ is sufficiently small (as close to the nonatomic setting), we would have $n_w \to \infty$ and $m_k \to \infty$. Hence, the term $\mathcal{O}(\log(n_w))$ and $\mathcal{O}(\log(m_k))$ would become eligible relative to $n_w \log(n_w)$ and $m_k \log(m_k)$. Further noting that
    \begin{align*}
        \sum_{w \in \sW} \big( n_w \log(n_w) - \sum_{k \in \sK_w} m_k \log(m_k) \big) &= \sum_{w \in \sW}  \sum_{k \in \sK_w} m_k  \big(\log(n_w) - \log(m_k) \big) =- \frac{1}{\tau} \cdot \sum_{w \in \sW}  q_w \sum_{k \in \sK_w} p_k \log(p_k),
    \end{align*}
    we have
    \begin{align*}
        \log\big(\prod_{w \in \sW} \frac{n_w!}{\prod_{k \in \sK_w} m_k!}\big) \approx -\frac{1}{\tau} \cdot \sum_{w \in \sW}  q_w \sum_{k \in \sK_w} p_k \log(p_k) = -\frac{1}{\tau} \cdot \langle \diag(\vq) \vp, \log(\vp) \rangle,
    \end{align*}
    when $\tau$ is sufficiently small.  Dropping $\tau$ (which is a constant) then gives rise to the entropy function for evaluating the likelihood of $\vp$ in our setting. 

}
\section{Proofs in Section \ref{sec:main}}

\subsection{Proof of Lemma \ref{lm:proportionality}}
\label{app:proportionality}
\begin{proof}
    We first note that for all $\vs^t \in \sR^{|\sK|}$, the corresponding logit choice can be written in the vector form as $q_r(\vs^t) = \vy^t / \mSigma^{\T} \mSigma \vy^t$, where $\vy^t = \exp(-r \cdot \vs^t)$.  As each column of $\mSigma$ is a standard unit vector, we have $\log(\mSigma^{\T} \mSigma \vy^t) = \mSigma^{\T} \log(\mSigma \vy^t)$. Denoting $\vx^t = \bar \mLambda \vp^t$ for all $t \geq 0$, we can show
    \begin{equation}
    \begin{split}
        \langle \ve, \log(\vp^t) \rangle &= \langle \ve, \log(\vy^t) - \log(\mSigma^{\T} \mSigma \vy^t) \rangle = \langle \ve, -r \cdot \vs^t - \mSigma^{\T} \log(\mSigma \vy^t) \rangle \\
        &= \langle \ve, -r \cdot \vs^t \rangle 
        = -r \cdot \sum_{i = 0}^{t - 1} \eta^i \cdot \langle \ve, \mLambda^{\T} \vx^i\rangle + \langle \ve, -r \cdot \vs^0 \rangle = -r \cdot \langle \ve, \vs^0 \rangle.
    \end{split}
    \end{equation}
    The first and the second equalities hold due to the earlier discussions, the third and fifth equalities hold because $\ve \in \ker{(\mSigma)}$ and $\ve \in \ker{(\mLambda)}$, respectively, and the fourth equality is obtained by applying Equation \eqref{eq:culo}.
\end{proof}

\subsection{Proof of Lemma \ref{lm:kl-condition}}
\label{app:kl-condition}

\begin{proof}
    Per Proposition \ref{prop:solution-set-2}, the KL projection problem \eqref{eq:main} can be  written as
    \begin{equation}
        \begin{split}
            \min_{\vp^* \geq 0}~~& \langle \diag(\vq) \vp^*, \log(\vp^*) - \log(\vp^0) \rangle.\\
            \text{s.t.}~~&\mSigma \diag(\vq) \vp^* = \vd, \quad \mLambda \diag(\vq)\vp^* = \vx^*,
            \label{eq:KL-minimization}
        \end{split}
    \end{equation}
    which is evidently a convex program. Hence, any $\bar \vp^* \in \sP^*$ solves \eqref{eq:KL-minimization} if and only if there exist $\valpha \in \sR^{|\sK|}$ and $\vbeta \in \sR^{|\sA|}$ such that 
    \begin{equation}
        \begin{cases}
        \bar \vp^* \geq 0, \quad \log(\bar \vp^*) - \log(\vp^0) - \mSigma^{\T} \valpha - \mLambda^{\T} \vbeta  \geq 0, \\
        \langle \bar \vp^*, \log(\bar \vp^*) - \log(\vp^0) - \mSigma^{\T} \valpha - \mLambda^{\T} \vbeta \rangle = 0.
        \end{cases}
        \label{eq:kkt}
    \end{equation}
    If $\langle \ve, \log(\bar \vp^*) - \log(\vp^0) \rangle = 0    $
    for all $\ve \in \ker{(\mSigma)} \cap \ker{(\mLambda)}$, then  $\log(\bar \vp^*) - \log(\vp^0) \in \im(\mSigma^{\T}) \cup \im(\mLambda^{\T})$\footnote{For any matrix $\mA \in \sR^{m \times n}$, we have $\ker(\mA)^{\perp} = \im(\mA^{\T})$, i.e., the perpendicular complement of $\ker(\mA)$ is $\im(\mA^{\T})$.}, which means one can always find $\valpha \in \sR^{|\sK|}$ and $\vbeta \in \sR^{|\sA|}$ such that 
    \begin{align*}
        \log(\bar \vp^*) - \log(\vp^0) =   \mSigma^{\T} \valpha + \mLambda^{\T} \vbeta. \end{align*}
Thus, Condition \eqref{eq:kkt} must be satisfied.  
\end{proof}

\subsection{Proof of Theorem \ref{thm:main}}
\label{app:main}

\begin{proof}
    Per Lemma \ref{lm:proportionality}, for all $\ve \in \ker{(\mSigma)} \cap \ker{(\mLambda)}$, the value of $\langle \ve, \log(\vp^t) \rangle$ is the same for all $t \geq 0$ given $\vp^0$. This observation leads to
    \begin{equation}
        \langle \ve, \log(\vp^t) - \log(\vp^0) \rangle = 0.
        \label{eq:thm-1}
    \end{equation}
    Noting that the function $\langle \ve, \log(\vp) - \log(\vp^0)\rangle$ is continuous in $\vp$, we obtain $\langle \ve, \log(\bar \vp^*) - \log(\vp^0) \rangle = 0$ by letting $t \to \infty$ in Equation \eqref{eq:thm-1}.  This implies $\bar \vp^*$ is the KL projection of $\vp^0$ on $\sP^*$ per Lemma \ref{lm:kl-condition}.
\end{proof}

\subsection{Proof of Corollary \ref{cor:no-left}}
\label{app:no-left}

\begin{proof}
    We first note that $\supp(\bar \vp^*) \subseteq \sK^*$ because $\bar \vp^* \in \sP^*$. Suppose $\bar \vp^*$ is the limiting point of the CULO model but there exists $k \in \sK^*$ such that $k\notin \supp(\bar \vp^*)$, then we must have $\bar \evp_k^* = 0$.  Construct a function $l(\epsilon) = D((1 - \epsilon) \cdot \bar \vp^* + \epsilon \cdot \vp^0, \vp^0)$, where $D(\cdot,\cdot)$ is the KL divergence defined in \eqref{eq:kl-def}. The reader can verify that the derivative of $l(\cdot)$ at $\epsilon=0$ is $-\infty$.  This means moving from $\bar \vp^*$ toward $\vp^0$ can reduce the KL divergence. Hence, $\bar \vp^*$ cannot be the solution to the KL projection problem \eqref{eq:main}, or the limiting point of the CULO model, a contradiction. 
\end{proof}

\subsection{Proof of Proposition \ref{prop:main-1}}
\label{app:main-1}

\begin{proof}
    If $\vs^0 = \mLambda^{\T} \vv^0$, then for any $\ve \in \ker{(\mSigma)} \cap \ker{(\mLambda)}$, we have $\langle \ve, \vs^0 \rangle = \langle \ve, \mLambda^{\T} \vv^0 \rangle = \langle \mLambda \ve, \vv^0 \rangle = 0$. Thus, Lemma \ref{lm:proportionality}  guarantees $\langle \ve, \log(\vp^t) \rangle = -r \cdot  \langle \ve, \vs^0 \rangle =0$ for any $\ve$ and $t \geq 0$.  As the function $\langle \ve, \log(\vp)\rangle$ is continuous in $\vp$, letting $t \to \infty$  leads to $\langle \ve, \log(\bar \vp^*)\rangle = 0$ for any $\ve$, which implies $\bar \vp^*$ is the MEUE strategy according to Proposition \ref{prop:proportionality}.
\end{proof}

\section{Proofs in Section \ref{sec:generation}}

\subsection{Proof of Proposition \ref{prop:alg-1}}
\label{app:alg-1}

\begin{proof}
    First, as there are a finite number of acyclic paths in a network, the discovery process must stop adding new routes after finite days (note that cyclic paths can never be a shortest route as long as the link cost is strictly positive). That is, there must exist $T_1 < \infty$ and $\overline \sK_+ \subseteq \sK$ such that $\sK_+^t = \overline \sK_+$ for all $t \geq T_1$. Starting from $t = T_1$, Algorithm \ref{alg:route-based} reduces to the original CULO model without route exploration, applied to solving a ``reduced" routing game in which only routes in $\overline \sK_+$ are available. Denote the route-link incidence matrix corresponding to $\overline \sK_+$ as $\overline \mLambda_+^t$ and define  $c_+: \overline \sP_+ \to \sR^{|\overline \sK_+|}$ as a map that satisfies
\begin{equation}
    c_+(\vp_+) =  \overline \mLambda_+^{\T} u(\vx), \quad \text{where}~\vx = \overline \mLambda_+ \diag(\overline \mSigma_+^{\T} \vd) \vp_+.
    \label{eq:c_plus}
\end{equation}
By Proposition \ref{thm:convergence-ue}, as long as $r < 1/2\overline L$ for some $L \geq \max_{\vp_+ \in \overline \sP_+} \|\nabla c_+(\vp_+)\|_2$, the route choice strategy $\vp_+^t$ must converge to a fixed point $\overline{\vp}_+ \in \overline \sP_+$, which is a UE of the reduced routing game.

We then claim $\overline \vp = [\overline \vp_+; \vzero] \in \sP^*$. To simplify the proof, let us assume $|\sW| = 1$ without loss of generality; otherwise, we can simply pick one $w \in \sW$ to raise the following conflict. Suppose that $\overline \vp \notin \sP^*$, then given any $k_0 \in \argmin_{k \in \sK} c_k(\overline \vp)$, we must have $c_0 := c_k(\overline \vp) < c_{\min} := \min_{k \in \overline \sK_+} c_k(\overline \vp)$. Hence, $k_0 \in \sK \setminus \overline \sK_+$, i.e., there  exist some $k_0 \in \sK \setminus \overline \sK_+$ strictly better than any routes in $\overline \sK_+$.
For notational simplicity, let us define $\vp^t = [\vp_+^t; \vzero]$ for all $t \geq T_1$.  Since $\vp_+^t \to \overline{\vp}_+ \Rightarrow \vp^t \to \overline \vp \Rightarrow c(\vp^t) \to c(\overline \vp)~\text{and}~\min_{k \in \overline \sK_+} c(\vp^t) \to c_{\min}$, there must exist $T_2 > T_1$ such that whenever $t \geq T_2$, we have
\begin{equation}
    c_{k_0}(\vp^t) < c_0 + \delta / 3 \quad \text{and} \quad \min_{k \in \overline \sK_+} c_{k}(\vp^t) > c_{\min} - \delta / 3,
    \label{eq:conflict}
\end{equation}
where $\delta = c_{\min} - c_0$. This means on day $T_2$,  route $k_0$ is strictly better than all routes in $\overline{\sK}_+$. Hence, the route discovery process has not stabilized at $T_2$, which contradicts with the assumption that  $\sK_+^t$ remains unchanged after $t \geq T_1$.
\end{proof}

\subsection{Proof of Proposition \ref{prop:alg-2}}
\label{app:alg-2}
\begin{proof}
    Let $\sK_+^t = \overline \sK_+$ for all $t \geq T_1$ for some $T_1 > 0$. Starting from $T_1$, Algorithm \ref{alg:link-based} is reduced to the original CULO model applied to solving a routing game defined on  $\overline \sK_+$. By viewing $t = T_1$ as an initial point of Algorithm \ref{alg:link-based}, Condition (B) given in Proposition \ref{prop:main-1} is  satisfied. Hence,  $\overline \vp_+ = \lim_{t \to \infty} \vp_+^t$ must be the MEUE strategy of the reduced problem. 
    
    We proceed to prove $\overline \vp = [\overline \vp_+; \vzero]$ is the MEUE strategy of the original routing game as long as $\overline \sK_+ \supseteq \cup_{\vp^*} \supp{(\vp^*)}$. We first define $\phi_+: \overline \sP_+ \to \sR$ as the negative entropy function of the reduced routing game, which reads
    \begin{equation}
        \phi_+(\vp_+) = \langle \diag(\overline \vq_+) \vp_+, \log(\vp_+)\rangle, \quad \text{where}~\overline \vq_+ = (\overline \mSigma_+)^{\T} \vd.
    \end{equation}
    Denoting $\overline \sP_+^* \subseteq \overline \sP_+$ as the set of UE strategies for the reduced routing game, we  claim 
    \begin{equation}
        \min_{\vp_+^* \in \overline \sP_+^*} \phi_+(\vp_+^*) \leq \min_{\vp^* \in \sP^*} \phi(\vp^*).
        \label{eq:lower}
    \end{equation}
    To see this, consider a map $h: \sP^* \to \overline \sP_+$ such that $h(\vp^*) = (\evp_k^*)_{k \in \overline \sK_+}$, i.e., it ``cuts off" all elements in $\sK \setminus \overline \sK_+$. As $\cup_{\vp^*} \supp{(\vp^*)} \subseteq \overline \sK_+$, the elements dropped by the map $h$ must all be zero. Hence, we conclude that for all $\vp^* \in \sP^*$, (1) $h(\vp^*) \in \overline \sP_+^*$; (2) $\phi_+(h(\vp^*)) = \phi(\vp^*)$. Combining both, Equation \eqref{eq:lower} must hold.  
    Recalling that $\phi_+(\overline \vp_+) =  \min_{\vp_+^* \in \overline \sP_+^*} \phi_+(\vp_+^*)$, we derive $\phi_+(\overline \vp_+) \leq \min_{\vp^* \in \sP^*} \phi(\vp^*)$. Finally,  as $\overline \vp = [\overline \vp_+; \vzero] \in \sP^*$ and $\phi_+(\overline \vp_+) = \phi(\overline \vp)$,  we have $\phi(\overline \vp) = \min_{\vp^* \in \sP^*} \phi(\vp^*)$, which means $\overline \vp$ is the MEUE of the original routing game. 
\end{proof}

\end{appendices}

\end{document}